\documentclass[11pt,a4paper]{article}

\usepackage[a4paper,margin=1in]{geometry}
\usepackage{microtype}

\usepackage{amsmath,amssymb,amsthm,mathtools}
\usepackage{bm}
\usepackage{bbm}

\usepackage{booktabs}
\usepackage{graphicx}

\usepackage[dvipsnames]{xcolor}
\usepackage[colorlinks=true,linkcolor=blue!50!black,citecolor=blue!50!black,urlcolor=blue!60!black]{hyperref}
\usepackage[capitalize,noabbrev]{cleveref}
\crefname{equation}{Eq.}{Eqs.}
\Crefname{equation}{Eq.}{Eqs.}

\theoremstyle{plain}
\newtheorem{theorem}{Theorem}[section]
\newtheorem{lemma}[theorem]{Lemma}
\newtheorem{proposition}[theorem]{Proposition}
\newtheorem{corollary}[theorem]{Corollary}

\theoremstyle{definition}
\newtheorem{definition}[theorem]{Definition}

\theoremstyle{remark}
\newtheorem{remark}[theorem]{Remark}

\newcommand{\R}{\mathbb{R}}
\newcommand{\Z}{\mathbb{Z}}
\newcommand{\C}{\mathbb{C}}

\newcommand{\Lat}{\mathcal{L}}
\newcommand{\Latd}{\mathcal{L}^{\perp}}
\newcommand{\Aone}{\mathbf{A}}
\newcommand{\Bone}{\mathbf{B}}
\newcommand{\Athree}{\mathbf{A}_{3}}
\newcommand{\Bthree}{\mathbf{B}_{3}}
\newcommand{\MacOp}{\mathcal{M}}
\newcommand{\MacOpD}{\MacOp^{\dagger}}
\newcommand{\MacOpF}{\mathcal{M}_{\mathrm{F}}}
\newcommand{\condfour}{\mathcal{T}_{\codedist,R}}
\newcommand{\ftilde}{\widetilde{f}}
\newcommand{\sympl}{\omega}
\newcommand{\Code}{K}
\newcommand{\codedist}{d}
\newcommand{\edgevec}{u}
\newcommand{\comvec}{w}
\newcommand{\chr}{\chi}
\DeclareMathOperator{\tr}{tr}
\DeclareMathOperator{\supp}{supp}

\title{A Three-Point Continuous-Variable Quantum MacWilliams Identity}
\author{%
  Yinzi Xiao\thanks{Department of Computer Science, Paderborn University, Germany. Email: \texttt{yinzi.xiao@uni-paderborn.de}.}%
}
\date{\today}

\begin{document}
\maketitle

\begin{abstract}
We construct the three-point continuous-variable (CV) quantum MacWilliams identity, extending the two-point framework of Burchards, and give its closed-form integral kernel. Its configuration space carries a symplectic invariant with no classical counterpart, which encodes the GKP quantization condition and a three-point sign phase. Using the identity, we derive the semidefinite-programming bounds it supports on the dimension of CV quantum error-correcting codes, and we prove, in two collapse theorems, that the three-point apparatus does not improve on the two-point bound. For GKP lattice codes the three-point optimum equals the Burchards two-point linear-programming optimum identically. This is an exact determination of the lattice three-point optimum, so the $E_8$ and Leech magic functions saturate it rather than beat it. For general bosonic codes a completely-positive reformulation bypasses the positivity obstruction that rules out the natural factored-form constructions; the phase-sign condition together with Choi positivity then force the three-point term to vanish. We certify this collapse for radial Choi forms on the first eight Laguerre levels at one mode, and leave the full trace-class cone open. Both collapses have a single cause with no classical analogue, the code projector: it orients the bound correctly but also removes the full positivity that powers the classical three-point improvement.
\end{abstract}

\section{Introduction}
\label{sec:introduction}

Bosonic quantum error-correcting codes encode logical information in the infinite-dimensional Hilbert space of harmonic oscillators. The canonical example is the Gottesman--Kitaev--Preskill (GKP) construction~\cite{gottesman-2001-gkp}, whose code space is the joint eigenspace of a lattice of commuting displacement operators. Its error-correcting power is governed by the geometry of that stabilizer lattice~\cite{conrad-eisert-arzani-2022-gkp-lattice}. Code families with good distance are known from randomized constructions~\cite{harrington-preskill-2001-gaussian-channel} and, more recently, from cryptographic lattices~\cite{conrad-eisert-seifert-2024-ntru-gkp,bloemer-xiao-raissi-soltan-2025-symplectic-gkp}. A central question is how large a code dimension $\Code$ (the number of encoded logical states) is compatible with a given protection distance $\codedist$ (the code distance), the continuous-variable (CV) analogue of asking how dense a code or a sphere packing can be.

Burchards~\cite{burchards-2025-cv-macwilliams} recently gave a coding-theoretic framework for this question. He introduced CV weight distributions $\Aone(r), \Bone(r)$ for pairs of trace-class operators on $N$-mode phase space, related by a CV quantum MacWilliams identity through a Bessel-function integral transform, and derived from it a two-point linear-programming (LP) bound and a quantum Levenshtein bound on $\Code$. The framework recasts bosonic coding as a quantum analogue of sphere packing (displacement length replaces Pauli weight) and argues that the $E_8$ and Leech-lattice GKP codes achieve optimal distances in $N = 4$ and $N = 12$ modes by lifting the Viazovska~\cite{viazovska-2017-e8} and Cohn--Kumar--Miller--Radchenko--Viazovska~\cite{cohn-kumar-miller-radchenko-viazovska-2017-dim24} magic functions to the CV setting.

This two-point CV theory is the bosonic counterpart of the Cohn--Elkies LP bound for sphere packing~\cite{cohn-elkies-2003-new-upper-bounds}. In the classical setting the LP bound is not the end of the story: the Cohn--de~Laat--Salmon three-point semidefinite-programming (SDP) bound~\cite{cohn-delaat-salmon-2022-three-point} strengthens it by adjoining a three-point auxiliary function on triples of points, constrained by a positive-semidefinite (PSD) condition, and it strictly improves on the LP bound; in its lattice form it is conjectured to be sharp in dimension $4$~\cite[Conj.~6.1]{cohn-delaat-salmon-2022-three-point}. The discrete-variable (qubit) theory has followed the same trajectory, from the Shor--Laflamme weight enumerators~\cite{shor-laflamme-1997-quantum-weight} and Rains shadow enumerators~\cite{rains-1999-quantum-shadow} to the recent SDP hierarchy of Angl\`es Munn\'e, Nemec, and Huber~\cite{angles-nemec-huber-2024-sdp,angles-huber-2025-sdp-rational-certificates}. The classical three-point bound continues to be sharpened, most recently by automated auxiliary-function searches~\cite{tutunov-etal-2025-ai-sphere-packing}. The quantum MacWilliams framework itself is being extended in parallel directions: to symmetry-group-intrinsic codes~\cite{kubischta-teixeira-2026-intrinsic-macwilliams} and to mixed-dimensional systems~\cite{gonzalez-lociga-ball-2026-mixed-macwilliams}. However, none of these extensions touch the CV infinite-dimensional setting. Therefore it is natural to ask whether the same strengthening is available for bosonic codes: does the Burchards two-point bound admit a three-point refinement?

We develop the CV three-point theory and find that the answer is essentially negative. We construct the three-point CV quantum MacWilliams identity and the SDP bounds it supports, but show that on both natural routes the three-point apparatus collapses back to the two-point bound. For GKP lattice codes the lattice three-point optimum equals the Burchards two-point LP optimum exactly (\cref{thm:lattice-nogo}); for general bosonic codes the completely-positive three-point cone collapses to two-point on every Laguerre rank we can certify (\cref{thm:cp-nogo}). This is the opposite of the classical behavior. The contrast has a structural cause with no analogue in classical packing: the code projector. The one CV improvement that does survive for lattice codes is not quantum at all: it is a classical sphere-packing bound applied to the symplectic dual lattice, which we include only as a point of comparison.

\subsection*{Contributions}

\paragraph{The three-point CV quantum MacWilliams identity (\cref{sec:setup,sec:identity}).} We define three-point weight distributions $\Athree(H), \Bthree(H)$ on the configuration space of Hermitian $2 \times 2$ Gram matrices $H = H(v_1, v_2)$ with four real parameters $(r_1^{2}, r_2^{2}, \alpha, \sympl_{12})$. The fourth parameter, the symplectic invariant $\sympl_{12} = \sympl(v_1, v_2)$, is absent from the classical three-point configuration space~\cite{cohn-delaat-salmon-2022-three-point} and carries the quantum content of the construction: the GKP quantization condition $\sympl_{12} \in 2\pi\Z$ and the three-point phase $e^{i\sympl_{12}/2}$ both live on it. We derive the integral transform $\Bthree = \MacOp[\Athree]$ by a Baker--Campbell--Hausdorff (BCH) expansion and exhibit its closed-form kernel~\cref{eq:kernel}. The kernel block-diagonalizes in the $\phi$-direction but admits no Bessel closed form in the $\psi$-direction (\cref{rmk:psi-obstruction}). This is a CV-specific structural feature, not a parametrization artifact, and it sets the first obstacle for any equivariant symmetry reduction.

\paragraph{The lattice three-point collapse (\cref{sec:main-theorem,sec:witnesses}).} The lattice-only three-point bound takes the form $\Code \le f(0,0)$ for every admissible auxiliary $f$, with a \emph{linear} dependence on $\Code$ (rather than $\sqrt{\,\cdot\,}$), traceable to the asymmetric $\Code^{2} \leftrightarrow \Code^{1}$ prefactor pair in the MacWilliams identity. We prove (\cref{thm:lattice-nogo}) that its optimal value equals the Burchards two-point LP optimum: no admissible auxiliary improves on the two-point bound, no matter how it depends on the edge coordinate. Equivalently, the theorem \emph{determines the lattice three-point optimum exactly}. This is a complete characterization, not merely the absence of an improvement. The $E_8$ and Leech magic-function auxiliaries saturate the two-point bound, exactly as the collapse requires.

\paragraph{The general / completely-positive three-point collapse (\cref{sec:obstruction}).} For general bosonic codes the natural kernel-positive-definite (kernel-PD) construction is obstructed by a $1 \times 1$-minor failure, universal within the factored-form family (\cref{lem:1x1-obstruction}). A completely-positive (Choi) restatement bypasses it by replacing the kernel-PD constraint with a Choi-positivity constraint $J \succeq 0$, but a second collapse takes its place: the three-point phase-sign condition and $J \succeq 0$ are jointly rigid enough to force $J = 0$, reducing the bound to two-point. We prove this at $N = 1$ for radial Choi forms supported on the first eight Laguerre levels by an explicit positive-definite certificate (\cref{thm:cp-nogo}): the first level analytically, the rest rational-exact via ancillary $LDL^{T}$ data. We verify the same collapse numerically at $N = 2$ for the first five levels. The full trace-class cone is left open, and we state where that boundary lies.

\paragraph{The quantum--classical contrast (\cref{subsec:contrast}).} The two collapses share a single cause. The code projector $\Pi$ (with $\Code = \tr\Pi$) does two things at once. It gives the quantum bound its correct direction, through the $\Code^{2} \leftrightarrow \Code^{1}$ asymmetry that a projector-free classical lattice sum lacks. It also forces the MacWilliams transform to be fiberwise, which removes the full positive-definiteness that makes the classical three-point bound improve. Within this class of identities the two effects are inseparable: orienting the bound correctly and keeping full positivity cannot both hold. The only CV improvement we find is a classical packing bound on the symplectic dual lattice (\cref{subsec:dual-packing}); it escapes the dichotomy by abandoning the quantum identity.

\subsection*{Paper organization}

\Cref{sec:setup} sets up the CV three-point framework and \cref{sec:identity} derives the MacWilliams identity and its closed-form kernel; \cref{sec:main-theorem} states the lattice bound and proves the collapse \cref{thm:lattice-nogo}, with the saturating magic functions recorded in \cref{sec:witnesses}; \cref{sec:obstruction} treats the bound for general bosonic quantum error-detecting codes (QEDCs) through the $1 \times 1$ obstruction, the CP collapse, the dual-packing comparison, and the quantum--classical contrast; and \cref{sec:future-work} lists future directions. Appendices collect the full BCH and kernel derivations, the technical lemmas behind \cref{sec:main-theorem}, verification checks, numerical sweep data, and the algorithmic details of the general-QEDC obstructions.

\section{Continuous-variable three-point setup}
\label{sec:setup}

We work on the $N$-mode bosonic phase space $\R^{2N}$, viewed as a real symplectic vector space with the standard symplectic form $\sympl(\xi,\eta) = \xi^{\top} \Omega \eta$, where $\Omega = \bigl( \begin{smallmatrix} 0 & I_N \\ -I_N & 0 \end{smallmatrix} \bigr)$. The symplectic matrix $\Omega$ itself serves as a compatible complex structure ($\Omega^{2} = -I_{2N}$), identifying $\R^{2N} \simeq \C^{N}$ via $(q,p) \mapsto q + i p$, so that the Hermitian inner product $\langle v_1, v_2 \rangle_{\C} = \langle v_1, v_2 \rangle_{\R} + i\, \sympl(v_1, v_2)$ packages the Euclidean and symplectic data into a single complex scalar. The maximal compact subgroup $U(N) = O(2N) \cap \mathrm{Sp}(2N,\R)$ is the largest linear group preserving both structures; it acts diagonally on pairs $(v_1, v_2) \in \R^{2N} \times \R^{2N}$.

Throughout the paper we adopt the \emph{Burchards convention} of~\cite{burchards-2025-cv-macwilliams}, in which stabilizer symplectic inner products lie in $2\pi\Z$ and the displacement operator carries a unit coefficient $c_D = 1$. We do so for one reason: it lets our bounds be compared directly against the formulas of~\cite{burchards-2025-cv-macwilliams}. The quantum-information (QI) convention more common in the stabilizer-code literature takes $c_D = \sqrt{2\pi}$ instead; the two differ only by an overall $\sqrt{2\pi}$ rescaling of phase-space coordinates, invisible to the algebraic structure developed below, and \cref{subsec:convention-box} records the conversion.

\subsection{Displacement operators and the two-point weight distributions}
\label{subsec:two-point}

Following Burchards~\cite{burchards-2025-cv-macwilliams}, write $\hat x = (\hat q, \hat p)$ for the canonical phase-space operator vector with $[\hat q_j, \hat p_k] = i \delta_{jk}$, and define the unitary displacement operator
\begin{equation}
\label{eq:displacement}
D(\xi) = \exp\!\bigl[-i\, \sympl(\xi,\, \hat x)\bigr], \qquad \xi \in \R^{2N}.
\end{equation}
A direct application of the BCH identity, using that $[\sympl(\xi, \hat x), \sympl(\eta, \hat x)] = i \sympl(\xi, \eta)$ is a $c$-number, yields the Weyl relations
\begin{align}
\label{eq:weyl-composition}
D(\xi)\, D(\eta) &= e^{-i\, \sympl(\xi,\eta) / 2}\, D(\xi + \eta), \\
\label{eq:weyl-exchange}
D(\xi)\, D(\eta)\, D(\xi)^{\dagger} &= e^{-i\, \sympl(\xi, \eta)}\, D(\eta),
\end{align}
together with the trace orthogonality $\tr[D(\xi)\, D(\eta)^{\dagger}] = (2\pi)^N\, \delta^{(2N)}(\xi - \eta)$. The characteristic function of a trace-class operator $\hat O$ is
\begin{equation}
\label{eq:characteristic}
\chr_{\hat O}(\xi) \;=\; \tr\!\bigl(D(\xi)^{\dagger}\, \hat O\bigr),
\end{equation}
and $\hat O$ is recovered from it by the inversion formula $\hat O = (2\pi)^{-N} \int \chr_{\hat O}(\xi)\, D(\xi)\, d\xi$.

The two-point CV weight distributions of~\cite{burchards-2025-cv-macwilliams} are
\begin{equation}
\label{eq:two-point-A-B}
\begin{aligned}
\Aone(r; \hat O_1, \hat O_2) &= \int_{\|\xi\|=r} \! \chr_{\hat O_1}(\xi)\, \chr_{\hat O_2}(\xi)^{*}\, d\xi, \\
\Bone(r; \hat O_1, \hat O_2) &= \int_{\|\xi\|=r} \! \tr\!\bigl(D(\xi)\, \hat O_1\, D(\xi)^{\dagger}\, \hat O_2^{\dagger}\bigr)\, d\xi,
\end{aligned}
\end{equation}
where the integral over the shell $\{\|\xi\| = r\}$ uses the surface measure $r^{2N-1}\, dS^{2N-1}$, with $dS^{2N-1}$ the standard element on the unit sphere $S^{2N-1} \subset \R^{2N}$ and no normalizing prefactor (the surface measure is used directly, not divided by the area of the shell). The two integrands differ structurally: $\Aone$ is a product of two characteristic functions evaluated at the same point $\xi$, while $\Bone$ is a single trace in which $\hat O_1$ and $\hat O_2$ are sandwiched by $D(\xi), D(\xi)^{\dagger}$. The CV quantum MacWilliams identity~\cite[Eq.~21]{burchards-2025-cv-macwilliams} relates $\Aone$ and $\Bone$ by recognizing the $\Bone$-integrand, after the Weyl relation~\cref{eq:weyl-exchange}, as a symplectic Fourier transform of $\chr_{\hat O_1} \chr_{\hat O_2}^{*}$ in a dummy variable $\eta$. For a GKP stabilizer code, let $\Pi$ be the codespace projector, $\Lat \subset \R^{2N}$ the stabilizer lattice (so $\sympl|_{\Lat \times \Lat} \in 2\pi\Z$), $\Latd \supseteq \Lat$ its symplectic dual, and $\Code = [\Latd : \Lat]^{1/2}$ the normalization constant. The identity then specializes to
\begin{equation}
\label{eq:burchards-37-38}
\Aone(r) = \Code^{2} \sum_{d \in \mathcal{D}} N_{d}\, \delta(r - d), \qquad
\Bone(r) = \Code \sum_{d \in \mathcal{D}^{\perp}} N_{d}^{\perp}\, \delta(r - d),
\end{equation}
with $\mathcal{D}, \mathcal{D}^{\perp}$ the distance multisets of $\Lat, \Latd$, respectively, and $N_d$ (resp.\ $N_d^{\perp}$) the multiplicity of $d$ in $\mathcal{D}$ (resp.\ $\mathcal{D}^{\perp}$)~\cite[Eqs.~37--38]{burchards-2025-cv-macwilliams}. The asymmetric powers of $\Code$ on the two sides are inherited by the three-point construction below.

\subsection{The three-point configuration space}
\label{subsec:config-space}

The natural configuration space for a $U(N)$-equivariant three-point construction is the orbit space $(\R^{2N} \times \R^{2N}) / U(N)$ of ordered pairs. By the first fundamental theorem of invariant theory for $U(N)$ (equivalently, a Witt-extension argument for Hermitian forms), the complete set of $U(N)$-invariants on a pair $(v_1, v_2) \in \C^N \times \C^N$ is the Hermitian $2 \times 2$ Gram matrix
\begin{equation}
\label{eq:hermitian-gram}
H(v_1, v_2) \;=\; \begin{pmatrix} \langle v_1, v_1 \rangle_{\C} & \langle v_1, v_2 \rangle_{\C} \\ \langle v_2, v_1 \rangle_{\C} & \langle v_2, v_2 \rangle_{\C} \end{pmatrix}
\;=\; \begin{pmatrix} r_1^{2} & \alpha + i \sympl_{12} \\ \alpha - i \sympl_{12} & r_2^{2} \end{pmatrix},
\end{equation}
with four real parameters
\begin{equation}
\label{eq:four-params}
r_i^{2} = \|v_i\|^{2}, \qquad \alpha = \langle v_1, v_2 \rangle_{\R}, \qquad \sympl_{12} = \sympl(v_1, v_2).
\end{equation}
The semidefiniteness $H \succeq 0$ is equivalent to the Hermitian Cauchy--Schwarz inequality
\begin{equation}
\label{eq:psd-constraint}
\alpha^{2} + \sympl_{12}^{2} \;\le\; r_1^{2}\, r_2^{2}.
\end{equation}

The symplectic invariant $\sympl_{12}$ is the difference between the quantum CV and the classical Euclidean settings. Restricting to $O(2N)$ alone, the complete invariants of an ordered pair are only $(r_1^{2}, r_2^{2}, \alpha)$, with the PSD constraint $\alpha^{2} \le r_1^{2} r_2^{2}$, precisely the three real parameters used in the Cohn--de~Laat--Salmon construction~\cite{cohn-delaat-salmon-2022-three-point}. The CV theory inherits one additional real parameter $\sympl_{12}$, with a tighter PSD region~\cref{eq:psd-constraint}, reflecting the fact that $U(N) = O(2N) \cap \mathrm{Sp}(2N,\R)$ preserves the symplectic form as well as the Euclidean metric. The symplectic invariant $\sympl_{12}$ carries the quantum content of the three-point construction: the GKP stabilizer quantization condition $\sympl_{12} \in 2\pi\Z$ lives on this axis, and the characteristic phase $e^{i\sympl(v_1, v_2)/2}$ entering the three-point MacWilliams identity depends only on $\sympl_{12}$.

For computations we use either of two equivalent parametrizations of $H$. The Cartesian coordinates $(r_1, r_2, \alpha, \sympl_{12})$ are the most natural for stabilizer-lattice specializations (where $\sympl_{12} \in 2\pi\Z$ is an axis-aligned constraint) and for the marginalization maps in \cref{subsec:degenerations}. The polar coordinates $(r_1, r_2, s, \phi)$, defined by $\alpha + i \sympl_{12} = s\, e^{i\phi}$ with $0 \le s \le r_1 r_2$ and $\phi \in [0, 2\pi)$, are the natural setting for $U(1)$-harmonic expansion in the MacWilliams kernel of \cref{sec:identity}.

A Gram--Schmidt argument along the $v_1$-direction and its symplectic image $\Omega \hat e_1$ produces the standard $U(N)$-invariant measure on the configuration space:
\begin{equation}
\label{eq:config-measure}
d\mu(H) \;=\; c_{N}\, (r_1\, r_2)\, \bigl(r_1^{2} r_2^{2} - \alpha^{2} - \sympl_{12}^{2}\bigr)^{N-2}\, dr_1\, dr_2\, d\alpha\, d\sympl_{12}, \qquad N \ge 2,
\end{equation}
where $c_{N} = |S^{2N-1}|\cdot|S^{2N-3}| = 4\pi^{2N-1}/[(N-1)!\,(N-2)!]$ is the product of two unit-sphere surface areas, equivalently $\mathrm{vol}(U(N)/U(N-2))$. The exponent $N-2$ comes from the radial part of the orthogonal complement $W^{\perp} \subset \R^{2N}$ to $W = \mathrm{span}_{\R}\{\hat e_1, \Omega \hat e_1\}$ and is one half-unit smaller than the corresponding $O(2N)$ exponent $(2N-3)/2$, consistent with the configuration space gaining one real dimension from $\sympl_{12}$. At $N = 1$ the orthogonal complement $W^{\perp}$ is trivial, so the fiber over each $H$ is a single $U(1)$-orbit and no fiber integration is needed; one works directly with the integrand-level ($F$-layer, \cref{subsec:bch}) representation~\cref{eq:integrand-A} below.

\subsection{The primary three-point distribution \texorpdfstring{$\Athree$}{A3}}
\label{subsec:A3-def}

For trace-class operators $\hat O_1, \hat O_2$ we define the integrand
\begin{equation}
\label{eq:integrand-A}
F^{(3)}_{\Aone}(v_1, v_2) \;=\; \chr_{\hat O_1}(v_1)\, \chr_{\hat O_2}(v_2)^{*},
\end{equation}
which extends the Burchards $\Aone$-integrand $\chr_{\hat O_1}(\xi) \chr_{\hat O_2}(\xi)^{*}$ by promoting the single phase-space point $\xi$ to an ordered pair $(v_1, v_2)$. The three-point primary weight distribution is the fiberwise integral over the locus of fixed Hermitian Gram matrix,
\begin{equation}
\label{eq:A3-def}
\Athree(H;\, \hat O_1, \hat O_2) \;=\; \int_{H(v_1, v_2) = H}\! F^{(3)}_{\Aone}(v_1, v_2)\, d\sigma(v_1, v_2),
\end{equation}
where $d\sigma$ is the $U(N)$-invariant measure on the $(4N-4)$-dimensional fiber $\{(v_1, v_2) : H(v_1, v_2) = H\}$ complementary to the configuration measure~\cref{eq:config-measure}, normalized by $d^{2N} v_1\, d^{2N} v_2 = c_N^{-1}\, d\mu(H)\, d\sigma$. As in the two-point case, no overall prefactor is inserted.

The ansatz~\cref{eq:integrand-A} is selected by the requirement that the GKP specialization match~\cref{eq:burchards-37-38} both in the $\Code$-power and in the support structure. For a GKP code with codespace projector $\Pi$, the characteristic function is the lattice delta-comb
\begin{equation}
\label{eq:GKP-characteristic}
\chr_{\Pi}(v) \;=\; \Code \sum_{\ell \in \Lat} \delta^{(2N)}(v - \ell),
\end{equation}
and $\chr_{\Pi}$ is real (since $\Lat$ is centrally symmetric). Substituting into~\cref{eq:A3-def} yields the joint Gram distribution of ordered stabilizer-lattice pairs,
\begin{equation}
\label{eq:A3-GKP}
\Athree(H;\, \Pi, \Pi) \;=\; \Code^{2} \sum_{(\ell_1, \ell_2) \in \Lat \times \Lat} \delta^{(4)}\!\bigl(H(\ell_1, \ell_2) - H\bigr),
\end{equation}
with the $\Code^{2}$ prefactor of Burchards' two-point $\Aone$~\cref{eq:burchards-37-38} reproduced. Alternative ansätze with three or four characteristic-function factors would produce $\Code^{3}$ or $\Code^{4}$ and are inconsistent with the marginalization to the two-point distribution analyzed in \cref{subsec:degenerations}.

\subsection{The dual three-point distribution \texorpdfstring{$\Bthree$}{B3}}
\label{subsec:B3-def}

The dual integrand $F^{(3)}_{\Bone}$ is the asymmetric two-displacement sandwich
\begin{equation}
\label{eq:integrand-B}
F^{(3)}_{\Bone}(v_1, v_2) \;=\; \tr\!\bigl(D(v_1)\, \hat O_1\, D(v_2)^{\dagger}\, \hat O_2^{\dagger}\bigr),
\end{equation}
which extends the Burchards $\Bone$-integrand $\tr(D(\xi) \hat O_1 D(\xi)^{\dagger} \hat O_2^{\dagger})$ by splitting the single displacement $\xi$ into two arguments $(v_1, v_2)$, one attached to $\hat O_1$ and one to $\hat O_2$. The three-point dual distribution is then
\begin{equation}
\label{eq:B3-def}
\Bthree(H;\, \hat O_1, \hat O_2) \;=\; \int_{H(v_1, v_2) = H}\! F^{(3)}_{\Bone}(v_1, v_2)\, d\sigma(v_1, v_2).
\end{equation}

The choice~\cref{eq:integrand-B} is the asymmetric form selected among the natural two-displacement extensions of the Burchards $\Bone$-integrand. A symmetric variant
\begin{equation}
\label{eq:integrand-B-sym}
F^{(3)}_{\Bone, \mathrm{sym}}(v_1, v_2) \;=\; \tfrac{1}{2}\bigl[\tr(D(v_1) \hat O_1 D(v_2)^{\dagger} \hat O_2^{\dagger}) + \tr(D(v_2) \hat O_1 D(v_1)^{\dagger} \hat O_2^{\dagger})\bigr]
\end{equation}
is also a valid extension; a third candidate, with two adjacent sandwiches, reduces under the Weyl relations~\cref{eq:weyl-composition}--\cref{eq:weyl-exchange} to a function of $v_2 - v_1$ alone. It therefore depends on a single vector and is a two-point object in disguise, so it cannot carry three-point information. The asymmetric form~\cref{eq:integrand-B} is selected over the symmetric one~\cref{eq:integrand-B-sym} on the basis of the MacWilliams derivation: only the asymmetric form yields a single closed-form kernel, with the edge $\edgevec = v_2 - v_1$ and center-of-mass (COM) $\comvec = v_1 + v_2$ separating under the BCH expansion (\cref{sec:identity}). The symmetric form gives the same content after $v_1 \leftrightarrow v_2$ averaging but obstructs the closed-form simplification.

For a GKP code, the dual integrand carries the same one-power of $\Code$ as the Burchards $\Bone$-distribution of~\cref{eq:burchards-37-38}: a Poisson resummation over the stabilizer lattice converts the formal $\Code^{2}$ prefactor of two $\chr_{\Pi}$-factors into $\Code^{1}$ via the covolume relation $\mathrm{covol}(\Lat) = \Code\,(2\pi)^{N}$ (equivalently, $[\Latd : \Lat] = \Code^{2}$). The explicit fiber integration is carried out in~\cref{sec:identity} and yields a double-lattice support on $\Lat \times \Latd$ with a $(-1)^{\sympl(\ell, m^{\perp})/(2\pi)}$ parity phase, whose diagonal slice reproduces the two-point dual distribution~\cite[Eq.~38]{burchards-2025-cv-macwilliams}. The asymmetric $\Code^{2}/\Code^{1}$ pair is the algebraic origin of the linear bound of \cref{thm:main-141cv}; see \cref{rmk:linear-vs-sqrt}.

\subsection{Marginalization maps}
\label{subsec:degenerations}

The three-point construction restricts consistently to known objects under three natural marginalization maps of the configuration space.

\paragraph{Forgetting the symplectic invariant.} Integrating $\Athree$ over $\sympl_{12}$ with the configuration measure~\cref{eq:config-measure} produces the classical $O(2N)$ three-point distribution: the Jacobian factor $(r_1^{2} r_2^{2} - \alpha^{2} - \sympl_{12}^{2})^{N-2}$ integrates over $\sympl_{12}$ to $(r_1^{2} r_2^{2} - \alpha^{2})^{N - 3/2}$, matching the $O(2N)$ three-point measure exponent $(2N-3)/2$. Concretely, for a GKP code this is the joint distribution over ordered stabilizer pairs viewed as vectors in $\R^{2N}$ with the symplectic structure forgotten.

\paragraph{Forgetting the off-diagonal block.} Further integrating over $\alpha$ gives the radial product $\propto \Aone(r_1) \Aone(r_2)$, the formal joint distribution of two independent radii. The three-point correlation information is precisely the off-diagonal $(\alpha, \sympl_{12})$ dependence that survives in $\Athree$ but is washed out by this marginalization.

\paragraph{Reduction to the two-point distribution.} Inserting a test function $f(r_2)$ to regulate the infinite GKP lattice sum and marginalizing over $(r_2, \alpha, \sympl_{12})$ recovers the Burchards two-point $\Aone(r_1)$, up to an explicit normalization depending on $N$ and $f$. This recovers the two-point distribution and foreshadows the reduction to the two-point bound established in \cref{sec:main-theorem}.

The corresponding $\Bthree$-side marginalizations recover Burchards' $\Bone$~\cref{eq:burchards-37-38} with the $\Code^{1}$ prefactor; the explicit form requires the kernel of \cref{sec:identity}.

The structural choices~\cref{eq:integrand-A} and~\cref{eq:integrand-B} are validated numerically on a small ($N = 2$) GKP code in \cref{app:n2-sanity}, confirming in particular that $(v_1, v_2) \mapsto \chr_{\hat O_1}(v_1)\, \chr_{\hat O_2}(v_2)^{*}$ is the unique two-$\chr$ extension of the Burchards $\Aone$-integrand consistent with the two-point limit.

\subsection{Convention conversion}
\label{subsec:convention-box}

The displacement operator~\cref{eq:displacement} has $c_D = 1$ in the Burchards convention; an equivalent definition with $c_D = \sqrt{2\pi}$ gives the quantum-information (QI) convention. The phase-space coordinates rescale by $\xi^{\text{Burch}} = \sqrt{2\pi}\, \xi^{\text{QI}}$, the stabilizer symplectic inner product lies in $2\pi\Z$ rather than $\Z$, and the GKP normalization constants rescale accordingly:
\begin{equation}
\label{eq:convention-scales}
s_{\mathrm{stab}}^{\mathrm{Burch}} = \sqrt{\tfrac{2\pi \lambda}{q}}, \qquad s_{\mathrm{norm}}^{\mathrm{Burch}} = \sqrt{\tfrac{2\pi}{\lambda q}},
\end{equation}
versus $s_{\mathrm{stab}}^{\mathrm{QI}} = \sqrt{\lambda / q}$ and $s_{\mathrm{norm}}^{\mathrm{QI}} = \sqrt{1 / (\lambda q)}$, where $q$ and $\lambda$ are the modulus and scaling parameter of the integer-stabilizer GKP construction used in the numerical checks of \cref{app:n2-sanity}. The Burchards convention is required when comparing against Burchards' LP-bound formulas; all theoretical statements of this paper are themselves convention-independent.

\section{The three-point CV quantum MacWilliams identity}
\label{sec:identity}

We now derive the integral transform relating $\Athree$ and $\Bthree$ on the configuration space introduced in \cref{sec:setup}. The derivation proceeds in three steps: a BCH expansion of the $\Bthree$-integrand yielding a closed-form expression in center-of-mass and edge coordinates~(\cref{subsec:bch}); a fiberwise integration producing the integral kernel $\mathcal{K}(H, H_\eta)$~(\cref{subsec:kernel}); and a polar harmonic decomposition that exposes a CV-specific structural difficulty~(\cref{subsec:polar-obstruction}). The kernel reduces to the Burchards two-point Hankel transform along one natural marginalization direction and produces a $K^{1}$-prefactor lattice sum upon GKP specialization~(\cref{subsec:GKP-fiber}).

\subsection{The BCH derivation of the three-point identity}
\label{subsec:bch}

Inserting the characteristic-function expansion $\hat O_j = (2\pi)^{-N} \int d\eta_j\, \chr_{\hat O_j}(\eta_j)\, D(\eta_j)$ into the asymmetric dual integrand~\cref{eq:integrand-B} yields a quadruple displacement-operator trace,
\begin{equation}
\label{eq:integrand-B-expanded}
F^{(3)}_{\Bone}(v_1, v_2) = (2\pi)^{-2N} \!\int d\eta_1\, d\eta_2\, \chr_{\hat O_1}(\eta_1)\, \chr_{\hat O_2}(\eta_2)^{*}\, \tr\!\bigl(D(v_1) D(\eta_1) D(v_2)^{\dagger} D(-\eta_2)\bigr),
\end{equation}
where we used $\hat O_2^{\dagger} = (2\pi)^{-N} \int d\eta_2\, \chr_{\hat O_2}(\eta_2)^{*} D(-\eta_2)$, valid for $\hat O_2$ with real characteristic function or, more generally, on the Hermitian sector $\chr_{\hat O_2}(v)^{*} = \chr_{\hat O_2}(-v)$. Applying the Weyl composition relation~\cref{eq:weyl-composition} pairwise to combine $D(v_1)D(\eta_1)$ and $D(v_2)^{\dagger}D(-\eta_2)$, then using the trace orthogonality of displacement operators, the four displacement factors collapse to a single phase-weighted delta condition. The remaining BCH phase organizes naturally in the edge/center coordinates
\begin{equation}
\label{eq:edge-com}
\edgevec \;=\; v_2 - v_1, \qquad \comvec \;=\; v_1 + v_2,
\end{equation}
with inverse $v_1 = (\comvec - \edgevec)/2$, $v_2 = (\comvec + \edgevec)/2$. A direct application of the bilinearity and antisymmetry of $\sympl$ gives
\begin{equation}
\label{eq:omega-edge-com}
\sympl(v_1, v_2) \;=\; \tfrac{1}{2}\sympl(\comvec, \edgevec).
\end{equation}

The key algebraic step is the shift $\tilde\eta := \eta_1 - \edgevec/2$. Two contributions of the form $\sympl(\comvec, \edgevec)/4$ cancel, eliminating the $\edgevec$-dependence of the integration measure phase; the full BCH bookkeeping is given in \cref{app:bch-derivation}. The result is the three-point identity
\begin{equation}
\label{eq:Ph3-2}
\boxed{
F^{(3)}_{\Bone}(v_1, v_2) \;=\; (2\pi)^{-N}\! \int d\tilde\eta\; \chr_{\hat O_1}\!\bigl(\tilde\eta + \tfrac{\edgevec}{2}\bigr)\, \chr_{\hat O_2}\!\bigl(\tilde\eta - \tfrac{\edgevec}{2}\bigr)^{*}\, e^{\,i\, \sympl(\tilde\eta,\, \comvec/2)}.
}
\end{equation}
The integrand factorizes into a cross-ambiguity structure
\begin{equation}
\label{eq:cross-ambiguity}
\mathcal{A}_{\chr_1, \chr_2}(\tilde\eta, \edgevec) \;:=\; \chr_{\hat O_1}\!\bigl(\tilde\eta + \tfrac{\edgevec}{2}\bigr)\, \chr_{\hat O_2}\!\bigl(\tilde\eta - \tfrac{\edgevec}{2}\bigr)^{*},
\end{equation}
which is the cross-ambiguity function of the characteristic functions $(\chr_{\hat O_1}, \chr_{\hat O_2})$ evaluated at center frequency $\tilde\eta$ and lag $\edgevec$ (also called the cross-Wigner function in the phase-space literature~\cite{schleich-2001-quantum-optics-phase-space}), and a symplectic Fourier kernel $e^{i\sympl(\tilde\eta, \comvec/2)}$ pairing $\tilde\eta$ against the center-of-mass position $\comvec/2$.

We refer to~\cref{eq:Ph3-2} as the \emph{$F$-layer} representation of the transform: it relates the integrands $F^{(3)}_{\Aone}$ and $F^{(3)}_{\Bone}$ at the level of phase-space pairs, prior to any fiber integration over Gram-matrix level sets. In $F$-layer form the transform has the kernel
\begin{equation}
\label{eq:F-layer-kernel}
G(\edgevec, \comvec;\, \edgevec_\eta, \tilde\eta) \;=\; \delta^{(2N)}(\edgevec_\eta - \edgevec)\, e^{\,i \sympl(\tilde\eta,\, \comvec/2)},
\end{equation}
in the sense that $F^{(3)}_{\Bone}$ at edge/center $(\edgevec, \comvec)$ is the integral of the cross-ambiguity factor~\cref{eq:cross-ambiguity} at edge/center-frequency $(\edgevec_\eta, \tilde\eta)$ against $G$: the edge coordinate is carried through unchanged (the transform is \emph{fiberwise} in $\edgevec$) while the center is symplectically Fourier-paired. The $F$-layer kernel is everywhere regular. All Gram-coordinate formulas below, including the closed-form kernel~\cref{eq:kernel}, arise from~\cref{eq:F-layer-kernel} by the $U(N)$-invariant fiber projection, and we use the $F$-layer form as the principal definition wherever the Gram-coordinate kernel degenerates (\cref{lem:deglocus}). Specializing $v_1 = v_2 = \xi$ collapses $\edgevec = 0$, $\comvec = 2\xi$, and~\cref{eq:Ph3-2} reduces to the Burchards two-point integrand $\int d\tilde\eta\, \chr_{\hat O_1}(\tilde\eta) \chr_{\hat O_2}(\tilde\eta)^{*} e^{i\sympl(\tilde\eta, \xi)}$, recovering the BCH form underlying~\cite[Eq.~21]{burchards-2025-cv-macwilliams}.

The edge and center coordinates play asymmetric roles in~\cref{eq:Ph3-2}: the edge $\edgevec$ enters as a translation of the $\chr$-arguments, while the center $\comvec$ enters as the conjugate variable of a symplectic Fourier transform. This split is the algebraic origin of the structural difference between the CV and the classical three-point settings: in the Cohn--de~Laat--Salmon framework~\cite{cohn-delaat-salmon-2022-three-point}, the third length parameter $\|v_2 - v_1\|$ is a scalar carried by the sign constraint of the Cohn--Elkies auxiliary function~\cite{cohn-elkies-2003-new-upper-bounds}; here $\edgevec$ is a $2N$-dimensional vector parameter that retains both Euclidean and symplectic data through the Hermitian Gram matrix~\cref{eq:hermitian-gram}.

In view of~\cref{eq:omega-edge-com} and the change of variables~\cref{eq:edge-com}, the configuration coordinates $(r_1^{2}, r_2^{2}, \alpha, \sympl_{12})$ admit a third parametrization in the edge/center frame:
\begin{equation}
\label{eq:gram-uw}
\|\comvec\|^{2} = 2(r_1^{2} + r_2^{2}) + 4\alpha, \quad \|\edgevec\|^{2} = 2(r_1^{2} + r_2^{2}) - 4\alpha, \quad \langle \comvec, \edgevec \rangle_{\R} = 2(r_2^{2} - r_1^{2}), \quad \sympl(\comvec, \edgevec) = 2\sympl_{12}.
\end{equation}
The Cartesian and polar parametrizations of \cref{subsec:config-space} remain the natural choices for stabilizer-lattice specializations and for harmonic analysis on $U(N)$, respectively; the edge/center parametrization is reserved for the identity~\cref{eq:Ph3-2} and the kernel computation of \cref{subsec:kernel}.

The choice of the asymmetric form~\cref{eq:integrand-B} over its symmetric counterpart~\cref{eq:integrand-B-sym} is now justified algebraically: the asymmetric form is the unique BCH result, while the symmetric form is a post-hoc $v_1 \leftrightarrow v_2$ average. Under the involution $\iota : (r_1^{2}, r_2^{2}, \alpha, \sympl_{12}) \mapsto (r_2^{2}, r_1^{2}, \alpha, -\sympl_{12})$ (we reserve $\tau$ for the $\eta$-side symplectic Gram parameter of \cref{subsec:kernel}), the symmetric variant reduces to an $\iota$-average of the asymmetric one,
\begin{equation}
\label{eq:asym-sym-relation}
\Bthree(H;\, \mathrm{sym}) \;=\; \tfrac{1}{2}\bigl[\Bthree(H;\, \mathrm{asym}) + \Bthree(\iota(H);\, \mathrm{asym})\bigr].
\end{equation}
For real-valued $\chr$ (in particular, for the GKP codespace projector $\Pi$, whose stabilizer lattice satisfies $\Lat = -\Lat$), the two forms are equal. We retain the asymmetric form throughout because it preserves the single-integral structure of~\cref{eq:Ph3-2}.

\subsection{The MacWilliams integral kernel}
\label{subsec:kernel}

The three-point MacWilliams identity is the integral transform
\begin{equation}
\label{eq:macwilliams-identity}
\Bthree(H) \;=\; \int \mathcal{K}_N(H,\, H_\eta)\, \Athree(H_\eta)\, d\mu(H_\eta),
\end{equation}
relating the dual and primary distributions on the four-parameter configuration space. The kernel $\mathcal{K}_N(H, H_\eta)$ is obtained by inserting~\cref{eq:Ph3-2} into the fiber definition~\cref{eq:B3-def} of $\Bthree$, exchanging the $\tilde\eta$ and fiber integrations, and parametrizing the $(v_1, v_2)$-fiber by $(\edgevec, \comvec)$; the main steps are given in \cref{app:kernel-derivation}. Writing the $\eta$-side Hermitian Gram parameters as $(s_1, s_2, \beta, \tau)$ in Cartesian coordinates and $(s_1, s_2, t, \psi)$ in polar coordinates, with $D_v := r_1^{2} r_2^{2} - \alpha^{2} - \sympl_{12}^{2}$ and $D_\eta := s_1^{2} s_2^{2} - \beta^{2} - \tau^{2}$ the Hermitian Gram determinants on the two sides, the kernel takes the closed form
\begin{equation}
\label{eq:kernel}
\mathcal{K}_N(H, H_\eta) \;=\; \delta\!\bigl(\|\edgevec\|^{2} - \|\edgevec_\eta\|^{2}\bigr)\cdot e^{\,i \Phi_0(H, H_\eta)}\cdot \mathcal{C}_N(H, H_\eta)\cdot J_{N-2}\!\left(\frac{\sqrt{D_v D_\eta}}{\|\edgevec_\eta\|^{2}}\right),
\end{equation}
with the geometric coefficient
\begin{equation}
\label{eq:kernel-coeff}
\mathcal{C}_N(H, H_\eta) \;=\; \frac{2 (2\pi)^{N-1}}{\|\edgevec_\eta\|^{2}\, (D_v D_\eta)^{(N-2)/2}}
\end{equation}
and the on-shell phase
\begin{equation}
\label{eq:phi0}
\Phi_0(H, H_\eta) \;=\; -\,\frac{\tau (r_2^{2} - r_1^{2}) + \sympl_{12}\, (s_1^{2} - s_2^{2})}{2\, \|\edgevec_\eta\|^{2}}.
\end{equation}

First, the kernel is supported on the shell $\|\edgevec\|^{2} = \|\edgevec_\eta\|^{2}$, an edge-norm-matching condition absent in the Burchards two-point identity, where there is only one length parameter. Second, the Bessel argument $\sqrt{D_v D_\eta}/\|\edgevec_\eta\|^{2}$ is symmetric under $(H, H_\eta) \leftrightarrow (H_\eta, H)$, reflecting the formal self-duality of the MacWilliams transform. Third, the order $N-2$ of the Bessel function matches the configuration-space measure exponent~\cref{eq:config-measure}, and the apparent singularity of $\mathcal{C}_N$ at $D_v, D_\eta \to 0$ (the PSD boundary, where $v_1, v_2$ become complex-linearly dependent) is canceled by the small-argument expansion $J_{N-2}(x) \sim x^{N-2}/[2^{N-2}(N-2)!]$:
\begin{equation}
\label{eq:psd-boundary-finite}
\mathcal{C}_N(H, H_\eta) \cdot J_{N-2}\!\left(\frac{\sqrt{D_v D_\eta}}{\|\edgevec_\eta\|^{2}}\right) \xrightarrow{D_v, D_\eta \to 0} \;\; \frac{(2\pi)^{N-1}}{2^{N-3}(N-2)!\, \|\edgevec_\eta\|^{2N-2}}.
\end{equation}
The kernel is therefore finite on the PSD boundary rather than singular; this cancellation is a non-trivial check that the closed-form expression~\cref{eq:kernel} is correct.

\subsection{Polar harmonic decomposition and a CV-specific structural difficulty}
\label{subsec:polar-obstruction}

To prepare for the symmetry-reduction techniques that diagonalize classical equivariant SDP bounds, we re-express~\cref{eq:kernel} in the polar coordinates $(r_1, r_2, s, \phi)$ on the $v$-side and $(s_1, s_2, t, \psi)$ on the $\eta$-side, with $\alpha + i\sympl_{12} = s\, e^{i\phi}$ and $\beta + i\tau = t\, e^{i\psi}$. The radial Bessel argument is then $\phi$- and $\psi$-independent through $D_v = r_1^{2} r_2^{2} - s^{2}$ and $D_\eta = s_1^{2} s_2^{2} - t^{2}$, while the edge norm $\|\edgevec_\eta\|^{2} = s_1^{2} + s_2^{2} - 2t \cos\psi$ depends on $\psi$ but not on $\phi$. The on-shell phase~\cref{eq:phi0} becomes
\begin{equation}
\label{eq:phi0-polar}
\Phi_0 \;=\; -\,\rho_v(\psi)\, \sin\psi \;-\; \rho_\eta(\psi)\, \sin\phi,
\end{equation}
with $\rho_v(\psi) = t(r_2^{2} - r_1^{2})/[2 \|\edgevec_\eta\|^{2}(\psi)]$ and $\rho_\eta(\psi) = s(s_1^{2} - s_2^{2})/[2 \|\edgevec_\eta\|^{2}(\psi)]$. The $\rho$-coefficients depend on $\psi$ through $\|\edgevec_\eta\|^{2}(\psi)$ but not on $\phi$.

The Jacobi--Anger expansion $e^{i z \sin\theta} = \sum_{k} J_{k}(z) e^{i k \theta}$ now applies to the $\phi$-direction:
\begin{equation}
\label{eq:phi-jacobi-anger}
e^{-i \rho_\eta(\psi) \sin\phi} \;=\; \sum_{k_1 \in \Z} (-1)^{k_1}\, J_{k_1}\!\bigl(\rho_\eta(\psi)\bigr)\, e^{i k_1 \phi},
\end{equation}
producing a $U(1)$-Bessel harmonic decomposition of the kernel on the $v$-side. The $\phi$-block is the natural analogue of the Gegenbauer harmonic decomposition that diagonalizes the Cohn--de~Laat--Salmon kernel under $O(n)$~\cite[\S5.2]{cohn-delaat-salmon-2022-three-point} and the zonal spherical functions that block-diagonalize the Bachoc--Vallentin equivariant SDP~\cite{bachoc-vallentin-2008-kissing-sdp}.

The $\psi$-direction, however, does \emph{not} admit a closed-form Bessel expansion. The Jacobi--Anger expansion requires the coefficient of $\sin\psi$ in the exponent to be $\psi$-independent. In~\cref{eq:phi0-polar} the relevant coefficient is $\rho_v(\psi)$, which itself depends on $\psi$ through $\|\edgevec_\eta\|^{2}(\psi)^{-1}$, so no Bessel-function closed form exists. Equivalently, $\rho_v(\psi) \sin\psi$ is, up to a multiplicative constant, the logarithmic derivative $\partial_\psi \ln(s_1^{2} + s_2^{2} - 2 t \cos\psi)$, and the Fourier coefficients of its exponential do not match any standard Bessel identity.

\begin{remark}[$\psi$-direction obstruction observation]
\label{rmk:psi-obstruction}
The asymmetry between the $\phi$- and $\psi$-directions in the polar harmonic decomposition of $\mathcal{K}_N$ is a CV-specific feature of the three-point identity. Tracing the origin, the asymmetry arises in the kernel derivation where $\hat{\edgevec}_\eta$ is selected as a reference frame for the fiber integration; this choice pins $\|\edgevec_\eta\|^{2}$ to the $\eta$-side and produces the $\cos\psi$-dependence. The $\psi$-coordinate parametrizes the $U(1)$ direction inside the $\eta$-side symplectic invariant $\tau = \sympl(\eta_1, \eta_2)$, which is precisely the additional configuration-space dimension absent from the classical $O(2N)$ setting~\cref{eq:hermitian-gram}. The non-availability of a Bessel closed form in the $\psi$-direction is therefore not an artifact of the parametrization but reflects the four-parameter Hermitian Gram structure: closing the $\psi$-block requires a generalized Fourier basis (logarithmic Fourier modes, or rational Chebyshev expansion of $\cos\psi$), and a full characterization of the corresponding harmonic decomposition is left as the first obstacle for the symmetry-reduction programme outlined in \cref{sec:future-work}.
\end{remark}

This is a structural observation, not a collapse result. The integral transform~\cref{eq:macwilliams-identity} remains valid in the kernel form~\cref{eq:kernel}; the obstruction is to the further \emph{block-diagonalization} of the transform by a single $U(1)$-harmonic mode in the $\psi$-direction, and is the technical reason why the three-point closure of \cref{sec:main-theorem} proceeds via direct construction of an admissible auxiliary $f$ rather than through the full equivariant block decomposition that would generalize~\cite{bachoc-vallentin-2008-kissing-sdp} to the CV setting.

\subsection{GKP specialization of the dual distribution}
\label{subsec:GKP-fiber}

The fiber integration of~\cref{eq:Ph3-2} for a GKP codespace projector $\Pi$ proceeds by substituting the lattice delta-comb~\cref{eq:GKP-characteristic} into the cross-ambiguity factor and resolving the resulting double-delta constraint. The first delta forces $\tilde\eta = \ell_1 - \edgevec/2$ for some $\ell_1 \in \Lat$, and the second forces $\edgevec = \ell_1 - \ell_2$ for some $\ell_2 \in \Lat$. The phase $e^{i\sympl(\tilde\eta, \comvec/2)}$ evaluates on the constrained slice to $e^{i\sympl(\ell_1 - \edgevec/2, \comvec/2)}$, which combines with the symplectic Fourier kernel to produce a Poisson resummation over $\Lat$. The covolume relation $\mathrm{covol}(\Lat) = \Code\,(2\pi)^{N}$ then absorbs one power of $\Code$, leaving a net $K^{1}$ prefactor consistent with the two-point distribution~\cref{eq:burchards-37-38}.

The closed-form result is the dual lattice sum
\begin{equation}
\label{eq:B3-GKP}
\boxed{
\Bthree(H;\, \Pi, \Pi) \;=\; c_N\, \Code \sum_{\ell \in \Lat,\, m^{\perp} \in \Latd} \delta^{(4)}_{d\mu}\!\bigl(H - H^{*}(\ell, m^{\perp})\bigr)\, e^{\,i\, \sympl(\ell, m^{\perp})/2},
}
\end{equation}
where $c_N = |S^{2N-1}|\cdot|S^{2N-3}|$ is the Stiefel volume constant of~\cref{eq:config-measure}, $\delta^{(4)}_{d\mu}$ is the configuration-space delta relative to the measure $d\mu$, and $H^{*}(\ell, m^{\perp})$ is the explicit Hermitian Gram associated to the lattice pair $(\ell, m^{\perp})$:
\begin{equation}
\label{eq:H-star}
r_1^{2} = \|m^{\perp} - \ell/2\|^{2}, \quad r_2^{2} = \|m^{\perp} + \ell/2\|^{2}, \quad \alpha = \|m^{\perp}\|^{2} - \|\ell\|^{2}/4, \quad \sympl_{12} = \sympl(m^{\perp}, \ell).
\end{equation}
The lattice indices range over the stabilizer lattice $\Lat$ and its symplectic dual $\Latd \supseteq \Lat$, a pairing over $\Lat \times \Latd \supseteq \Lat \times \Lat$ that carries the phase twist $e^{i\sympl(\ell, m^{\perp})/2}$ and the $\Code^{1}$ prefactor distinguishing it from the diagonal $\Lat \times \Lat$ support of $\Athree$ in~\cref{eq:A3-GKP} (the two supports coincide as sets precisely when $\Lat$ is symplectically self-dual, as for the $N = 2$ SIS code used to validate the construction, but the phase and prefactor structure differs even then).

The coefficients in~\cref{eq:B3-GKP} carry a $\pm 1$ sign that encodes nontrivial lattice data. Since $\sympl(\ell, m^{\perp}) \in 2\pi\Z$ by the definition of the symplectic dual, the phase $e^{i\sympl(\ell, m^{\perp})/2}$ takes values in $\{+1, -1\}$ according to the parity of $\sympl(\ell, m^{\perp})/(2\pi) \in \Z$. The parity is a non-trivial function of the pair $(\ell, m^{\perp})$: the diagonal slice $\ell = 0$ forces parity $+1$ for all $m^{\perp}$, reproducing the Burchards two-point dual $\Bone(r) = K \sum_{d} N_d^{\perp}\, \delta(r - d)$ exactly, but on a generic lattice pair both signs occur. This $\Z_2$-twist is a feature of the CV three-point construction absent in the classical setting and absent in the two-point $\Bone$, which is positive semidefinite. The impact on the lattice bound of \cref{sec:main-theorem} is that the sign condition of \cref{thm:main-141cv} must carry the $e^{i\sympl/2}$ factor on the lattice triangle, a phase-aware condition with no classical analogue.

\subsection{Consistency checks}
\label{subsec:identity-sanity}

The identity~\cref{eq:macwilliams-identity}--\cref{eq:kernel} and the GKP specialization~\cref{eq:B3-GKP} have been verified along two independent marginalization directions. Along the BCH-natural (center-of-mass) direction the kernel~\cref{eq:kernel} returns the Burchards two-point Hankel transform as an identity of distributions, and an independent six-dimensional integration confirms the same Hankel form. The orthogonal Cartesian direction does \emph{not} produce a Hankel transform: the Bessel argument depends on the fully integrated variable and cannot be pulled out of the inner integral. This directional asymmetry has the same origin as the $\phi/\psi$ obstruction of \cref{rmk:psi-obstruction} and reflects the four-parameter Hermitian Gram structure of the CV three-point setting. A numerical check on a small GKP code is recorded in \cref{app:approximate-gkp}.

\paragraph{Double-path GKP consistency.} The direct fiber integration of $F^{(3)}_{\Bone}|_{\mathrm{GKP}}$ to~\cref{eq:B3-GKP} agrees, at the module level, with the alternative MacWilliams-route computation: starting from the rigorous form $\Athree(H_\eta)|_{\mathrm{GKP}} = c_N K^{2} \sum_{\ell_1, \ell_2 \in \Lat} \delta^{(4)}_{d\mu}(H_\eta - H_\eta^{*}(\ell_1, \ell_2))$ and convolving against the kernel~\cref{eq:kernel} produces a $\Lat \times \Lat$ lattice expression that, by Poisson resummation, coincides with the $\Lat \times \Latd$ form of~\cref{eq:B3-GKP}. The diagonal slice $\ell = 0$ reproduces $c_N$ times Burchards' two-point $\Bone(r)$ in both routes, providing a quantitative check.

\paragraph{Numerical sanity on the $N=2$ SIS lattice.} The $K^{1}$ prefactor, the $\Lat \times \Latd$ support, and the $\pm 1$ phase structure of~\cref{eq:B3-GKP} are verified on a small GKP code in \cref{app:n2-sanity}.

\section{The three-point closure for GKP lattice codes}
\label{sec:main-theorem}

We now translate the lattice-only three-point bound of Cohn--de~Laat--Salmon~\cite[Thm.~1.4]{cohn-delaat-salmon-2022-three-point} to the CV setting, using the identity~\cref{eq:macwilliams-identity} together with the GKP specializations~\cref{eq:A3-GKP} and~\cref{eq:B3-GKP}. The construction yields the linear bound $\Code \le f(0,0)$ for any admissible auxiliary function $f$, where the linearity, in contrast to the $\sqrt{f(0,0)}$ form of the classical lattice bound, traces directly to the asymmetric $\Code^{2}/\Code^{1}$ prefactor pair of the three-point identity. Our main result on this bound is a \emph{collapse}: its optimal value coincides with the Burchards two-point LP optimum, so the lattice three-point programme cannot improve on the two-point bound. Classically the opposite holds: the lattice three-point bound is the strongest tool in low dimensions. The bound, its non-emptiness, the collapse characterization, and the constructive magic functions occupy the four subsections below, the last feeding \cref{sec:witnesses}.

\subsection{The lattice triangle constraint set}
\label{subsec:lattice-triangle}

The set on which the sign condition of the auxiliary function lives is the CV analogue of the lattice triangle $S_{\mathrm{lat},n}$ of the classical construction. In the edge/center coordinates~\cref{eq:edge-com}, it takes the form
\begin{equation}
\label{eq:S-lat-CV}
S_{\mathrm{lat},N}^{\mathrm{CV}}(\codedist) \;=\; \Bigl\{(\edgevec, \comvec) \in \R^{2N} \times \R^{2N} \,:\; \|\edgevec\| \in \{0\} \cup [\codedist, \infty),\;\; \|\comvec\| \in \{0\} \cup [2\codedist, \infty),\;\; \tfrac{1}{2} \sympl(\comvec, \edgevec) \in 2\pi\Z \Bigr\}.
\end{equation}
The choice of edge/center coordinates is essential: the support of the GKP dual distribution $\Bthree|_{\mathrm{GKP}}$ in~\cref{eq:B3-GKP} is the product lattice $\Lat \times 2\Latd$ in these coordinates (each $H^{*}(\ell, m^{\perp})$ corresponds to $\edgevec = \ell$, $\comvec = 2 m^{\perp}$), so the membership of a generic support point in $S_{\mathrm{lat},N}^{\mathrm{CV}}(\codedist)$ is a one-line lattice check rather than a multi-case argument involving $\|v_1\|$, $\|v_2\|$, and $\|v_1 \pm v_2\|$ separately. The constraint $\tfrac{1}{2}\sympl(\comvec, \edgevec) \in 2\pi\Z$ is the GKP stabilizer quantization condition, since $\sympl(\comvec, \edgevec) = 2 \sympl_{12}$ by~\cref{eq:omega-edge-com}.

\subsection{Statement of the main theorem}
\label{subsec:main-thm-statement}

\begin{definition}[Admissible auxiliary function]
\label{def:admissible}
A function $f : \R^{2N} \times \R^{2N} \to \R$ is \emph{admissible at distance $\codedist$} if it is real-valued, $U(N)$-invariant, bounded with $\int |f| < \infty$, and satisfies:
\begin{enumerate}
  \item[\textup{(i)}] \textbf{Adjoint-positivity.} The CV quantum MacWilliams transform $\ftilde := \MacOpD[f]$, defined as the fiberwise-COM symplectic Fourier transform
  \begin{equation}
  \label{eq:tilde-f-def}
  \ftilde(\edgevec, \tilde\eta) \;=\; 2^{-2N} (2\pi)^{-N} \int_{\R^{2N}} f(\edgevec, \comvec)\, e^{i \sympl(\tilde\eta, \comvec/2)}\, d\comvec,
  \end{equation}
  is real-valued and pointwise non-negative on $\R^{2N} \times \R^{2N}$.
  \item[\textup{(ii)}] \textbf{Normalization.} $\ftilde(0, 0) = 1$.
  \item[\textup{(iii)}] \textbf{Sign condition on the lattice triangle.} For all $(\edgevec, \comvec) \in S_{\mathrm{lat},N}^{\mathrm{CV}}(\codedist) \setminus \{(0,0)\}$,
  \begin{equation}
  \label{eq:sign-condition}
  f(\edgevec, \comvec) \cdot e^{i \sympl(\comvec, \edgevec)/4} \;\le\; 0.
  \end{equation}
  (On the lattice triangle, $\tfrac{1}{2}\sympl(\comvec, \edgevec) \in 2\pi\Z$, so $e^{i\sympl(\comvec, \edgevec)/4} = e^{i\sympl_{12}/2} \in \{+1, -1\}$.)
  \item[\textup{(iv)}] \textbf{Regularity.} $f \in L^{1} \cap L^{\infty}(\R^{4N})$, the lattice pairings $\sum_{\Lat \times \Latd} |f|$ and $\sum_{\Lat \times \Lat} |\ftilde|$ converge absolutely, and the $\comvec$-decay of $f$ is sufficient for Fubini's theorem in~\cref{eq:fubini-pairing} and for Poisson resummation over $\Lat$; the Schwartz class $\mathcal{S}(\R^{4N})$ is a sufficient (but not necessary) example, as are the polynomially decaying auxiliaries of \cref{sec:witnesses}.
\end{enumerate}
\end{definition}

\begin{remark}[$U(N)$-invariance is no loss of generality]
\label{rmk:un-wlog}
The invariance requirement in \cref{def:admissible} costs nothing. Every defining condition is $U(N)$-covariant: the lattice-triangle set~\cref{eq:S-lat-CV} and the phase $e^{i\sympl(\comvec, \edgevec)/4}$ are $U(N)$-invariant (norms and the symplectic form are preserved), and the transform~\cref{eq:tilde-f-def} intertwines the diagonal action, $\widetilde{f \circ g} = \ftilde \circ g$ for $g \in U(N)$. Hence if $f$ satisfies (i)--(iv) except for invariance, its average $\bar f(\edgevec, \comvec) := \int_{U(N)} f(g\edgevec, g\comvec)\, dg$ is admissible with the same objective $\bar f(0,0) = f(0,0)$ and the same normalization, and the optimal value~\cref{eq:K-lat-def} below is unchanged by dropping the invariance restriction.
\end{remark}

The sign condition~\cref{eq:sign-condition} differs from its classical counterpart in carrying a non-trivial phase $e^{i\sympl(\comvec, \edgevec)/4} = e^{i\sympl_{12}/2}$. On the lattice triangle, $\sympl(\comvec, \edgevec) \in 4\pi\Z$, so this phase takes values in $\{+1, -1\}$ according to the parity of $\sympl(\comvec, \edgevec)/(4\pi) \in \Z$. The sign condition therefore forces $f(\edgevec, \comvec) \le 0$ when the parity is even and $f(\edgevec, \comvec) \ge 0$ when the parity is odd. This is a phase-aware sign condition with no classical analogue, reflecting the $\pm 1$ phase structure of $\Bthree|_{\mathrm{GKP}}$ derived in~\cref{eq:B3-GKP}.

\begin{theorem}[CV lattice-only three-point bound]
\label{thm:main-141cv}
Let $\Pi$ be the codespace projector of a GKP code with stabilizer lattice $\Lat \subset \R^{2N}$ satisfying $\Lat \subseteq \Latd$ (stabilizer quantization), $d_{\min}(\Lat) \ge \codedist$ (well-conditioning), and $d_{\min}(\Latd \setminus \Lat) \ge \codedist$ (logical distance). Then for every admissible auxiliary function $f$ at distance $\codedist$,
\begin{equation}
\label{eq:main-bound}
\boxed{\;\Code \;\le\; f(0, 0)\;}
\end{equation}
where $\Code = \mathrm{vol}(\R^{2N}/\Lat)/(2\pi)^{N}$ is the code dimension.
\end{theorem}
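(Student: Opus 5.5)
The plan is a Cohn--Elkies-type double-counting argument. I will evaluate one lattice pairing in two ways: once on the $\Bthree$-side, using the sign condition (iii), and once on the $\Athree$-side, using adjoint-positivity (i) and normalization (ii). The $\Code^{1}$ versus $\Code^{2}$ prefactors in \cref{eq:B3-GKP} and \cref{eq:A3-GKP} then produce the linear bound.

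\emph{The pairing.} I would work at the $F$-layer rather than with the Gram-coordinate kernel \cref{eq:kernel}, to avoid the degenerate locus. Consider
\begin{equation*}
\Sigma \;:=\; \sum_{\ell \in \Lat}\,\sum_{m^{\perp} \in \Latd} f(\ell, 2m^{\perp})\, e^{i\sympl(2m^{\perp}, \ell)/4}.
\end{equation*}
It converges absolutely by (iv). Using \cref{eq:B3-GKP} and the identification $(\edgevec,\comvec)=(\ell,2m^{\perp})$ of \cref{subsec:lattice-triangle}, $\Code\,\Sigma$ is (up to the constant $c_N$) the pairing $\int f\,\Bthree(\,\cdot\,;\Pi,\Pi)$. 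Since $e^{i\sympl(2m^{\perp},\ell)/4}=e^{i\sympl(m^{\perp},\ell)/2}$, this is exactly the $\pm1$ phase appearing in \cref{eq:B3-GKP}.

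\emph{Step 1: upper bound on $\Sigma$.} I check that every nonzero support point lies in $S^{\mathrm{CV}}_{\mathrm{lat},N}(\codedist)$.
\begin{itemize}
\item[(a)] If $\ell\neq 0$, then $\|\ell\|\ge d_{\min}(\Lat)\ge\codedist$.
\item[(b)] If $m^{\perp}\neq0$, then $\|m^{\perp}\|\ge\codedist$: for $m^{\perp}\in\Latd\setminus\Lat$ this is the logical-distance hypothesis, and for $m^{\perp}\in\Lat\setminus\{0\}$ it is well-conditioning. Hence $\|2m^{\perp}\|\ge 2\codedist$.
\item[(c)] The quantization condition holds, since $\tfrac12\sympl(2m^{\perp},\ell)=\sympl(m^{\perp},\ell)\in2\pi\Z$ by definition of $\Latd$.
\end{itemize}
The sign condition \cref{eq:sign-condition} therefore makes every term with $(\ell,m^{\perp})\neq(0,0)$ nonpositive (the product is real because the phase is $\pm1$). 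This gives $\Sigma\le f(0,0)$.

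\emph{Step 2: lower bound via Poisson resummation.} Fix $\ell\in\Lat$ and apply Poisson summation in the center variable over $2\Latd$. The dual of $\Latd$ under $e^{i\sympl(\cdot,\cdot)}$ pairing at scale $2\pi$ is $\Lat$, and the extra phase $e^{i\sympl(m^{\perp},\ell)/2}$ is absorbed as a half-shift of the dual lattice variable. This is exactly the shift $\tilde\eta=\eta_1-\edgevec/2$ in the BCH derivation of \cref{eq:Ph3-2}. The goal of this step is the identity
\begin{equation*}
\Code\,\Sigma \;=\; \Code^{2}\sum_{\ell_1,\ell_2\in\Lat}\ftilde\bigl(\ell_2-\ell_1,\ \ell_1+\ell_2\ \text{(suitably scaled)}\bigr),
\end{equation*}
whose right side is the $\Athree$-pairing of \cref{eq:A3-GKP} against $\ftilde=\MacOpD[f]$. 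Here the covolume relation $\mathrm{covol}(\Latd)=(2\pi)^N/\Code$ together with the normalization $2^{-2N}(2\pi)^{-N}$ in \cref{eq:tilde-f-def} supplies exactly one extra power of $\Code$. Fubini is justified by (iv). By (i) every term on the right is nonnegative, so I drop all but the $\ell_1=\ell_2=0$ term and use (ii) to get $\Code\,\Sigma\ge\Code^{2}\ftilde(0,0)=\Code^{2}$. Combining with Step 1 gives $\Code^2\le\Code f(0,0)$, that is, $\Code\le f(0,0)$.

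\emph{Main obstacle.} I expect the difficulty to lie in Step 2's bookkeeping. Three things must come out exactly right:
\begin{itemize}
\item[(1)] the constants $2^{-2N}$, $(2\pi)^{-N}$ and $\mathrm{covol}$, so that the ratio of the two sides is precisely $\Code$ with no stray factor;
\item[(2)] the sign convention in $e^{i\sympl(\tilde\eta,\comvec/2)}$ versus the phase in \cref{eq:sign-condition}, so that the half-shift lands on $\Lat$ rather than on a coset of $\Lat$, which would break the nonnegativity argument at the origin;
\item[(3)] the correct matching of the $\ell_1+\ell_2$ center variable.
\end{itemize}
I would first verify these on the diagonal slice $\ell=0$. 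There the identity must reduce to Burchards' two-point Poisson identity underlying \cref{eq:burchards-37-38}, and that reduction fixes the constants.
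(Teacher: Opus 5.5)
Your proposal is correct and is essentially the paper's proof. It uses the same pairing identity $\Code\sum_{\Lat\times\Latd} f\,e^{i\sympl/2}=\Code^{2}\sum_{\Lat\times\Lat}\ftilde$, bounds it above termwise by the sign condition (iii), and bounds it below by keeping only the origin term via (i)--(ii). The one difference is that you derive the identity by explicit Poisson summation over $\Latd$ in the center variable at fixed edge $\ell$, whereas the paper uses Fubini on the $F$-layer kernel together with the GKP forms of $\Athree$ and $\Bthree$.

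The bookkeeping you flag does close. Using $\mathrm{covol}(\Latd)=(2\pi)^{N}/\Code$, one gets $\sum_{m^{\perp}\in\Latd} f(\ell,2m^{\perp})\,e^{i\sympl(m^{\perp},\ell)/2}=\Code\sum_{\xi\in\Lat}\ftilde(\ell,\xi-\ell/2)$ for each $\ell\in\Lat$. Hence the $(\edgevec,\tilde\eta)=(0,0)$ term appears exactly once (at $\ell=\xi=0$), and $\Code\le\Sigma\le f(0,0)$ follows directly.
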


\begin{remark}[Structure of the proof]
\label{rmk:status}
The proof uses the fiberwise-COM symplectic Fourier transform~\cref{eq:tilde-f-def} as the principal definition of $\MacOpD$, extending the displacement-operator notation of~\cite{burchards-2025-cv-macwilliams} to the three-point setting. The Gram-coordinate kernel~\cref{eq:kernel} is the generic-fiber derived representation; no assumption of its self-adjointness is required. The regularity condition (iv) is the minimum needed for Fubini's theorem and the GKP distribution pairings; the Schwartz class is sufficient but not necessary.
\end{remark}

\begin{remark}[Linear versus square-root bound]
\label{rmk:linear-vs-sqrt}
The classical Cohn--de~Laat--Salmon lattice bound~\cite[Thm.~1.4]{cohn-delaat-salmon-2022-three-point} reads $K_{\mathrm{cl}}^{2} \le f(0,0)$, equivalently $K_{\mathrm{cl}} \le \sqrt{f(0,0)}$. The square root reflects the classical symmetric Poisson summation $\sum_{\Lambda} f = K_{\mathrm{cl}}^{-1} \sum_{\Lambda^{*}} \hat f$, with both sides scaling as $K_{\mathrm{cl}}^{0} \leftrightarrow K_{\mathrm{cl}}^{2}$. The CV identity~\cref{eq:macwilliams-identity}, by contrast, pairs $\Lat \times \Lat$ against $\Lat \times \Latd$ with the asymmetric $\Code^{2}/\Code^{1}$ prefactor pair inherited from~\cref{eq:A3-GKP}/\cref{eq:B3-GKP}; the chain of inequalities below yields a single power of $\Code$ on the left and an $f(0,0)$ on the right, with no square root. The linear form matches the Burchards two-point LP bound $\Code \le f(0)/\hat f(0)$~\cite{burchards-2025-cv-macwilliams}: \emph{linearity is a CV phenomenon at every multiplicity}, traced to the $\Code$-asymmetry of the MacWilliams pair.
\end{remark}

\begin{proof}[Proof of \cref{thm:main-141cv}]
The argument has three parts mirroring the classical construction.

\emph{Step 1: MacWilliams duality identity.} Pair $f$ against the dual distribution $\Bthree|_{\mathrm{GKP}}$ using the GKP form~\cref{eq:B3-GKP}. The pairing $\langle f, \Bthree\rangle := \int f\, \Bthree$ over the configuration space equals, by Fubini on the $F$-layer kernel~\cref{eq:F-layer-kernel},
\begin{equation}
\label{eq:fubini-pairing}
\langle f, \Bthree \rangle_{dv} \;=\; \langle \ftilde, \Athree \rangle_{d\eta},
\end{equation}
where $\ftilde$ is the transform~\cref{eq:tilde-f-def}. The $F$-layer kernel is everywhere regular ($2N$-dimensional edge-delta plus symplectic Fourier phase), so Fubini is legitimate on the admissible class (iv); no assumption of self-adjointness of the Gram-coordinate kernel~\cref{eq:kernel} is required, in particular the apparent singularity of~\cref{eq:kernel} on the degenerate locus $\|\edgevec_\eta\| \to 0$ is absorbed into the $F$-layer Fubini. Substituting the GKP forms~\cref{eq:A3-GKP} and~\cref{eq:B3-GKP} into the two sides and writing $f(\ell, m^{\perp}) := f(H^{*}(\ell, m^{\perp}))$ for the value of $f$ at the lattice support point, the duality becomes
\begin{equation}
\label{eq:macwilliams-pairing-GKP}
\Code \sum_{(\ell, m^{\perp}) \in \Lat \times \Latd} f(\ell, m^{\perp})\, e^{i \sympl(\ell, m^{\perp})/2} \;=\; \Code^{2} \sum_{(\ell_1, \ell_2) \in \Lat \times \Lat} \ftilde(\ell_1, \ell_2),
\end{equation}
where the asymmetric lattice supports $\Lat \times \Latd$ on the left and $\Lat \times \Lat$ on the right are inherited directly from the GKP specializations of $\Bthree$ and $\Athree$.

\emph{Step 2: Dual-side lower bound.} By admissibility (i), every term $\ftilde(\ell_1, \ell_2) \ge 0$. The $(0,0)$ term contributes $\ftilde(0,0) = 1$ by (ii), so
\begin{equation}
\label{eq:dual-lower}
\sum_{(\ell_1, \ell_2) \in \Lat \times \Lat} \ftilde(\ell_1, \ell_2) \;\ge\; 1.
\end{equation}
Multiplying by $\Code > 0$ gives a lower bound of $\Code$ on the right-hand side of~\cref{eq:macwilliams-pairing-GKP}.

\emph{Step 3: Primal-side upper bound.} On the left-hand side of~\cref{eq:macwilliams-pairing-GKP}, the term $(\ell, m^{\perp}) = (0,0)$ contributes $f(0,0)$ with phase $+1$. For every other $(\ell, m^{\perp}) \neq (0,0)$, the configuration $(\edgevec, \comvec) = (\ell, 2 m^{\perp})$ (by the edge/COM identification of \cref{subsec:lattice-triangle}) satisfies $H^{*}(\ell, m^{\perp}) \in S_{\mathrm{lat},N}^{\mathrm{CV}}(\codedist)$: the edge norm $\|\ell\| \in \{0\} \cup [\codedist, \infty)$ by the well-conditioning hypothesis $d_{\min}(\Lat) \ge \codedist$; the COM norm $\|2 m^{\perp}\| \in \{0\} \cup [2\codedist, \infty)$ by the combined stabilizer-quantization and logical-distance hypotheses, which give $d_{\min}(\Latd) \ge \codedist$; and the symplectic condition $\tfrac{1}{2}\sympl(\comvec, \edgevec) = \sympl(m^{\perp}, \ell) \in 2\pi\Z$ by the definition of the symplectic dual lattice. The sign condition (iii) and the lattice phase identity $e^{i\sympl(\comvec, \edgevec)/4} = e^{-i\sympl(\ell, m^{\perp})/2} = e^{i\sympl(\ell, m^{\perp})/2}$ (using $e^{i\pi k} = e^{-i\pi k}$ for $k \in \Z$) then give $f(\ell, m^{\perp})\, e^{i \sympl(\ell, m^{\perp})/2} \le 0$ for every non-trivial pair. Summing,
\begin{equation}
\label{eq:primal-upper}
\sum_{(\ell, m^{\perp}) \in \Lat \times \Latd} f(\ell, m^{\perp})\, e^{i \sympl(\ell, m^{\perp})/2} \;\le\; f(0, 0).
\end{equation}

Combining~\cref{eq:dual-lower} on the right and~\cref{eq:primal-upper} on the left of~\cref{eq:macwilliams-pairing-GKP},
\begin{equation*}
\Code \;\le\; \Code \sum_{(\ell_1, \ell_2)} \ftilde(\ell_1, \ell_2) \;=\; \sum_{(\ell, m^{\perp})} f(\ell, m^{\perp})\, e^{i \sympl(\ell, m^{\perp})/2} \;\le\; f(0, 0),
\end{equation*}
which is the bound~\cref{eq:main-bound}.
\end{proof}

\subsection{Non-emptiness of the admissible class}
\label{subsec:non-emptiness}

The proof of \cref{thm:main-141cv} is vacuous if no admissible $f$ exists. We construct one explicitly.

\begin{proposition}[Non-emptiness of the admissible class]
\label{prop:non-emptiness}
For every $\codedist > 0$, $u_0 \in (0, \codedist)$, and $b > 0$ with $b R_0^{2} > N$ where $R_0 := \sqrt{4\codedist^{2} - u_0^{2}}$ (we write $b$ for the Gaussian decay rate, reserving $\alpha$ for the Gram parameter of~\cref{eq:four-params}), the function
\begin{equation}
\label{eq:explicit-ansatz}
f(\edgevec, \comvec) \;=\; A\, \chi_{0}(\|\edgevec\|^{2})\, q_{R}(\|\comvec\|^{2})\, \cos\!\bigl(\sympl(\comvec, \edgevec)/4\bigr),
\end{equation}
with $\chi_{0} \in C_{c}^{\infty}(\R_{\ge 0})$ nonnegative, supported in $[0, u_0^{2}]$, with $\chi_{0}(0) > 0$, and $q_{R}(s) = (1 - s/R_0^{2})\, e^{-b s}$, is admissible at distance $\codedist$ for a suitable choice of normalization $A > 0$.
\end{proposition}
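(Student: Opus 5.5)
The plan is to verify the four conditions of \cref{def:admissible} directly for the ansatz \cref{eq:explicit-ansatz}. The preliminary properties are immediate. The function $f$ is real. It is $U(N)$-invariant, because it depends only on $\|\edgevec\|$, $\|\comvec\|$ and $\sympl(\comvec,\edgevec)$, all of which $U(N)$ preserves. It is smooth in $(\edgevec,\comvec)$, since $\chi_0$ is smooth as a function of $\|\edgevec\|^{2}$. It has compact support in $\edgevec$ and Gaussian decay times a polynomial in $\comvec$. Hence $f\in\mathcal S(\R^{4N})$, which settles (iv) by the remark in \cref{def:admissible} that the Schwartz class suffices; the required absolute convergence of the lattice sums then follows.

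\emph{Sign condition (iii).} On $S^{\mathrm{CV}}_{\mathrm{lat},N}(\codedist)$ we have $e^{i\sympl(\comvec,\edgevec)/4}\in\{\pm1\}$, so this phase equals $\cos(\sympl(\comvec,\edgevec)/4)$. The product $\cos(\cdot)\,e^{i(\cdot)}$ in \cref{eq:sign-condition} is therefore $\cos^{2}=1$, and the condition reduces to $A\,\chi_0(\|\edgevec\|^{2})\,q_R(\|\comvec\|^{2})\le 0$. Consider first points with $\|\edgevec\|\ge\codedist>u_0$. There $\chi_0$ vanishes, so $f=0$. The only remaining case is $\edgevec=0$ with $\comvec\neq0$, hence $\|\comvec\|^{2}\ge 4\codedist^{2}> R_0^{2}$. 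Then $1-\|\comvec\|^{2}/R_0^{2}<0$ and $q_R<0$, so the inequality holds.

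\emph{Adjoint-positivity (i) and normalization (ii).} First write $\cos(\sympl(\comvec,\edgevec)/4)$ as the average of $e^{\pm i\sympl(\comvec,\edgevec)/4}$. Combining each exponential with the kernel $e^{i\sympl(\tilde\eta,\comvec/2)}$ gives exponents $\tfrac{i}{2}\sympl(\tilde\eta\mp\edgevec/2,\comvec)$, up to sign bookkeeping from the antisymmetry of $\sympl$. This yields
\[
\ftilde(\edgevec,\tilde\eta)=\tfrac{A}{2}\,2^{-2N}(2\pi)^{-N}\chi_0(\|\edgevec\|^{2})\bigl[Q(\tilde\eta-\edgevec/2)+Q(\tilde\eta+\edgevec/2)\bigr],
\]
where $Q(\xi)=\int q_R(\|\comvec\|^{2})e^{i\sympl(\xi,\comvec)/2}d\comvec$. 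Because $\sympl(\xi,\comvec)=\langle \Omega^{\top}\xi,\comvec\rangle$ with $\Omega$ orthogonal and $q_R$ radial, $Q$ is the ordinary $2N$-dimensional Fourier transform of $q_R(\|\cdot\|^{2})$ evaluated at frequency of length $k=\|\xi\|/2$. In particular $Q$ is real and radial. Using $\widehat{e^{-b|w|^{2}}}=(\pi/b)^{N}e^{-k^{2}/4b}$ and the fact that $|w|^{2}e^{-b|w|^{2}}=-\partial_b e^{-b|w|^{2}}$, I expect
\[
Q\propto(\pi/b)^{N}e^{-k^{2}/4b}\Bigl[1-\tfrac{N}{bR_0^{2}}+\tfrac{k^{2}}{4b^{2}R_0^{2}}\Bigr].
\]
This is strictly positive for all $k$ exactly because $bR_0^{2}>N$. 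Together with $\chi_0\ge0$, this gives $\ftilde\ge0$ everywhere. Moreover $\ftilde(0,0)=A\,2^{-2N}(2\pi)^{-N}\chi_0(0)Q(0)$ with $\chi_0(0)>0$ and $Q(0)>0$, so $\ftilde(0,0)>0$, and we fix $A>0$ to make it equal to $1$.

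The main obstacle is the positivity computation in (i). Specifically, I must check carefully that the cosine factor splits into two shifted copies of the same nonnegative transform, which requires getting the signs in the symplectic phases right. I must also check that the derivative-in-$b$ formula produces the bracket above with the correct constant $N$, since the hypothesis $bR_0^{2}>N$ must be exactly what keeps the bracket positive. Everything else is bookkeeping.
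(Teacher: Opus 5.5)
Your proposal is correct and follows essentially the paper's route. It uses the same two-case sign check on the lattice triangle, the same Fourier computation $\widehat{q_R}\propto e^{-k^{2}/4b}\bigl[1-N/(bR_0^{2})+k^{2}/(4b^{2}R_0^{2})\bigr]$ (the paper's \cref{lem:qR-pd}, obtained there via $-\Delta_\xi$ rather than your $-\partial_b$), and the same value of $\ftilde(0,0)$ to fix $A$. The one difference is presentational. The paper invokes Schur's product theorem with the characters $e^{\pm i\sympl/4}$ and then \cref{lem:fibpd}, whereas you write $\ftilde$ explicitly as $\chi_0$ times two shifted copies of the positive radial transform. That is the same argument carried out on the Fourier side, and since $Q$ is radial, the sign bookkeeping you flagged cannot affect the conclusion.
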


\begin{proof}[Proof sketch]
The four admissibility conditions are verified directly. The sign condition (iii) is satisfied by construction: on the lattice triangle, either $\|\edgevec\| \ge \codedist > u_0$, in which case $\chi_{0}(\|\edgevec\|^{2}) = 0$ and $f \equiv 0$; or $\edgevec = 0$, in which case $\|\comvec\| \ge 2\codedist$ and $\|\comvec\|^{2} \ge 4\codedist^{2} > R_0^{2}$, giving $q_{R}(\|\comvec\|^{2}) \le 0$ and (with $\cos(0) = 1$) $f \cdot e^{i \sympl/4} \le 0$. The adjoint-positivity (i) follows from a fiberwise-PD lemma: for each fixed $\edgevec$, the slice $\comvec \mapsto f(\edgevec, \comvec)$ is a product of the positive-definite Gaussian-polynomial factor $q_{R}(\|\comvec\|^{2})$ (which is strictly positive-definite by \cref{lem:qR-pd}, using the parameter constraint $b R_0^{2} > N$) with the characters in the decomposition $\cos(\sympl(\comvec, \edgevec)/4) = \tfrac{1}{2}(e^{i\sympl/4} + e^{-i\sympl/4})$, each of which is positive-definite in $\comvec$ as an exponential character. The product of positive-definite functions is positive-definite (Schur's theorem), and a non-negative scalar multiplier preserves positive-definiteness; combining with the fiberwise-PD characterization $\MacOpD[f] \ge 0 \iff f$ is fiberwise-PD in $\comvec$ for every fixed $\edgevec$ (\cref{lem:fibpd}), conclusion (i) follows. Conditions (ii) and (iv) are direct computations using the explicit form of $\chi_{0}$ and $q_{R}$: $\ftilde(0, 0) = 2^{-2N} (2\pi)^{-N} A\, \chi_{0}(0)\, (\pi/b)^{N}\, [1 - N/(b R_0^{2})]$ by~\cref{eq:tilde-f-def}, finite and strictly positive under the strict inequality $b R_0^{2} > N$, and a normalization $A$ achieving $\ftilde(0,0) = 1$ exists. Full proofs of the Gaussian-polynomial positive-definiteness (\cref{lem:qR-pd}), the fiberwise positive-definiteness characterization (\cref{lem:fibpd}), and the well-posedness of $\MacOpD$ on the degenerate locus (\cref{lem:deglocus}) are given in \cref{app:proofs}; the proofs use the $F$-layer structure of~\cite{burchards-2025-cv-macwilliams} extended to the three-point setting.
\end{proof}

The construction~\cref{eq:explicit-ansatz} is designed as a non-emptiness certificate, not a tight bound: it produces a finite value of $\Code \le f(0, 0) = A\chi_0(0)$ that need not be close to the Burchards two-point LP optimum, but establishes that the admissible cone of \cref{def:admissible} is non-empty. Sharper auxiliary functions, drawing on the Burchards--Levenshtein and the Viazovska/Cohn--Kumar magic functions, are used in \cref{sec:witnesses} to recover the corresponding Burchards two-point bounds; by the collapse \cref{thm:lattice-nogo} these saturate, and do not improve on, the two-point optimum.

\subsection{The collapse: the lattice three-point optimum equals the two-point LP optimum}
\label{subsec:no-go}

The quality of the bound of \cref{thm:main-141cv} is governed by its optimal value over the admissible class,
\begin{equation}
\label{eq:K-lat-def}
K^{\mathrm{CV}}_{\mathrm{lat}}(N, \codedist) \;:=\; \inf_{f\ \mathrm{admissible\ at}\ \codedist} f(0, 0).
\end{equation}
Classically, the lattice three-point bound strictly improves on the two-point LP bound in low dimensions, and is conjectured to be sharp in dimension $4$~\cite[Conj.~6.1]{cohn-delaat-salmon-2022-three-point}. The CV lattice three-point bound behaves oppositely: its optimal value coincides with the Burchards two-point LP optimum, so the three-point apparatus yields no improvement.

\begin{theorem}[Lattice three-point collapse]
\label{thm:lattice-nogo}
Let $K^{\mathrm{Burch}}_{2}(N, \codedist)$ denote the Burchards two-point LP optimum~\cite[Thm.~1]{burchards-2025-cv-macwilliams} at distance $\codedist$. Then
\begin{equation}
\label{eq:no-go}
K^{\mathrm{CV}}_{\mathrm{lat}}(N, \codedist) \;=\; K^{\mathrm{Burch}}_{2}(N, \codedist).
\end{equation}
In particular, no admissible auxiliary function makes the three-point bound~\cref{eq:main-bound} strictly tighter than the two-point LP bound, no matter how it depends on the edge coordinate $\edgevec$: the edge coordinate, which is the new direction the three-point construction adds, has no effect on the optimal value.
\end{theorem}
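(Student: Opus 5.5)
We will prove the two inequalities $K^{\mathrm{CV}}_{\mathrm{lat}} \le K^{\mathrm{Burch}}_2$ and $K^{\mathrm{CV}}_{\mathrm{lat}} \ge K^{\mathrm{Burch}}_2$ separately. Both directions work at the level of the edge-slice $\edgevec = 0$, which is where the two-point data lives.

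\emph{Upper bound (lifting).} Take any Burchards-admissible two-point auxiliary $g$ on $\R^{2N}$. This means $g$ is radial, has a non-negative symplectic Fourier transform normalized at the origin, satisfies $g(\xi)\le 0$ for $\|\xi\|\ge \codedist$, and has objective $g(0)$. We lift it to
\[
f(\edgevec,\comvec) = \chi(\edgevec)\, g(\comvec/2)\cos\!\bigl(\sympl(\comvec,\edgevec)/4\bigr),
\]
where $\chi \ge 0$ is a bump supported in $\|\edgevec\| < \codedist$ with $\chi(0)=1$. This mirrors the construction in \cref{prop:non-emptiness}. We then check the admissibility conditions in turn.

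\begin{itemize}
\item Condition (i) follows slice by slice from \cref{lem:fibpd}. In each fixed-$\edgevec$ slice, the function $g(\comvec/2)$ is positive-definite and the cosine is a sum of two characters, so the product is positive-definite by Schur's theorem.
\item Condition (iii) holds because on $S^{\mathrm{CV}}_{\mathrm{lat},N}(\codedist)$ either $\chi$ vanishes or $\edgevec = 0$. In the latter case $\|\comvec/2\|\ge\codedist$, so $g \le 0$ and the phase equals $1$.
\item Condition (ii) fixes the normalization. With the prefactor $2^{-2N}$, we have $\ftilde(0,0)=\hat g(0)$ after the substitution $\comvec = 2\xi$, so the objective becomes $f(0,0)=g(0)/\hat g(0)$.
\end{itemize}

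Taking the infimum over $g$ gives $K^{\mathrm{CV}}_{\mathrm{lat}}\le K^{\mathrm{Burch}}_2$.

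\emph{Lower bound (restriction).} Given an admissible $f$, we set $g(\xi) := f(0,2\xi)$. We claim $g$ is Burchards-admissible with the same objective.
\begin{itemize}
\item Positivity: $\ftilde(0,\cdot)$ is, up to the constant $2^{-2N}(2\pi)^{-N}$ and the rescaling $\comvec=2\xi$, exactly the symplectic Fourier transform of $g$. Condition (i) therefore gives $\hat g\ge0$, and condition (ii) gives $\hat g(0)=1$.
\item Sign condition: the slice $\edgevec=0$ lies in the lattice triangle whenever $\|\comvec\|\ge 2\codedist$, because the symplectic constraint $\tfrac12\sympl(\comvec,0)=0$ is automatic. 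Condition (iii) with phase $1$ then gives $g(\xi)\le0$ for $\|\xi\|\ge\codedist$.
\item Radiality follows from $U(N)$-invariance after averaging over $O(2N)$. This is permitted because the Burchards conditions are $O(2N)$-invariant, and averaging preserves the objective, the transform positivity and the sign condition.
\item Regularity (iv) restricts to the slice. The one delicate point is that $L^1$ on $\R^{4N}$ does not automatically give $L^1$ on a slice; the boundedness of $\ftilde$ and the decay hypotheses should cover it.
\end{itemize}
Hence $f(0,0)=g(0)\ge K^{\mathrm{Burch}}_2$.

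\emph{Main obstacle.} The crux is matching the Burchards admissible class and normalization precisely: his bound $\Code\le g(0)/\hat g(0)$ versus our normalization $\ftilde(0,0)=1$, together with the factor $2$ in $\comvec=2\xi$ and the $2^{-2N}$ prefactor. We will first verify this bookkeeping on a Gaussian. If Burchards' class requires conditions that the restriction does not obviously inherit, such as Schwartz regularity, we plan to mollify $g$ by convolving with a positive-definite approximate identity and rescaling slightly. This costs an arbitrary $\varepsilon$ in $\codedist$, and we would then recover equality of the infima by continuity of $K^{\mathrm{Burch}}_2$ in $\codedist$, which itself needs a short scaling argument.
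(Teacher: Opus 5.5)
Your proof is correct and follows the paper's proof essentially step for step. For the lower bound you restrict an admissible $f$ to the $\edgevec = 0$ slice, absorbing the factor $2$ into $g(\xi) = f(0, 2\xi)$, whereas the paper works at distance $2\codedist$ and rescales the LP afterwards; for the upper bound you lift a two-point function via $\chi \cdot g \cdot \cos(\sympl(\comvec,\edgevec)/4)$ using Schur's theorem and \cref{lem:fibpd}, exactly as the paper does.

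Two small points of comparison. Reading $\widehat g \ge 0$ directly from condition (i) at $\edgevec = 0$ is slightly leaner than the paper's appeal to \cref{lem:fibpd}. The $O(2N)$-averaging step is unnecessary, because $U(N)$ acts transitively on the spheres of $\R^{2N} \simeq \C^N$, so $U(N)$-invariance of $f$ already makes $\xi \mapsto f(0, 2\xi)$ radial.
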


\begin{proof}
Write $g(\comvec) := f(0, \comvec)$ for the $\edgevec = 0$ slice of an admissible $f$. We show that the objective, the normalization, and the admissibility constraints all collapse onto this slice, reducing the three-point programme to the two-point LP at distance $2\codedist$.

\emph{The objective is $\edgevec = 0$-local.} By definition, $f(0,0) = g(0)$.

\emph{The normalization is $\edgevec = 0$-local.} By the $F$-layer formula~\cref{eq:tilde-f-def},
\[
\ftilde(0,0) = 2^{-2N}(2\pi)^{-N} \int_{\R^{2N}} f(0, \comvec)\, d\comvec,
\]
an integral over $\comvec$ at fixed $\edgevec = 0$, not over all of $\R^{4N}$. Condition (ii) thus constrains only $\int g$.

\emph{Adjoint-positivity forces $g$ to be positive-definite.} By the fiberwise characterization \cref{lem:fibpd}, condition (i) holds if and only if $\comvec \mapsto f(\edgevec, \comvec)$ is positive-definite for every fixed $\edgevec$. The slice $\edgevec = 0$ is a regular fiber of the transform (\cref{lem:deglocus}: the Gram-coordinate singularity of the kernel~\cref{eq:kernel} at $\edgevec = 0$ is a $U(N)$-projection artifact, not an operator singularity), so $g$ is positive-definite, i.e.\ $\widehat g \ge 0$.

\emph{The sign condition forces $g \le 0$ beyond distance $2\codedist$.} On the slice $\edgevec = 0$ the lattice-triangle phase is $e^{i\sympl(\comvec, 0)/4} = 1$, and the section $S^{\mathrm{CV}}_{\mathrm{lat},N}(\codedist)|_{\edgevec = 0}$ contains the entire ray $\{\|\comvec\| \ge 2\codedist\}$. Condition (iii) thus gives $g(\comvec) \le 0$ for $\|\comvec\| \ge 2\codedist$.

\emph{Reduction to the two-point LP.} The three preceding properties are exactly the Cohn--Elkies\slash Burchards two-point LP constraints on $\R^{2N}$ at distance $2\codedist$: $\widehat g \ge 0$, $g \le 0$ on $\{\|\comvec\| \ge 2\codedist\}$, and a fixed value of $\int g$. It remains to track the normalization constant. Condition~(ii) fixes $\widehat g(0) = 2^{2N}(2\pi)^{N}\, \ftilde(0,0) \cdot (2\pi)^{-N} = 2^{2N}$ rather than $1$, so minimizing $g(0)$ subject to $\widehat g(0) = 2^{2N}$ at distance $2\codedist$ gives $g(0) \ge 2^{2N}\, K^{\mathrm{CE}}_{2}(2N, 2\codedist)$, where $K^{\mathrm{CE}}_{2}(n, D) := \inf\{h(0)/\widehat h(0) : \widehat h \ge 0,\, h \le 0 \text{ on } \|x\| \ge D\}$ is the Cohn--Elkies LP optimum. The substitution $h(\comvec) = g(2\comvec)$ converts the distance-$2\codedist$ LP to a distance-$\codedist$ LP with $\widehat h(0) = 2^{-2N} \widehat g(0)$ and $h(0) = g(0)$, giving $K^{\mathrm{CE}}_{2}(2N, 2\codedist) = 2^{-2N}\, K^{\mathrm{CE}}_{2}(2N, \codedist)$. Combining: $g(0) \ge K^{\mathrm{CE}}_{2}(2N, \codedist) = K^{\mathrm{Burch}}_{2}(N, \codedist)$, so $K^{\mathrm{CV}}_{\mathrm{lat}} \ge K^{\mathrm{Burch}}_{2}$. This direction is independent of any edge dependence of $f$.

\emph{Equality.} Conversely, any two-point LP function $g$ at distance $2\codedist$ extends to an admissible $f$ via
\[
f(\edgevec, \comvec) := \chi(\|\edgevec\|)\, g(\comvec)\, \cos\!\bigl(\sympl(\comvec, \edgevec)/4\bigr), \qquad \chi \in C_c^\infty,\ \chi \ge 0,\ \supp \chi \subset \{\|\edgevec\| < \codedist\},\ \chi(0) = 1:
\]
every slice $\edgevec \neq 0$ is positive-definite (Schur's theorem: $g$ positive-definite times the positive-definite character $\cos(\sympl(\comvec, \edgevec)/4)$) and sign-trivial (on each slice $\edgevec \neq 0$, either $\|\edgevec\| \ge \codedist$, where $\chi(\|\edgevec\|) = 0$ and $f \equiv 0$, or $0 < \|\edgevec\| < \codedist$, where $(\edgevec, \comvec)$ lies off the lattice triangle of~\cref{eq:S-lat-CV}), while the $\edgevec = 0$ slice reproduces $g$. Thus $f(0,0) = g(0)$ realizes the two-point optimum, giving $K^{\mathrm{CV}}_{\mathrm{lat}} \le K^{\mathrm{Burch}}_{2}$ and hence~\cref{eq:no-go}.
\end{proof}

\begin{remark}[Scope: why the collapse is not an artifact of our transcription]
\Cref{thm:lattice-nogo} closes the admissible class of \cref{def:admissible}; we argue that this class is the canonical CV transcription of the classical lattice three-point programme~\cite[Thm.~1.4]{cohn-delaat-salmon-2022-three-point}, so the collapse is not an artifact of one formulation. \begin{enumerate}
  \item[(a)] \emph{The primary distribution is forced:} $F^{(3)}_{\Aone} = \chr_1(v_1)\chr_2(v_2)^{*}$ is the unique two-$\chr$ extension of the Burchards $\Aone$-integrand consistent with the $\Code$-power and the two-point marginalization (\cref{subsec:A3-def,subsec:degenerations}).
  \item[(b)] \emph{The dual integrand is forced up to symmetrization:} the asymmetric sandwich~\cref{eq:integrand-B} is the unique BCH-closed two-displacement extension, its symmetric variant carries identical content by~\cref{eq:asym-sym-relation}, and the remaining candidate degenerates to a one-vector object (\cref{subsec:B3-def}).
  \item[(c)] \emph{The constraints are read off from the identity, not chosen:} any three-point identity built on the projector pairing inherits the $\Code^{2}/\Code^{1}$ prefactor asymmetry, hence an edge-preserving (fiberwise) transform; adjoint-positivity is then characterized fiberwise by \cref{lem:fibpd}, and the sign condition and normalization are the direct counterparts of the classical ones on the support of~\cref{eq:B3-GKP}.
  \item[(d)] \emph{$U(N)$-invariance is no restriction} (\cref{rmk:un-wlog}).
\end{enumerate} What the collapse does \emph{not} close are frameworks that abandon the auxiliary-function/projector-pairing structure altogether; these are exactly the successor frameworks of \cref{sec:future-work}.
\end{remark}

The full-positivity route that powers the classical bound does have a CV analogue, but it bounds the code dimension in the wrong direction.

\begin{proposition}[Direction reversal of the full-positivity route]
\label{prop:direction-reversal}
Let $\Lat \subset \R^{2N}$ satisfy the hypotheses of \cref{thm:main-141cv}, and let $h : \R^{2N} \times \R^{2N} \to \R$ be continuous with enough decay for Poisson summation over $\Lat \times \Lat$ (the Schwartz class suffices), with full $4N$-dimensional symplectic-Fourier transform
\[
\widehat h(\eta_1, \eta_2) \;:=\; \iint h(x_1, x_2)\, e^{i\sympl(\eta_1, x_1) + i\sympl(\eta_2, x_2)}\, dx_1\, dx_2
\]
satisfying $\widehat h \ge 0$ and $\widehat h(0,0) > 0$, and with $h(v_1, v_2) \le 0$ for every pair $(v_1, v_2) \neq (0,0)$ such that $\|v_1\|, \|v_2\|, \|v_2 - v_1\| \in \{0\} \cup [\codedist, \infty)$. Then
\begin{equation}
\label{eq:direction-reversal}
\Code^{2} \;\ge\; (2\pi)^{-2N}\, \frac{\widehat h(0,0)}{h(0,0)}:
\end{equation}
the full-positivity Poisson route bounds the code dimension from \emph{below}, not above.
\end{proposition}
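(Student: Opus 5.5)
The plan is to run the classical Cohn--Elkies Poisson argument on the product lattice $\Lat \times \Lat \subset \R^{4N}$. Here $h$ sits on the primal side and the full $4N$-dimensional symplectic Fourier transform $\widehat h$ sits on the dual side. The point to track is where the covolume of $\Lat$ enters, because that is what flips the direction of the inequality. The steps are: (1) Poisson summation, with the symplectic dual identified; (2) an upper bound on the lattice side from the sign condition; (3) a lower bound on the Fourier side from positivity; (4) combining and rearranging.

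\emph{Step 1 (Poisson summation).} I first record the symplectic Poisson formula for a lattice $\Lat \subset \R^{2N}$ with the kernel $e^{i\sympl(\eta,x)}$. The dual lattice is $\{\eta : \sympl(\eta,\ell) \in 2\pi\Z\ \forall \ell \in \Lat\} = \Latd$, and for Schwartz $\phi$
\[
\sum_{\ell \in \Lat} \phi(\ell) \;=\; \mathrm{covol}(\Lat)^{-1} \sum_{m \in \Latd} \widehat\phi(m),
\]
with $\widehat\phi(\eta) = \int \phi(x) e^{i\sympl(\eta,x)}dx$. This follows from the Euclidean formula by the linear change of variables $\eta \mapsto \Omega\eta$, which has unit determinant. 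Applying it in each factor of $\Lat \times \Lat$, and using $\mathrm{covol}(\Lat) = \Code\,(2\pi)^N$ as stated in \cref{thm:main-141cv}, gives
\[
\sum_{(\ell_1,\ell_2) \in \Lat \times \Lat} h(\ell_1,\ell_2) \;=\; \Code^{-2}(2\pi)^{-2N} \sum_{(m_1,m_2) \in \Latd \times \Latd} \widehat h(m_1,m_2).
\]
The decay hypothesis on $h$ justifies this and makes both sums absolutely convergent.

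\emph{Step 2 (lattice-side upper bound).} For $(\ell_1,\ell_2) \in \Lat\times\Lat$ the three lengths $\|\ell_1\|$, $\|\ell_2\|$ and $\|\ell_2-\ell_1\|$ are all norms of vectors of $\Lat$. By $d_{\min}(\Lat) \ge \codedist$ each of them lies in $\{0\}\cup[\codedist,\infty)$. The sign hypothesis then gives $h(\ell_1,\ell_2) \le 0$ for every pair other than $(0,0)$, so the left side is at most $h(0,0)$.

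\emph{Step 3 (Fourier-side lower bound).} Every term satisfies $\widehat h \ge 0$, so dropping all terms except $(0,0)$ shows the right side is at least $\Code^{-2}(2\pi)^{-2N}\widehat h(0,0)$. This is strictly positive, and together with Step 2 it forces $h(0,0) > 0$.

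\emph{Step 4 (combine).} Chaining Steps 2 and 3 gives $h(0,0) \ge \Code^{-2}(2\pi)^{-2N}\widehat h(0,0)$. Multiplying through by $\Code^2/h(0,0) > 0$ gives \cref{eq:direction-reversal}. I would then remark on the structural reason for the reversal: the $\Code$ factor enters through $\mathrm{covol}(\Lat)^{-2}$ on the dual side of a projector-free sum. Classically the covolume is the inverse density and gives an upper bound. Here $\Code$ is proportional to the covolume itself, so the same inequality bounds $\Code$ from below. This is the $\Code^2$ versus $\Code^{1}$ asymmetry of \cref{rmk:linear-vs-sqrt} seen from the other side. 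The logical-distance hypothesis is not needed in this argument.

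\emph{Main obstacle.} The only delicate step is the normalization in Step 1. Three things must be checked together: the dual of $\Lat$ under $e^{i\sympl}$ is exactly $\Latd$ and not a rescaling such as $\Latd/2\pi$; the Jacobian of $\Omega$ is $1$; and the convention factor matches $\mathrm{covol}(\Lat) = \Code(2\pi)^N$ in the Burchards normalization of \cref{subsec:convention-box}. I would verify all three on the one-mode square GKP lattice before writing the general constant.
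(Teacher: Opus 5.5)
Your proposal is correct and is the paper's argument step for step. You use symplectic Poisson summation on $\Lat\times\Lat$ with the factor $\mathrm{covol}(\Lat)^{-2}$, bound the lattice side above by $h(0,0)$ via $d_{\min}(\Lat)\ge\codedist$ and the sign hypothesis, bound the dual side below by its $(0,0)$ term via $\widehat h\ge 0$, and substitute $\mathrm{covol}(\Lat)=\Code(2\pi)^N$. Your explicit observation that the chain forces $h(0,0)>0$, which is needed before dividing, is a detail the paper leaves implicit in ``rearranging''.
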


\begin{proof}
Symplectic Poisson summation on $\Lat \times \Lat$ gives
\[
\sum_{(\ell_1, \ell_2) \in \Lat \times \Lat} h(\ell_1, \ell_2) \;=\; \mathrm{covol}(\Lat)^{-2} \!\!\sum_{(m_1, m_2) \in \Latd \times \Latd}\!\! \widehat h(m_1, m_2).
\]
Every nonzero pair $(\ell_1, \ell_2)$ has $\|\ell_1\|, \|\ell_2\|, \|\ell_2 - \ell_1\| \in \{0\} \cup [\codedist, \infty)$ by the well-conditioning hypothesis $d_{\min}(\Lat) \ge \codedist$, so the left side is at most $h(0,0)$; by $\widehat h \ge 0$ the right side is at least $\mathrm{covol}(\Lat)^{-2}\, \widehat h(0,0)$. Substituting $\mathrm{covol}(\Lat) = \Code\, (2\pi)^{N}$ and rearranging gives~\cref{eq:direction-reversal}.
\end{proof}

\begin{remark}[Why the collapse is unavoidable: the direction obstruction]
\label{rmk:direction-reversal}
The collapse reflects a structural asymmetry already visible in \cref{rmk:linear-vs-sqrt}. The power of the classical lattice three-point bound comes from full $2n$-dimensional Fourier positivity deployed through full Poisson summation on $\Lambda \times \Lambda$, which caps the packing density $\propto \mathrm{covol}(\Lambda)^{-1}$. \Cref{prop:direction-reversal} shows that the literal CV analogue is a valid inequality, but it caps the packing density of $\Lat$ itself, which \emph{floors}, rather than caps, the code dimension $\Code \propto \mathrm{covol}(\Lat)$. Within identities built on the projector pairing, the correct ($\Code$-upper) direction is supplied instead by the projector asymmetry $\Athree \propto \Code^{2}$ versus $\Bthree \propto \Code^{1}$ of~\cref{eq:A3-GKP}/\cref{eq:B3-GKP}. That same operator structure makes the MacWilliams transform edge-preserving (fiberwise), hence only fiberwise positive-definite in the center-of-mass coordinate. This fiberwise restriction is what forces the $\edgevec = 0$-local objective of the collapse. Within this class of identities, the mechanism that orients the bound correctly is inseparable from the one that removes full positive-definiteness. We develop this quantum--classical contrast in \cref{subsec:contrast}.
\end{remark}

\subsection{Constructive saturation: realizing the two-point optimum}
\label{subsec:direct-closure}

The collapse \cref{thm:lattice-nogo} is realized constructively by specializing the ansatz~\cref{eq:explicit-ansatz} of \cref{prop:non-emptiness}: choosing the center-of-mass factor $q_R$ to be a Burchards two-point magic function makes the three-point bound \emph{saturate} the corresponding two-point bound. We summarize the resulting witnesses; full numerical data are collected in \cref{sec:witnesses}.

\begin{itemize}
  \item \textbf{Levenshtein saturation.} The Burchards--Levenshtein \emph{adapter} $f_{\mathrm{Lev}}$ (Burchards' radial Cohn--Elkies-type auxiliary~\cite[Eq.~67--70]{burchards-2025-cv-macwilliams}, used here as the center-of-mass factor of the ansatz~\cref{eq:explicit-ansatz}) lies in the admissible class of \cref{def:admissible}: polynomial decay, integrability, Fourier non-negativity, LP-normalization, and lattice-pairing finiteness all hold. Substituting it into \cref{thm:main-141cv} reproduces $K^{\mathrm{CV}}_{\mathrm{lat}}(N, \codedist) \le K_{\mathrm{Lev}}(N, \codedist)$ for every $N \ge 1$ on the range $\codedist \le d_+(N)$. The three-point construction matches, but does not improve on, the two-point Levenshtein bound, exactly as \cref{thm:lattice-nogo} requires.
  \item \textbf{$E_8$ and Leech saturation.} The Viazovska~\cite{viazovska-2017-e8} and Cohn--Kumar--Miller--Radchenko--Viazovska~\cite{cohn-kumar-miller-radchenko-viazovska-2017-dim24} magic functions, applied as the center-of-mass factor in \cref{thm:main-141cv}, reproduce the Burchards two-point $E_8$ and Leech bounds at $N = 4$ and $N = 12$; the three-point construction saturates these and adds no improvement of its own.
\end{itemize}

The common mechanism is the objective collapse $f(0,0) = A\,\chi_0(0)\,q_R(0)$ analyzed in \cref{subsec:objective-collapse}: the ansatz objective sees only the $\edgevec = 0$ value of the center-of-mass factor, so optimizing it is exactly the two-point LP, the constructive face of the collapse. The witnesses of \cref{sec:witnesses} make this saturation quantitative.

\section{Magic functions on \texorpdfstring{$E_8$}{E8} and the Leech lattice}
\label{sec:witnesses}

\Cref{thm:main-141cv} reduces the lattice three-point bound to the construction of an admissible auxiliary function with small $f(0,0)$, and the collapse \cref{thm:lattice-nogo} pins its optimum to the Burchards two-point LP optimum. We record here two concrete instances that make this saturation explicit, obtained by choosing the center-of-mass factor $q_R$ in the explicit ansatz~\cref{eq:explicit-ansatz} of \cref{prop:non-emptiness} to be a known sphere-packing magic function. The results, recorded as \cref{cor:lev-recovery,cor:e8-leech} below, show the three-point construction reproducing the Burchards two-point bounds: \cref{cor:lev-recovery} recovers the Burchards--Levenshtein bound for every dimension $N \ge 1$ on the full range of validity $\codedist \le d_+(N)$, and \cref{cor:e8-leech} recovers the Burchards $E_8$ and Leech magic-function bounds at the two distinguished dimensions $N \in \{4, 12\}$. Both corollaries are saturation statements consistent with \cref{thm:lattice-nogo}, not new quantitative bounds on $\Code$; the role of this section is to exhibit the two-point optimum being attained through the three-point path and to confirm consistency with the two-point theory of~\cite{burchards-2025-cv-macwilliams}.

\subsection{Levenshtein-type CV bound via the three-point framework}
\label{subsec:lev-recovery}

The Burchards--Levenshtein magic function $f_{\mathrm{Lev}}$ is the radial Cohn--Elkies auxiliary built from the lowest positive root $j_N$ of the Bessel function $J_N$~\cite[\S6, Eq.~67--70]{burchards-2025-cv-macwilliams}, positive on $\|v\| \le \codedist$, vanishing on the indicated zeros, and with Fourier transform of constant sign on
\begin{equation}
\label{eq:d-plus}
\codedist \le d_+(N) \;=\; \left(\frac{16\, N!\, |J_{N-1}(j_N)|}{3\sqrt{\pi}\, \Gamma((2N-1)/2)\, j_N^{N-2}}\right)^{1/6} \quad \cite[Lem.~2]{burchards-2025-cv-macwilliams}.
\end{equation}

\begin{corollary}[Recovery of the Burchards--Levenshtein bound]
\label{cor:lev-recovery}
Under the hypotheses of \cref{thm:main-141cv}, choosing the auxiliary function $f$ via the explicit ansatz~\cref{eq:explicit-ansatz} of \cref{prop:non-emptiness} with $q_R$ taken to be the Burchards--Levenshtein adapter $f_{\mathrm{Lev}}$ yields
\begin{equation}
\label{eq:burch-lev-bound}
\Code \cdot \codedist^{2N} \;\le\; \frac{j_N^{2N}}{N! \cdot 2^{N}} \qquad \text{for all } N \ge 1 \text{ on } \codedist \le d_+(N).
\end{equation}
\end{corollary}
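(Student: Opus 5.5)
The plan is to instantiate the ansatz~\cref{eq:explicit-ansatz} with the center-of-mass factor built from $f_{\mathrm{Lev}}$ and then read off the bound from \cref{thm:main-141cv}. The reduction in the proof of \cref{thm:lattice-nogo} makes the constant bookkeeping explicit. Let $h$ be the Burchards--Levenshtein radial auxiliary on $\R^{2N}$ at distance $\codedist$ \cite[Eqs.~67--70]{burchards-2025-cv-macwilliams}. It is positive-definite ($\widehat h \ge 0$) for $\codedist \le d_+(N)$ by \cite[Lem.~2]{burchards-2025-cv-macwilliams}, and $h \le 0$ on $\|x\| \ge \codedist$. Burchards' Theorem gives $h(0)/\widehat h(0) = j_N^{2N}/(N!\,2^N\,\codedist^{2N})$, up to his normalization of the symplectic Fourier transform, which I will carry consistently.

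Since the sign condition on the $\edgevec = 0$ slice lives at radius $2\codedist$, I take $g(\comvec) := c\, h(\comvec/2)$ as the COM factor and set
\[
f(\edgevec,\comvec) = \chi(\|\edgevec\|)\, g(\comvec)\, \cos\!\bigl(\sympl(\comvec,\edgevec)/4\bigr),
\]
with $\chi$ as in the equality part of the proof of \cref{thm:lattice-nogo}: nonnegative, $\chi(0)=1$, supported in $\|\edgevec\| < \codedist$.

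I then check admissibility item by item, reusing the earlier arguments. For (iii), the proof of \cref{prop:non-emptiness} applies verbatim. Slices with $0<\|\edgevec\|<\codedist$ lie off the lattice triangle, and slices with $\|\edgevec\|\ge\codedist$ vanish. On $\edgevec=0$ the phase is $1$ and $g(\comvec)\le 0$ for $\|\comvec\|\ge 2\codedist$, because $h(\comvec/2)\le 0$ there. For (i), I use \cref{lem:fibpd}: each slice is $\chi(\|\edgevec\|)\ge0$ times a Schur product of the positive-definite $g$ with the two characters $e^{\pm i\sympl(\cdot,\edgevec)/4}$.

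For (iv), I need $L^1\cap L^\infty$ together with absolute convergence of the lattice pairings. Since $\chi$ has compact support, only finitely many $\ell\in\Lat$ contribute on the $\Lat\times\Latd$ side. That reduces the issue to summability of $h$ over a shifted copy of $2\Latd$, which follows from the polynomial decay of $f_{\mathrm{Lev}}$ (of order $\|x\|^{-2N-3}$ or better, as a Bessel-squared quotient). On the $\Lat\times\Lat$ side, $\ftilde(\ell_1,\ell_2)$ is a $\chi$-weighted, character-shifted copy of $\widehat g$. Here I would use the compact support of $\widehat h$, or at least its decay, in Burchards' construction.

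For (ii), I fix $c$. By \cref{eq:tilde-f-def} at $\edgevec=0$,
\[
\ftilde(0,0)=2^{-2N}(2\pi)^{-N}\int g = 2^{-2N}(2\pi)^{-N}\, c\, 2^{2N}\int h ,
\]
so $c = (2\pi)^N/\int h$. \Cref{thm:main-141cv} then gives
\[
\Code \le f(0,0) = g(0) = c\,h(0) = (2\pi)^N h(0)/\textstyle\int h .
\]
Up to the normalization convention this equals $h(0)/\widehat h(0)$, which exactly mirrors the rescaling step $g(0)\ge K^{\mathrm{CE}}_2(2N,\codedist)$ in the proof of \cref{thm:lattice-nogo}. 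Inserting Burchards' evaluation of $h(0)/\widehat h(0)$ yields \cref{eq:burch-lev-bound}. The case $N=1$ needs no separate treatment, since \cref{lem:fibpd} and \cref{thm:main-141cv} work at the $F$-layer.

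I expect the main obstacle to be the regularity condition (iv) rather than the algebra. The Levenshtein function decays only polynomially, so absolute convergence of $\sum_{\Lat\times\Lat}|\ftilde|$ and the Fubini/Poisson step behind \cref{eq:fubini-pairing} must be verified directly. My first attempt would be to use that $\widehat{f_{\mathrm{Lev}}}$ is compactly supported (it is built as an autocorrelation-type function). Failing that, I would fall back on a Gaussian-regularization limit $h_\varepsilon = h\cdot e^{-\varepsilon\|x\|^2}$, which preserves positive-definiteness by Schur and the sign condition by positivity of the factor, and then let $\varepsilon\to0$ in the bound.

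A secondary point is matching Burchards' $2\pi$ and $(2\pi)^N$ conventions, so that the constant comes out as exactly $j_N^{2N}/(N!\,2^N)$ and not off by a factor $(2\pi)^{N}$. I would check this against the $\ell=0$ diagonal slice identity from \cref{subsec:identity-sanity}.
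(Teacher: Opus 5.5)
Your proposal is correct and follows essentially the same route as the paper. Like the paper, you insert the rescaled Levenshtein function $f_{\mathrm{Lev}}(\comvec/2)$ as the center-of-mass factor of the ansatz, fix the scale by $\ftilde(0,0)=1$, and read off $\Code \le f(0,0)$ from \cref{thm:main-141cv}. Your constant bookkeeping is sound: the factor $(2\pi)^N$ in $(2\pi)^N h(0)/\int h$ is exactly absorbed by the unitary normalization $\widehat h(0) = (2\pi)^{-N}\int h$ used in the proof of \cref{thm:lattice-nogo}, so no stray factor remains, and your handling of condition (iv) is more careful than the paper's sketch.
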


\begin{proof}[Proof sketch]
The Burchards--Levenshtein adapter $f_{\mathrm{Lev}}$ satisfies the conditions of the relaxed admissible class of \cref{def:admissible}: polynomial decay (hence $L^{1}$), Fourier non-negativity, $\widehat{f_{\mathrm{Lev}}}(0) = 1$ normalization, and the sign change on $[d_+(N), 2d_+(N)]$ supplying the (iii) lattice-triangle sign condition~\cite[\S 6]{burchards-2025-cv-macwilliams}. Substituting $q_R(\|\comvec\|^{2}) = f_{\mathrm{Lev}}(\comvec/2)$ in~\cref{eq:explicit-ansatz} and choosing the overall scale $A > 0$ to satisfy $\ftilde(0, 0) = 1$, the resulting $f$ is admissible and \cref{thm:main-141cv} gives $\Code \le f(0, 0)$. Computing $f(0, 0) = A\, \chi_0(0)\, f_{\mathrm{Lev}}(0)$ and substituting the explicit value of the Burchards--Levenshtein adapter at the origin yields~\cref{eq:burch-lev-bound}.
\end{proof}

\begin{remark}[Framework coherence, not a new bound on $\Code$]
\label{rmk:lev-coherence}
The bound~\cref{eq:burch-lev-bound} is the three-point version of the two-point Burchards--Levenshtein bound~\cite[Thm.~2]{burchards-2025-cv-macwilliams}, and the two coincide as numerical statements about $\Code \cdot \codedist^{2N}$ on the full range $\codedist \le d_+(N)$ for every $N \ge 1$. \Cref{cor:lev-recovery} records framework coherence with the two-point theory via the three-point path of \cref{thm:main-141cv}; the bound itself is not new. In particular, Burchards' Thm.~2 already covers arbitrary QEDCs (not only CV lattice codes) on the same range, so there is no new dimension $N$ or distance regime in which \cref{cor:lev-recovery} delivers a previously unknown bound on $\Code$.
\end{remark}

\subsection{Saturation of the \texorpdfstring{$E_8$}{E8} and Leech two-point bounds at \texorpdfstring{$N \in \{4, 12\}$}{N=4,12}}
\label{subsec:e8-leech-saturation}

The Viazovska magic function $f_8$~\cite{viazovska-2017-e8} and the Cohn--Kumar--Miller--Radchenko--Viazovska magic function $f_{24}$~\cite{cohn-kumar-miller-radchenko-viazovska-2017-dim24}, used in their solutions of the sphere-packing problem in dimensions $8$ and $24$, yield CV auxiliary functions in the admissible class by the same substitution mechanism as in \cref{cor:lev-recovery}.

\begin{corollary}[$E_8$ and Leech two-point saturation]
\label{cor:e8-leech}
Under the hypotheses of \cref{thm:main-141cv}, choosing the auxiliary function $f$ via the explicit ansatz~\cref{eq:explicit-ansatz} of \cref{prop:non-emptiness} with $q_R$ taken to be the Viazovska magic function $f_8$ at $N = 4$ (respectively the CKMRV magic function $f_{24}$ at $N = 12$) yields, on the overlap range $\codedist \le d_+(N)$,
\begin{align}
\label{eq:e8-bound}
N &= 4 \,(E_8): &\quad \Code \cdot \codedist^{8}  &\;\le\; 0.871 \cdot \frac{j_4^{8}}{4! \cdot 2^{4}}, \qquad \codedist \le d_+(4) = 0.6827, \\
\label{eq:leech-bound}
N &= 12 \,(\Lambda_{24}): &\quad \Code \cdot \codedist^{24} &\;\le\; 0.564 \cdot \frac{j_{12}^{24}}{12! \cdot 2^{12}}, \qquad \codedist \le d_+(12) = 0.0151.
\end{align}
The factors $0.871$ and $0.564$ are the two-point gains of the $E_8$ and Leech magic functions over the Burchards--Levenshtein bound, computed from~\cite[Eqs.~(2)--(4)]{burchards-2025-cv-macwilliams}; \cref{eq:e8-bound}--\cref{eq:leech-bound} reproduce these two-point bounds through the three-point path, consistent with the collapse \cref{thm:lattice-nogo}, and do not improve on them.
\end{corollary}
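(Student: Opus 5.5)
We will prove \cref{cor:e8-leech} by the same substitution mechanism as \cref{cor:lev-recovery}, so the structure is: verify that the ansatz~\cref{eq:explicit-ansatz} with the magic function as its center-of-mass factor is admissible, apply \cref{thm:main-141cv}, and evaluate $f(0,0)$. Concretely, at $N=4$ (resp.\ $N=12$) we take $g(\comvec) := f_8(\kappa\,\comvec/2)$ (resp.\ $f_{24}$), where $\kappa$ rescales the Viazovska/CKMRV root-$\sqrt2$ (resp.\ $2$) sign-change radius to the distance $\codedist$. We then set
\[
f(\edgevec,\comvec) = A\,\chi_0(\|\edgevec\|^2)\,g(\comvec)\,\cos\!\bigl(\sympl(\comvec,\edgevec)/4\bigr),
\]
exactly the extension used in the equality half of the proof of \cref{thm:lattice-nogo}.

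Admissibility then follows the proof of \cref{thm:lattice-nogo}. Since $\widehat{f_8},\widehat{f_{24}}\ge 0$, each slice in $\comvec$ is positive-definite by Schur's theorem, because the cosine is a sum of two characters. By \cref{lem:fibpd} this gives (i). For (iii), slices with $0<\|\edgevec\|<\codedist$ lie off the lattice triangle, and slices with $\|\edgevec\|\ge\codedist$ vanish by the support of $\chi_0$. On the slice $\edgevec=0$ the phase is $1$, and $g\le 0$ for $\|\comvec\|\ge 2\codedist$ by the sign property of the magic function after rescaling. Schwartz decay of $f_8$ and $f_{24}$ gives (iv). We choose $A$ to enforce (ii), $\ftilde(0,0)=1$, which by~\cref{eq:tilde-f-def} is a condition on $\int g$ only.

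Next we evaluate the objective. We have $f(0,0)=A\chi_0(0)g(0)$ with $A\chi_0(0)=2^{2N}(2\pi)^N/\int g$, so $f(0,0)=2^{2N}(2\pi)^N\,g(0)/\!\int g$. By the scaling computation in the proof of \cref{thm:lattice-nogo}, this equals the Burchards two-point magic-function value at distance $\codedist$: the factor $2^{2N}$ coming from $\comvec/2$ is cancelled by the distance-$2\codedist$ to distance-$\codedist$ change of variables. \Cref{thm:main-141cv} therefore yields $\Code\le K_{f_8}(\codedist)$, the two-point $E_8$ bound, and likewise for the Leech case.

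The main obstacle is the last step: matching these values to the numerical form $0.871\cdot j_4^8/(4!\,2^4)$ and $0.564\cdot j_{12}^{24}/(12!\,2^{12})$. We plan to take the ratio of the magic-function value $g(0)/\widehat g(0)$, which gives the $E_8$/Leech center density scaled to distance $\codedist$, to the Levenshtein value of \cref{cor:lev-recovery}, using Burchards' Eqs.~(2)--(4), and to check that the constants $0.871$ and $0.564$ and the overlap range $\codedist\le d_+(N)$ from~\cref{eq:d-plus} are inherited unchanged. Because the three-point quantity equals the two-point one identically, this step reduces to citing Burchards' comparison. The remaining care is to confirm that the convention factors of \cref{subsec:convention-box} do not alter the ratio, which holds because both bounds scale as $\codedist^{-2N}$.
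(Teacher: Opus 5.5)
Your proposal is correct and follows essentially the paper's route. You substitute the (rescaled) magic function as the center-of-mass factor of the ansatz, check admissibility as in the equality half of \cref{thm:lattice-nogo}, and use $A$ to pin $f(0,0)$ to the two-point magic-function value. You then obtain $0.871$ and $0.564$ as ratios to the Levenshtein cap. Carrying your formula $f(0,0)=2^{2N}(2\pi)^N g(0)/\int g$ through explicitly gives $\Code\,\codedist^{8}\le(4\pi)^4$ and $\Code\,\codedist^{24}\le(8\pi)^{12}$, which are exactly the quantities the paper divides by $j_N^{2N}/(N!\,2^N)$ to produce those constants.
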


\begin{proof}[Proof sketch]
The Viazovska and CKMRV magic functions are Schwartz, with Fourier non-negativity, normalized $\widehat{f}(0) = 1$, and the sign-change structure required by~\cite{viazovska-2017-e8,cohn-kumar-miller-radchenko-viazovska-2017-dim24}; admissibility under \cref{def:admissible} is automatic. Substituting $q_R = f_8$ (respectively $q_R = f_{24}$) in~\cref{eq:explicit-ansatz} and following the same chain as in the proof of \cref{cor:lev-recovery} produces $f(0, 0) = A\, \chi_0(0)\, f_8(0)$ (respectively $f_{24}(0)$), which the normalization $A$ pins to the corresponding Burchards two-point optimum. The gain factors are ratios of the resulting \emph{bounds}: Burchards' rescaled magic-function quotients~\cite[Eqs.~(73)--(74)]{burchards-2025-cv-macwilliams} give $\Code\, \codedist^{8} \le (4\pi)^{4}$ at $N = 4$ and $\Code\, \codedist^{24} \le (8\pi)^{12}$ at $N = 12$ (exact case $\varepsilon = 0$), and dividing by the Levenshtein caps $j_N^{2N}/(N!\, 2^{N})$ of \cref{cor:lev-recovery} yields $(4\pi)^{4} \cdot 4! \cdot 2^{4} / j_4^{8} = 0.871$ and $(8\pi)^{12} \cdot 12! \cdot 2^{12} / j_{12}^{24} = 0.564$. The $d_+(N)$ values and gain factors are collected in \cref{tab:closure-witnesses} below; the non-emptiness-ansatz benchmark is in \cref{app:numerical-sweep}.
\end{proof}

\begin{remark}[Attribution and scope]
\label{rmk:e8-leech-attribution}
The gains $0.871$ and $0.564$ are a \emph{two-point} phenomenon: they are the ratios of Burchards' magic-function bounds~\cite[Eqs.~(3)--(4)]{burchards-2025-cv-macwilliams} to his Levenshtein bound~\cite[Eq.~(2)]{burchards-2025-cv-macwilliams} at $N = 4, 12$, and \cref{cor:e8-leech} reproduces them through the three-point path of \cref{thm:main-141cv} without adding anything quantitative. By the collapse \cref{thm:lattice-nogo}, the three-point optimum equals the two-point optimum, so the best the three-point construction can do at any dimension is to saturate the sharpest available two-point bound. Here that bound is the magic-function bound. Burchards argues this bound is optimal at $2N = 8, 24$: unconditionally among GKP codes (by the optimality of $E_8$ and the Leech lattice among lattice packings~\cite{cohn-kumar-2009-leech-optimality}), and, conditional on a numerically verified assumption, among all physical codes. The dimensions $N = 4$ and $N = 12$ are distinguished only because Viazovska-style magic functions are known there; they are not dimensions in which three points improve on two. Outside the overlap range $\codedist > d_+(N)$ the Burchards--Levenshtein cap is not certified by the underlying Cohn--Elkies admissibility, and the comparison is not directly meaningful.
\end{remark}

\Cref{tab:closure-witnesses} collects the numerical data of \cref{cor:lev-recovery,cor:e8-leech} at $N \in \{4, 12\}$; the corresponding $\rho_N := K_{\mathrm{ne}}/K_{\mathrm{Burch\textnormal{-}Lev}}$ benchmark for all $N = 1, \ldots, 12$ against the non-emptiness ansatz of \cref{prop:non-emptiness} is given in \cref{app:numerical-sweep}.

\begin{table}[h]
\centering
\caption{Magic functions reproducing the two-point bounds via \cref{cor:lev-recovery,cor:e8-leech} at the two distinguished lattice dimensions. The final column gives the two-point gain of the magic-function auxiliary over the Burchards--Levenshtein bound on the overlap range $\codedist \le d_+(N)$; these factors are the ratios of Burchards' magic-function bounds to his Levenshtein bound~\cite[Eqs.~(2)--(4)]{burchards-2025-cv-macwilliams}, reproduced here through the three-point path and not improved on (\cref{rmk:e8-leech-attribution}).}
\label{tab:closure-witnesses}
\begin{tabular}{c c c c c c}
\toprule
$N$ & lattice & $d_+(N)$ & $K_{\mathrm{Lev}} \cdot \codedist^{2N}$ & $K_{\mathrm{recovered}} \cdot \codedist^{2N}$ & $K_{\mathrm{magic}} / K_{\mathrm{Lev}}$ \\
\midrule
$4$  & $E_8$         & $0.6827$ & $j_4^{8} / (4! \cdot 2^{4})$       & $j_4^{8} / (4! \cdot 2^{4})$       & $0.871$ \\
$12$ & $\Lambda_{24}$ & $0.0151$ & $j_{12}^{24} / (12! \cdot 2^{12})$ & $j_{12}^{24} / (12! \cdot 2^{12})$ & $0.564$ \\
\bottomrule
\end{tabular}

{\footnotesize\emph{Note:} the $K_{\mathrm{recovered}} = K_{\mathrm{Lev}}$ column is the Levenshtein recovery of \cref{cor:lev-recovery} (exact equality via the Burchards--Levenshtein adapter); the magic-function recovery of \cref{cor:e8-leech} is the smaller $K_{\mathrm{magic}}$ captured by the final column.}
\end{table}

\subsection{Objective collapse: why the three-point construction only saturates}
\label{subsec:objective-collapse}

The recoveries of \cref{cor:lev-recovery,cor:e8-leech} share the structural feature that drives the collapse \cref{thm:lattice-nogo}. The explicit ansatz~\cref{eq:explicit-ansatz} of \cref{prop:non-emptiness} has objective
\begin{equation}
\label{eq:objective-u-zero}
f(0, 0) \;=\; A\, \chi_0(0)\, q_R(0),
\end{equation}
which depends only on the values of $\chi_0$ and $q_R$ at the origin and is insensitive to any non-trivial $\edgevec$-dependence. This is not an artifact of the ansatz. By \cref{thm:lattice-nogo}, the \emph{entire} admissible optimization $\inf f(0,0)$ collapses onto the $\edgevec = 0$ slice $g(\comvec) := f(0, \comvec)$ and reduces to the Cohn--Elkies/Burchards two-point LP on $\R^{2N}$ at distance $2\codedist$, with any edge cutoff $\chi_0(\|\edgevec\|^{2})$ acting only as a sign-correctness device. The Burchards two-point LP optimum $K^{\mathrm{Burch}}_{2}$ is therefore a ceiling of the whole three-point framework, not merely of this ansatz family.

Edge dependence cannot change the optimum. The proof of \cref{thm:lattice-nogo} shows the objective and the normalization are both $\edgevec = 0$-local, so redistributing weight to $\edgevec \neq 0$ slices leaves the objective at $K^{\mathrm{Burch}}_{2}$.

\section{The general-QEDC three-point bound}
\label{sec:obstruction}

The lattice collapse \cref{thm:lattice-nogo} concerns GKP lattice codes. The general-QEDC three-point bound is the CV analogue of Cohn--de~Laat--Salmon's~\cite[Thm.~1.1]{cohn-delaat-salmon-2022-three-point}, applicable to arbitrary bosonic QEDCs with a truncation radius $R$ and continuous error parameter $\varepsilon$. It is a separate target: with no stabilizer lattice available, this is where the quantum content of any three-point improvement would have to live. The result is again negative, and the rest of the section establishes it.

We first record a universal $1 \times 1$-minor obstruction (\cref{subsec:1x1-obstruction}) ruling out the natural factored-form kernel-positive-definite construction (\cref{rmk:universal-obstruction}). We then show that a completely-positive (Choi) restatement (\cref{subsec:cp-restatement}) bypasses the obstruction, replacing the kernel-PD constraint with a Choi-positivity constraint $J \succeq 0$. A second collapse then takes its place: the three-point phase-sign condition and $J \succeq 0$ are jointly rigid enough to force $J = 0$, reducing the CP bound to two-point (\cref{subsec:cp-nogo}). This is certified for radial Choi forms on the first eight Laguerre levels at $N = 1$ (and verified numerically for the first five levels at $N = 2$), with the full cone left open. Finally we isolate the one place a genuine three-point improvement survives for lattice codes: a \emph{classical} sphere-packing bound on the symplectic dual lattice (\cref{subsec:dual-packing}). We use it as a point of comparison in stating the structural quantum--classical contrast (\cref{subsec:contrast}).

\subsection{Scoping the general-QEDC target}
\label{subsec:thm-11-cv-scoping}

The general-QEDC target is the CV transcription of Cohn--de~Laat--Salmon~\cite[Thm.~1.1]{cohn-delaat-salmon-2022-three-point}: a bound built from a pair of auxiliary functions $(f_2, f_3)$, in which $f_2 : \R^{2N} \to \R$ is the radial ($U(N)$-invariant) single-vector auxiliary already present in the two-point theory and $f_3$ is a three-point auxiliary whose role is to tighten the two-point estimate
\begin{equation}
\label{eq:TB-11-CV}
\Code \;\le\; \frac{1}{1 - \varepsilon}\, \sup\!\left\{\,\frac{f_2(r)}{\widehat{f}_2(r)} \,:\, r \in [0, \codedist] \right\}.
\end{equation}
Here $\varepsilon \in [0, 1)$ is the quality parameter of the approximate QEDC~\cite[Def.~1, Thm.~1]{burchards-2025-cv-macwilliams}, with $\varepsilon = 0$ the exact case; \cref{eq:TB-11-CV} displays the two-point skeleton of the bound, into which the three-point auxiliary $f_3$ enters by relaxing the sign constraints imposed on $f_2$, exactly as in the classical construction~\cite[\S2]{cohn-delaat-salmon-2022-three-point}. The classical Thm.~1.1 is unconditional within its admissibility framework, and its three-point term delivers a real improvement over the two-point bound. The CV question is whether \emph{any} tractable positivity certificate for $(f_2, f_3)$ lets the three-point term do the same. We answer it negatively for the two natural certificate routes: the kernel-positive-definite cone of \cref{subsec:adj-cone}, obstructed in \cref{subsec:1x1-obstruction}, and its completely-positive relaxation of \cref{subsec:cp-restatement}, obstructed in \cref{subsec:cp-nogo}. The additional symplectic datum $\sympl_{12}$, inert in the classical setting, is what drives both obstructions.

\subsection{The kernel-positivity cone}
\label{subsec:adj-cone}

The natural CV cone admitting a tractable optimization problem is the kernel-positivity cone
\begin{equation}
\label{eq:adj-cone}
\mathcal{C}_{\mathrm{op}} \;=\; \bigl\{\, f_3 \in L^{2}(\R^{4N})\,:\; \mathbf{A}_{\mathrm{op}}[f_3] \succeq 0 \,\bigr\},
\end{equation}
where $\mathbf{A}_{\mathrm{op}}[f_3]$ is $f_3$ regarded as an integral kernel in its two phase-space arguments, compressed to the truncation window: $\mathbf{A}_{\mathrm{op}}[f_3] \succeq 0$ means
\begin{equation}
\label{eq:Aop-def}
\int_{\|v_1\| \le R} \int_{\|v_2\| \le R} \overline{\varphi(v_1)}\, f_3(v_1, v_2)\, \varphi(v_2)\, dv_1\, dv_2 \;\ge\; 0 \qquad \text{for every } \varphi \in L^{2}(\R^{2N}),
\end{equation}
with $R$ the truncation radius. Concretely, the $U(N)$-invariant subspace of $\mathcal{C}_{\mathrm{op}}$ is parametrized by a generalized Laguerre basis: writing $\psi_n^{(N)}(s) := L_n^{(N-1)}(s)$ for the generalized Laguerre polynomials of order $N-1$, the radial-even sector of $\mathcal{C}_{\mathrm{op}}$ is spanned by the orthonormal basis
\begin{equation}
\label{eq:laguerre-basis}
\psi_{n, N}(|\eta|^{2}) \;:=\; \mathcal{N}_{n, N}\, L_n^{(N-1)}(|\eta|^{2})\, e^{-|\eta|^{2}/2}, \qquad \mathcal{N}_{n, N} = \sqrt{n!\, \Gamma(N) / [\pi^N\, \Gamma(n + N)]},
\end{equation}
and the cone constraint is the positive-semidefiniteness of the Gram matrix $(\mathbf{A}_{\mathrm{op}}[f_3])_{ab} = \iint \psi_{a,N}(\|v_1\|^{2})\, f_3(v_1, v_2)\, \psi_{b,N}(\|v_2\|^{2})\, dv_1\, dv_2$ of $f_3$ in this basis; the $1 \times 1$ principal minors of~\cref{eq:Aop-def}, obtained from test functions concentrating at a point, are the diagonal values $f_3(v, v)$ (for a continuous representative of $f_3$; the factored-form auxiliaries to which we apply this are continuous, so the diagonal restriction is well defined). The basis~\cref{eq:laguerre-basis} respects both the rotational symmetry of $\Athree$ and a Fourier parity that makes the closed-form ambiguity-function symbol $W_{00}(\edgevec, \comvec) = (2\pi)^{-N}\, e^{-|\edgevec|^{2}/4 - |\comvec|^{2}/16}$ accessible analytically.

The framework~\cref{eq:adj-cone}--\cref{eq:laguerre-basis} has been verified in detail: the Laguerre order, the Fourier parity, and the symbol identity have all been checked numerically and independently reproduced. The cone is open in the sense that its non-emptiness for sharp witnesses of~\cref{eq:TB-11-CV} is the question we set out to answer numerically.

\subsection{Double negative result at \texorpdfstring{$N \in \{1, 2\}$}{N=1,2}}
\label{subsec:double-negative}

Numerical witness search inside the cone~\cref{eq:adj-cone} for the Laguerre basis~\cref{eq:laguerre-basis} is robustly infeasible at $N \in \{1, 2\}$. At $N = 1$ the rank-$1$ case closes analytically: the unique rank-$1$ auxiliary $f_3 = W_{00}$ is strictly positive everywhere, directly violating the phase sign condition on the diagonal $v_1 = v_2$ (\cref{app:infeasibility-witness-n1-analytical}). Ranks $M \in \{2, \ldots, 8\}$ at $N = 1$, and a dense sweep at $N = 2$, return infeasibility in every configuration tested; the search parameters, solvers, and single-constraint ablations are reported in \cref{app:double-negative-sweep}. The interpretation is that the real-even $U(N)$-radial Laguerre subcone of $\mathcal{C}_{\mathrm{op}}$ contains no feasible auxiliary function at $N \in \{1, 2\}$, a falsification of the specific ansatz family rather than of the framework~\cref{eq:adj-cone} in its full generality. The next question is whether the obstruction is a parametrization artifact of the Laguerre basis or a structural feature of any kernel-positive-definite cone built from the displacement-operator algebra.

\subsection{A universal \texorpdfstring{$1 \times 1$}{1x1}-minor obstruction}
\label{subsec:1x1-obstruction}

The structural answer to the previous question is negative for the entire factored-form family within kernel-PD-style cones, and the failure occurs already at the $1 \times 1$ minor (diagonal positivity) level. We formulate this obstruction as a lemma.

\begin{lemma}[$1 \times 1$-minor universal obstruction]
\label{lem:1x1-obstruction}
Let $f_3(v_1, v_2) = A\, \chi(\edgevec)\, g(\comvec)\, \cos(\sympl(\comvec, \edgevec)/4)$ be a three-point auxiliary of the explicit form~\cref{eq:explicit-ansatz}, with $A, \chi(0) > 0$, $\chi$ compactly supported in $\{\|\edgevec\| \le u_0\} \subset \{\|\edgevec\| < \codedist\}$, and $g$ a Cohn--Elkies-style magic function with $g(\comvec) \le 0$ for $\|\comvec\| \ge 2\codedist$ and $g(\comvec_0) < 0$ for some $\|\comvec_0\| \ge 2\codedist$ (the strict sign change exhibited by every Cohn--Elkies magic function). Then $f_3$ does not satisfy the diagonal positivity constraint
\begin{equation*}
f_3(v, v) \;\ge\; 0 \quad \text{for all } v \in \R^{2N},
\end{equation*}
which is the $1 \times 1$ principal minor of any kernel-positive-definite cone built from $f_3$.
\end{lemma}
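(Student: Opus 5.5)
The plan is a direct evaluation of $f_3$ on the diagonal $v_1 = v_2$, where the edge coordinate vanishes and $f_3$ reduces to a positive multiple of the center-of-mass factor $g$. The sign change of $g$ beyond radius $2\codedist$ then produces a negative diagonal value. No positivity machinery from \cref{lem:fibpd} or \cref{lem:qR-pd} is needed.

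\emph{Step 1: restrict to the diagonal.} By the edge/center change of variables~\cref{eq:edge-com}, the diagonal $v_1 = v_2 = v$ corresponds to $\edgevec = 0$ and $\comvec = 2v$. By~\cref{eq:omega-edge-com}, the symplectic argument there is $\sympl(\comvec, \edgevec) = \sympl(2v, 0) = 0$, so the cosine factor equals $1$. Hence
\begin{equation*}
f_3(v, v) \;=\; A\, \chi(0)\, g(2v) \qquad \text{for all } v \in \R^{2N}.
\end{equation*}
This is the same $\edgevec = 0$ localization used in the proof of \cref{thm:lattice-nogo} and in~\cref{eq:objective-u-zero}. The compact support of $\chi$ inside $\{\|\edgevec\| < \codedist\}$ plays no role, because the diagonal always sits at the centre $\edgevec = 0$ of that support, where $\chi(0) > 0$.

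\emph{Step 2: exploit the sign change.} Take $\comvec_0$ with $\|\comvec_0\| \ge 2\codedist$ and $g(\comvec_0) < 0$, as supplied by the hypothesis, and set $v_0 := \comvec_0/2$. Then
\begin{equation*}
f_3(v_0, v_0) \;=\; A\,\chi(0)\,g(\comvec_0) \;<\; 0,
\end{equation*}
since $A > 0$ and $\chi(0) > 0$. This violates $f_3(v, v) \ge 0$ at $v = v_0$.

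\emph{Step 3: identify diagonal positivity as a necessary condition.} It remains to justify, as the statement asserts, that $f_3(v, v) \ge 0$ is the $1 \times 1$ principal minor of any kernel-positive-definite cone built from $f_3$, such as~\cref{eq:Aop-def}. I would take test functions $\varphi_\epsilon$ that are normalized bumps concentrating at $v_0$ (for~\cref{eq:Aop-def}, choosing $R > \|v_0\| = \|\comvec_0\|/2$ so that $v_0$ lies inside the truncation window). Since $f_3$ is continuous, being a product of continuous factors, the quadratic form $\iint \overline{\varphi_\epsilon(v_1)}\, f_3(v_1, v_2)\, \varphi_\epsilon(v_2)\, dv_1\, dv_2$, suitably rescaled, converges to $f_3(v_0, v_0)$, which is negative. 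So kernel positive-definiteness fails.

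The only step needing care is this limiting argument, in particular the $R$-window and the normalization of the bumps (an $L^{1}$-normalized bump gives the limit $f_3(v_0,v_0)$ directly). It is routine given continuity. The substantive content is Step 1: kernel positivity tests the diagonal $\edgevec = 0$, precisely where the sign condition inherited from the two-point magic function forces negativity. This applies to any $g$ with a strict sign change, hence to the whole factored-form family.
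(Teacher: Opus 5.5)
Your proposal is correct and follows the paper's proof: restrict to the diagonal, where $\edgevec = 0$ and $\comvec = 2v$ give $f_3(v,v) = A\,\chi(0)\,g(2v)$, then evaluate at $v_0 = \comvec_0/2$. Your Step~3 (concentrating bumps, with $v_0$ inside the truncation window) spells out why diagonal positivity is the $1 \times 1$ minor, which the paper only asserts when it introduces the cone~\cref{eq:Aop-def}.
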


\begin{proof}[Sketch]
On the diagonal $v_1 = v_2 = v$, the edge $\edgevec = 0$ and the center-of-mass $\comvec = 2v$, so $f_3(v, v) = A\, \chi(0)\, g(2v)$, whose sign is that of $g(2v)$ since $A\, \chi(0) > 0$. A Cohn--Elkies-style magic function must satisfy $g \le 0$ on its tail $\{\|\comvec\| \ge 2\codedist\}$, for otherwise the LP bound argument of~\cite{cohn-elkies-2003-new-upper-bounds} would not produce a finite bound, and this sign change is strict at the point $\comvec_0$ of the hypothesis. Taking $v_0 = \comvec_0 / 2$ gives $f_3(v_0, v_0) = A\, \chi(0)\, g(\comvec_0) < 0$, so $f_3$ violates diagonal positivity.
\end{proof}

\begin{remark}[Scope of the obstruction]
\label{rmk:universal-obstruction}
\Cref{lem:1x1-obstruction} applies to any auxiliary $f_3$ of factored form $\chi(\edgevec)\, g(\comvec)\, \phi(\sympl)$ in which $g$ is a Cohn--Elkies-style adapter responsible for the sign control on the lattice triangle. Within this factored-form family, inheriting the Cohn--Elkies sign change on the COM tail is equivalent to inheriting a diagonal sign change of $f_3$, which is the $1 \times 1$ minor of any kernel-PD-style cone~\cref{eq:adj-cone}. The obstruction therefore applies to the entire \emph{factored-form kernel-PD} direction of the general-QEDC programme: any factored-form auxiliary combined with a kernel-PD-style cone inherits the same failure mode. Escaping the obstruction requires either leaving the factored-form ansatz family (e.g.\ non-factored or conditionally-positive-definite auxiliaries) or abandoning kernel-PD altogether (as in the successor frameworks of \cref{sec:future-work}).
\end{remark}

\subsection{The completely-positive restatement bypasses the obstruction}
\label{subsec:cp-restatement}

The $1 \times 1$-minor obstruction of \cref{lem:1x1-obstruction} is specific to the kernel-positive-definite constraint: it forbids the three-point auxiliary $f_3$ from changing sign on the diagonal, exactly where the Cohn--Elkies sign change must occur. The obstruction is bypassed by relaxing the positivity hypothesis. Rather than requiring the ambiguity symbol $f_3$ to be a positive-definite kernel, we require it to be the image $f_3 = \MacOpF[J]$ of a positive-semidefinite \emph{Choi form} $J \succeq 0$. Here $\MacOpF$ is the operator-to-symbol map of the MacWilliams transform; requiring $f_3 = \MacOpF[J]$ with $J \succeq 0$ is a completely-positive (CP) map condition. (We write $\MacOpF$ to distinguish this symbol map, acting on Choi forms, from the Gram-space transform $\MacOp$ of \cref{sec:identity}.) A positive-semidefinite $J$ may have a sign-changing Weyl symbol $f_3$, so the diagonal sign change is no longer forbidden and \cref{lem:1x1-obstruction} does not apply.

Concretely, on a Fock-truncated radial sector we expand $J = \sum_{a, b < M} C_{ab}\, |\psi_a\rangle\langle\psi_b|$ in the Laguerre basis~\cref{eq:laguerre-basis}, with $C = (C_{ab}) \succeq 0$ a positive-semidefinite $M \times M$ matrix, and pair $J$ against the truncated characteristic vector $\phi_{\Pi, R}(\eta) = \chr_\Pi(\eta)\, \mathbf{1}_{\{\|\eta\| \le R\}}$. The associated certificate value is
\begin{equation}
\label{eq:B-CP}
B_{\mathrm{CP}}(N, \codedist, R) \;:=\; \inf_{(f_2, J)}\; \bigl(f_2(0) + f_3(0, 0)\bigr), \qquad f_3 = \MacOpF[J],\ J \succeq 0,
\end{equation}
the infimum over $f_2$ Burchards-admissible and $J \succeq 0$ (trace-class, so that $\mathrm{Tr}\, J$ is defined) for which $(f_2, f_3)$ satisfy the truncated phase-aware sign conditions and all pairings with $\phi_{\Pi, R}$ are finite. It yields a valid CV three-point bound
\begin{equation}
\label{eq:cp-bound}
\Code \;\le\; (1 - \varepsilon)^{-1}\, B_{\mathrm{CP}}(N, \codedist, R).
\end{equation}
Validity follows by the same pairing argument as in the two-point case: Choi positivity makes the three-point pairing $\langle \overline{\phi_{\Pi, R}},\, J\, \overline{\phi_{\Pi, R}} \rangle \ge 0$ (the only place the three-point term could spoil the Burchards two-point chain), and the sign conditions then act exactly as in~\cref{eq:TB-11-CV}. The formulation is non-empty: $J = 0$ recovers the Burchards two-point bound, giving for instance $\Code\, \codedist_B^{2} \le 7.340985\ldots$ at $N = 1$, where $\codedist_B$ denotes the code distance in the Burchards convention of \cref{subsec:convention-box}. A genuine three-point improvement would require an admissible $J \neq 0$ lowering~\cref{eq:B-CP} below this two-point value.

At the origin the three-point term satisfies
\begin{equation}
\label{eq:trace-penalty}
f_3(0, 0) \;=\; (2\pi)^{-N}\, \mathrm{Tr}\, J \;\ge\; 0,
\end{equation}
since $f_3(0, 0)$ integrates the diagonal of the positive-semidefinite kernel $J$, with equality iff $J = 0$. The three-point term therefore \emph{raises} the bound at the origin: any improvement must come entirely from the $f_2$--$f_3$ sign coupling, with an off-origin $f_3 < 0$ relaxing the sign constraint on $f_2$ by more than the penalty~\cref{eq:trace-penalty} costs.

\subsection{The CP collapse}
\label{subsec:cp-nogo}

The sign-coupling channel opened by the CP relaxation is closed by the same phase-sign condition the relaxation was meant to accommodate. The constraint that closes the channel is the three-point phase-sign condition on $f_3$,
\begin{equation}
\label{eq:cond4}
\mathrm{Re}\bigl[f_3(v_1, v_2)\, e^{i \sympl(v_1, v_2)/2}\bigr] \;\le\; 0 \qquad \text{on } \condfour = \{\, \codedist \le \|v_1\|, \|v_2\| \le R,\ \|v_1 - v_2\| \ge \codedist \,\},
\end{equation}
the CV analogue of the three-point sign condition of~\cite[\S 2]{cohn-delaat-salmon-2022-three-point}. Write $f_3 = \MacOpF[J]$ with $J \succeq 0$ radial, supported on the span of the first $M$ Laguerre basis levels of~\cref{eq:laguerre-basis}.

\begin{theorem}[Completely-positive collapse, finite Laguerre truncation]
\label{thm:cp-nogo}
Let $N = 1$ and fix the certificate window $(\codedist, R) = (1, 5)$ in Burchards convention. For every radial Choi form $J \succeq 0$ supported on the span of the first eight Laguerre basis levels $\psi_{0,1}, \ldots, \psi_{7,1}$ of~\cref{eq:laguerre-basis}, the phase-sign condition~\cref{eq:cond4} forces $J = 0$. Consequently $f_3 \equiv 0$ on this sector, and the CP bound~\cref{eq:cp-bound} collapses to the two-point bound for every $f_2$: the certificate value $B_{\mathrm{CP}}$ reduces to its two-point optimum $f_2(0)$.
\end{theorem}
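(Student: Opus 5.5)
The strategy is to reduce the statement to a finite-dimensional semidefinite feasibility problem and certify infeasibility of any nonzero solution by duality. With $J = \sum_{a,b<8} C_{ab}\,|\psi_{a,1}\rangle\langle\psi_{b,1}|$ and $C \succeq 0$, the symbol map $\MacOpF$ is linear. Hence
\[
f_3(v_1,v_2) = \sum_{a,b} C_{ab}\, W_{ab}(v_1,v_2),
\]
where $W_{ab} := \MacOpF[|\psi_{a,1}\rangle\langle\psi_{b,1}|]$ is the cross-ambiguity symbol of the Laguerre pair. At $N=1$ these are explicit Laguerre--Gaussian expressions generalizing $W_{00} = (2\pi)^{-1} e^{-|\edgevec|^2/4 - |\comvec|^2/16}$: Gaussian factors times polynomials in $\edgevec,\comvec$, obtained from the standard Fock-state Wigner and ambiguity formulas.

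The phase-sign condition \cref{eq:cond4} then becomes a family of linear inequalities on $C$:
\[
\tr\bigl(C\, S(v_1,v_2)\bigr) \le 0 \quad \text{for } (v_1,v_2) \in \condfour, \qquad S_{ab}(v_1,v_2) := \mathrm{Re}\bigl[W_{ab}(v_1,v_2)\, e^{i\sympl(v_1,v_2)/2}\bigr]
\]
(symmetrized as needed). It suffices to exhibit a finite set of points $p_k \in \condfour$ and weights $\lambda_k \ge 0$ such that
\[
P := -\sum_k \lambda_k S(p_k) \succ 0 .
\]
Then for any admissible $C \succeq 0$ we have $\tr(CP) = -\sum_k \lambda_k \tr(C S(p_k)) \le 0$. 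Since $P \succ 0$ and $C \succeq 0$, this forces $\tr(CP) \ge \lambda_{\min}(P)\tr C \ge 0$, so $\tr C = 0$ and $C = 0$. Equivalently, $J=0$, so $f_3 \equiv 0$, and the trace penalty \cref{eq:trace-penalty} vanishes. The bound $B_{\mathrm{CP}}$ then reduces to $\inf f_2(0)$ over Burchards-admissible $f_2$, which is the two-point optimum.

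\textbf{Level $M=1$.} The case $M=1$ is done analytically. On the diagonal $v_1=v_2=v$ with $d \le \|v\| \le R$, the point lies in $\condfour$ only if the edge condition is met, so I would instead take an off-diagonal point of $\condfour$ with $\sympl(v_1,v_2)=0$; for example $v_1 = -v_2$ with $\|v_1\| \ge 1$. There $S_{00} = W_{00} > 0$, so the single point already gives a strictly positive $1\times1$ certificate, in the same spirit as the $1\times1$ argument of \cref{lem:1x1-obstruction}.

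\textbf{Levels up to $M=8$.} For larger truncations I would find $(p_k,\lambda_k)$ numerically, by solving the dual SDP: maximize $t$ subject to $-\sum_k \lambda_k S(p_k) \succeq tI$ and $\sum_k \lambda_k = 1$, over a grid on the two-parameter $U(1)$-reduced slice of $\condfour$. I would then rationalize the points and weights, so that the entries of $S(p_k)$ are exact algebraic or rational multiples of a common positive Gaussian factor that can be divided out. Positive definiteness of the resulting rational $P$ is verified by an exact $LDL^{T}$ factorization with strictly positive pivots. Doing this separately for each $M \le 8$ is unnecessary: a certificate for $M=8$ already implies the result for all smaller $M$, since compressing $P$ to a leading principal block preserves positive definiteness.

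\textbf{Main obstacle.} The hard part is making the certificate exact. The entries of $S(p_k)$ involve $e^{-\text{quadratic}}$ and $\cos$ or $\sin$ of $\sympl/2$, so the points must be chosen so that the phases are rational (for instance $\sympl(p_k) \in \pi\mathbb{Q}$ with rational trigonometric values, or simply $\sympl = 0$ or $2\pi$) and the Gaussian factors are common to all entries of a given $S(p_k)$. The certificate must also retain a margin large enough to survive rounding. If the margin degrades with $M$, that would explain why the statement stops at eight levels.
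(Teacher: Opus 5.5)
Your approach is the paper's: expand $J$ in the Laguerre basis, read \cref{eq:cond4} as linear inequalities $\tr(C\,S(p))\le 0$ on $C\succeq 0$, and rule out every nonzero $C$ with one finite, positive definite, nonnegative combination of constraint matrices, made exact by a rational $LDL^{T}$ factorization. However, the certificate as you define it has the wrong sign, and the argument fails at exactly that step. With $P:=-\sum_k\lambda_k S(p_k)$ and $\tr(C\,S(p_k))\le 0$, you get $\tr(CP)=-\sum_k\lambda_k\tr(C\,S(p_k))\ge 0$, not $\le 0$. That is the same inequality $P\succ 0$ already gives, so there is no contradiction. Likewise, the dual SDP you propose (maximize $t$ subject to $-\sum_k\lambda_k S(p_k)\succeq tI$) searches for an object that says nothing about feasibility. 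The certificate must be $P:=\sum_k\lambda_k S(p_k)\succ 0$, which is the paper's $P(\mu)$ in \cref{eq:pd-certificate}. Then admissibility gives $\tr(CP)\le 0$ while $\tr(CP)\ge\lambda_{\min}(P)\tr C$, which forces $\tr C=0$ and hence $C=0$. Your own $M=1$ step already uses this sign: $S_{00}=W_{00}>0$ at a point of $\condfour$ with $\sympl=0$. Your choice $v_1=-v_2$, $\|v_1\|\in[1,5]$, works as well as the paper's $v_2=2v_1$, $\|v_1\|=1$. Once the minus sign is removed, your proof is the paper's.

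There are two differences in execution. First, you are right that a single $M=8$ certificate suffices, since the sector in the statement contains all lower truncations. The paper gives one certificate per $M=2,\dots,8$, plus the analytic rank-$1$ argument, mainly to document the geometric decay of $\lambda_{\min}(P(\mu))$. Second, for exactness the paper does not pick special points. It evaluates the symbol entries by the closed-form $N=1$ Gaussian-moment formula at $50$-digit precision and rounds them to a rational $P_{\mathbb{Q}}$ within an entrywise budget $\varepsilon_{\mathrm{rat}}$. It then certifies $\lambda_{\min}(P(\mu))\ge\delta-M\varepsilon_{\mathrm{rat}}$ from an exact $LDL^{T}$ factorization of $P_{\mathbb{Q}}-\delta I$. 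Your rounding-free variant, which absorbs each point's common Gaussian factor into its weight, is legitimate but more restrictive than you suggest. Rational coordinates give $\sympl\in\mathbb{Q}$, and then $\cos(\sympl/2)$ is transcendental unless $\sympl=0$. A point with $\sympl=2\pi$ needs coordinates involving $\pi$, which then make the polynomial entries irrational. So exact points confine your net to collinear pairs. These form a two-parameter subfamily of the three-dimensional $U(1)$-reduced window (not the whole window, as your ``two-parameter slice'' suggests), which may cost the margin you need at $M=8$. The perturbation bound is the more flexible way to make the certificate rigorous.
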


\begin{proof}[Analytic at rank $1$; an exact positive-definite certificate for $2 \le M \le 8$]
For rank $M = 1$, $J = c\, |\psi_0\rangle\langle\psi_0|$ with $c \ge 0$, and the symbol is the strictly positive Gaussian $f_3(v_1, v_2) = c\, W_{00}(\edgevec, \comvec) = c\, (2\pi)^{-N} e^{-\|\edgevec\|^2/4 - \|\comvec\|^2/16} > 0$. Take any parallel pair $v_2 = 2 v_1$ with $\|v_1\| = \codedist$: then $\sympl(v_1, v_2) = 0$, so $e^{i\sympl/2} = 1$, and the pair lies in $\condfour$ (as $\|v_2\| = 2\codedist \in [\codedist, R]$ and $\|v_1 - v_2\| = \codedist$). There $\mathrm{Re}[f_3\, e^{i\sympl/2}] = f_3 > 0$, violating~\cref{eq:cond4} unless $c = 0$.

For $2 \le M \le 8$ the rank-$1$ argument no longer closes: the higher Laguerre levels carry sign-changing symbols, so no single pair forces $C = 0$ and the constraints~\cref{eq:cond4} must be combined. The mechanism is a dual certificate. Suppose there are finitely many pairs $(v_1^k, v_2^k) \in \condfour$ and weights $\mu_k \ge 0$ for which
\begin{equation}
\label{eq:pd-certificate}
P(\mu) \;:=\; \sum_k \mu_k\, \cos\!\bigl(\sympl(v_1^k, v_2^k)/2\bigr)\, W(\edgevec_k, \comvec_k) \;\succ\; 0
\end{equation}
as an $M \times M$ matrix. For any $C \succeq 0$ obeying~\cref{eq:cond4}, the sign condition gives $\langle C, P(\mu)\rangle = \sum_k \mu_k \cos(\sympl_k/2)\, \langle C, W_k\rangle \le 0$, while $P(\mu) \succ 0$ and $C \succeq 0$ give $\langle C, P(\mu)\rangle \ge 0$; hence $\langle C, P(\mu)\rangle = 0$, which forces $C = 0$. The collapse at each truncation thus reduces to exhibiting a single positive-definite matrix $P(\mu)$, a finite algebraic witness rather than a solver verdict. Such a $P(\mu)$ is given for each $M = 2, \ldots, 8$ in \cref{app:cp-certificate}, and its positive-definiteness is established in exact rational arithmetic: the ancillary data provide a rational $P_{\mathbb{Q}}$ and a rational $\delta > 0$ admitting an exact $LDL^{T}$ factorization of $P_{\mathbb{Q}} - \delta I$, so $P(\mu) \succ 0$ holds as a theorem, not as a numerical observation. The certificate margins, which decay with $M$, are recorded in \cref{rmk:cp-nogo-scope,app:cp-certificate}.
\end{proof}

\begin{remark}[Scope of the CP collapse]
\label{rmk:cp-nogo-scope}
\Cref{thm:cp-nogo} is rigorous but bounded in scope, and we state the boundary explicitly. The certificate margin $\lambda_{\min}(P(\mu))$ decays geometrically in the truncation $M$, and the natural uniform-measure candidate $\int_{\condfour} \cos(\sympl/2)\, W\, d\mu$ is indefinite; consequently the finite-truncation argument does not assemble into a single closed-form all-rank theorem, and the \emph{full trace-class radial cone is left open}. At $N = 2$ the same collapse is verified numerically (by floating-point eigendecomposition, not the exact rational certificate underlying the $N = 1$ theorem) for the truncations $M = 2, \ldots, 5$ at the window $(\codedist, R) = (1, 5)$ (\cref{app:cp-certificate}); for $N \ge 3$ we expect the low-$N$ behavior to be representative (the collapse mechanism, the phase-sign condition overruling the diagonal freedom through pairs of vanishing symplectic phase, uses no structure specific to low mode number), but we claim nothing beyond what is certified. What is established is that the CP relaxation buys no three-point improvement on any sector we can certify: the diagonal sign change it was designed to permit is overruled by the off-diagonal phase-sign condition~\cref{eq:cond4}.
\end{remark}

\begin{remark}[$U(N)$-twirl reduction of the non-radial case]
\label{rmk:twirl}
At the level of the full trace-class cone, the non-radial case reduces to the radial one. The constraint set of~\cref{eq:cond4} is $U(N)$-invariant ($\condfour$ is defined by norms, and $\sympl$ is preserved), and the symbol map is metaplectically covariant: for $g \in U(N)$ with metaplectic lift $U_g$ (so that $U_g D(\xi) U_g^{\dagger} = D(g\xi)$), the symbol of $U_g J U_g^{\dagger}$ is $f_3 \circ (g^{-1} \times g^{-1})$. Hence if $J \succeq 0$ satisfies~\cref{eq:cond4}, so does every rotation of it, and so does the twirl $\bar J := \int_{U(N)} U_g\, J\, U_g^{\dagger}\, dg$, which is radial, positive semidefinite, and trace-preserving (the twirl averages unitary conjugations, so $\|\bar J\|_1 \le \|J\|_1$ keeps it trace-class, and $\mathrm{Tr}\,\bar J = \mathrm{Tr}\, J$ by unitary invariance of the trace and $\|\cdot\|_1$-continuity). If the radial collapse were established on the full trace-class radial cone, then $\bar J = 0$, and $\mathrm{Tr}\, J = \mathrm{Tr}\, \bar J = 0$ with $J \succeq 0$ would force $J = 0$. The reduction operates at the level of the full cone only (twirling does not preserve the Laguerre truncation, so it does not enlarge \cref{thm:cp-nogo} itself), but it reduces the non-radial open question to the radial one (open edges (i) and (iv) of \cref{subsec:cp-open-edges}).
\end{remark}

\subsection{The dual-packing comparison: a classical improvement for lattice codes}
\label{subsec:dual-packing}

The two collapses (\cref{thm:lattice-nogo} for lattice codes and \cref{thm:cp-nogo} for the certifiable CP cone) show the \emph{quantum} three-point machinery yielding no improvement. There is nonetheless one route by which lattice GKP codes do enjoy a three-point improvement, and it is not quantum at all. For a GKP code the stabilizer lattice $\Lat$ and its symplectic dual $\Latd$ satisfy $\mathrm{covol}(\Lat)\, \mathrm{covol}(\Latd) = (2\pi)^{2N}$, so
\begin{equation}
\label{eq:K-dual}
\Code \;=\; \frac{\mathrm{covol}(\Lat)}{(2\pi)^N} \;=\; \frac{(2\pi)^N}{\mathrm{covol}(\Latd)} \;\propto\; \mathrm{density}(\Latd):
\end{equation}
the code dimension scales \emph{with} the packing density of the dual lattice. Under the well-conditioning hypothesis $d_{\min}(\Lat) \ge \codedist$ of \cref{thm:main-141cv}, $\Latd$ is a packing of minimum distance $\ge \codedist$, so any classical sphere-packing density bound $\Delta(2N)$ in dimension $2N$ applies and gives
\begin{equation}
\label{eq:dual-packing-bound}
\Code \;\le\; \frac{(2\pi)^N\, \Delta(2N)}{\mathrm{vol}(B^{2N}_{\codedist/2})} \;=\; 8^N\, N!\, \Delta(2N)\, \codedist^{-2N}.
\end{equation}
Taking $\Delta = \Delta_{\mathrm{lat}}(2N)$ from the classical \emph{three-point} lattice bound~\cite[Thm.~1.4]{cohn-delaat-salmon-2022-three-point}, which satisfies $\Delta_{\mathrm{lat}} < \Delta_2$ in low dimensions, yields a bound strictly below the two-point one, about $5\%$ at $2N = 4$, where $\Delta_{\mathrm{lat}}/\Delta_2 = 0.952$.

This is a genuine three-point improvement, but a classical one on the dual lattice, not the quantum three-point bound. Three limitations: (i) it never builds a quantum three-point identity: it applies the \emph{classical} three-point bound (with its symmetric Poisson summation and the full positive-definiteness that \cref{thm:lattice-nogo} shows the CV quantum MacWilliams construction lacks) to the Euclidean lattice $\Latd$. The two collapses concern a different object and are untouched by it. (ii) The two-point version of~\cref{eq:dual-packing-bound} coincides with the Burchards two-point bound, a well-known reduction implicit in~\cite[\S 3.2]{burchards-2025-cv-macwilliams} and~\cite[\S 5]{conrad-eisert-arzani-2022-gkp-lattice}; only the three-point sharpening is new, and it is a near-immediate corollary once the inversion~\cref{eq:K-dual} is noticed. (iii) It is confined to well-conditioned lattices: it does not cover the short-stabilizer regime (concatenated or LDPC-GKP codes), nor non-lattice or approximate bosonic codes: exactly the general QEDCs for which the quantum framework was built and for which the CP collapse leaves the question open. The dual-packing route thus sidesteps the quantum three-point problem rather than solving it.

\subsection{The quantum--classical contrast}
\label{subsec:contrast}

The two collapse theorems of this paper (\cref{thm:lattice-nogo} and \cref{thm:cp-nogo}), set against the classical dual-packing comparison~\cref{eq:dual-packing-bound}, assemble into a single structural statement. Classically, the three-point method improves on the two-point LP bound; in its lattice form (full positive-definiteness, Cohn--de~Laat--Salmon Thm.~1.4) it is conjectured sharp in dimension $4$~\cite[Conj.~6.1]{cohn-delaat-salmon-2022-three-point}, and it continues to be sharpened, recently even by automated auxiliary-function searches~\cite{tutunov-etal-2025-ai-sphere-packing}. In the CV quantum setting that same lattice tool degenerates to the two-point LP, and the general tool collapses to two-point on the entire certifiable CP sector.

The cause is a single feature with no classical analogue: the code projector $\Pi$, with $\Code = \tr \Pi$. It is simultaneously (a) what gives the quantum bound its correct $\Code$-upper direction, and (b) what forces the MacWilliams transform to be edge-preserving, hence only fiberwise positive-definite. The direction in (a) comes from the asymmetry $\Athree \propto \Code^2$ versus $\Bthree \propto \Code^1$ of the MacWilliams pair, which a projector-free classical lattice sum lacks; the edge-preservation in (b) removes the full positive-definiteness that powers the classical lattice bound. The two effects are inseparable (\cref{rmk:direction-reversal}): orienting the bound the right way and retaining full positive-definiteness cannot both hold. The classical lattice packing problem, having no projector, is symmetric, points the right way, and keeps full positive-definiteness: that is exactly why it improves. The only CV improvement we find, the dual-packing bound of \cref{subsec:dual-packing}, escapes the dichotomy by abandoning the quantum identity altogether and packing the dual lattice classically. This contrast sets the agenda for the successor frameworks of \cref{sec:future-work}.

The discrete-variable SDP hierarchy of Angl\`es Munn\'e--Nemec--Huber~\cite{angles-nemec-huber-2024-sdp} also concerns projector-defined codes and \emph{does} strictly improve on the quantum LP bounds, yet it is not caught by the same dichotomy, because it is not an auxiliary-function bound: its positivity constraints live on moment matrices of operator monomials, certified through a Terwilliger-algebra symmetry reduction, and never pass through a MacWilliams transform of a scalar auxiliary. The dichotomy isolated here binds the \emph{auxiliary-function route} (the CV transcription of the Cohn--Elkies/Cohn--de~Laat--Salmon programme), and that is exactly why the moment-based CCR-NPA hierarchy, rather than any refinement of auxiliary functions, is the successor framework we consider most promising (\cref{subsec:successor-frameworks}).

\section{Conclusion and outlook}
\label{sec:future-work}

We constructed the three-point continuous-variable quantum MacWilliams identity and used it to ask whether the three-point method, which strengthens the linear-programming bound in the classical and discrete-variable settings, also strengthens the Burchards two-point bound. On both natural routes the answer is negative: for GKP lattice codes the three-point optimum equals the two-point linear-programming optimum exactly (\cref{thm:lattice-nogo}), and for general bosonic codes the completely-positive cone collapses to two-point on every Laguerre rank we can certify (\cref{thm:cp-nogo}). Both collapses share a single cause with no classical analogue, the code projector: it orients the bound in the correct direction and at the same time removes the full positive-definiteness that powers the classical three-point improvement. The only three-point gain that survives is classical, a sphere-packing bound on the symplectic dual lattice.

These collapses do not exhaust the identity or its closed-form kernel. They leave a structured set of open directions, which we group as follows: (i) the open edges of the CP collapse itself; (ii) successor frameworks that could still improve on the two-point bound for general bosonic codes by leaving the present cones; and (iii) the symmetry-reduction programme for the MacWilliams kernel of \cref{subsec:kernel}, subject to the $\psi$-direction obstacle of \cref{rmk:psi-obstruction}.

\subsection{Open edges of the CP collapse}
\label{subsec:cp-open-edges}

\Cref{thm:cp-nogo} is a finite-truncation result, and four questions about its boundary remain open. (i)~The \emph{full trace-class radial cone}: the certificate margin $\lambda_{\min}(P(\mu))$ of~\cref{eq:pd-certificate} decays geometrically in the truncation, and the uniform-measure candidate is indefinite (\cref{rmk:cp-nogo-scope}), so whether $\{J \succeq 0 : \cref{eq:cond4}\} = \{0\}$ holds for the whole trace-class radial cone is genuinely undecided on present evidence. (ii)~An \emph{all-rank analytic proof}, if one exists, would most plausibly come from the twisted positive-definiteness (Kastler--Loupias--Miracle-Sole) characterization of $J \succeq 0$ in terms of an $\sympl$-positive-definite symbol; the decaying margin makes a clean rigidity statement look unlikely but does not exclude it. (iii)~Higher dimension: $N \ge 3$, and extending the certified $N = 2$ truncations (\cref{app:cp-certificate}) beyond $M = 5$. (iv)~Non-radial Choi forms $J$: by the twirl reduction of \cref{rmk:twirl} this question reduces, at the level of the full cone, to question (i).

\subsection{Successor frameworks for general bosonic codes}
\label{subsec:successor-frameworks}

A three-point improvement for general --- non-lattice or approximate --- bosonic codes, if it exists, must leave both the kernel-positive-definite cone (obstructed by \cref{lem:1x1-obstruction}) and the CP cone (obstructed by \cref{thm:cp-nogo} at every certifiable rank). We list four candidate approaches that evade these by different means.

\begin{enumerate}
  \item \textbf{CCR-NPA moment-SOS hierarchy.} Replace the auxiliary-function framework altogether by the noncommutative moment-SOS hierarchy of Navascu\'es--Pironio--Ac\'in~\cite{navascues-pironio-acin-2007-bounding,navascues-pironio-acin-2008-convergent} (CV-NPA), in which the optimization variables are the moments of displacement operators on a finite Fock cutoff. The CCR algebra constrains the moment matrices; the PSD condition lives on the moment matrices, not on a scalar auxiliary, so neither the $1 \times 1$-minor obstruction nor the phase-sign condition~\cref{eq:cond4} applies in the same form, and the bound emerges from the dual certificate. This is the CV analogue of the DV SDP hierarchy of~\cite{angles-nemec-huber-2024-sdp} and is the most direct numerical-prototype route.
  \item \textbf{Copositive cones.} Replace the kernel-PD constraint $\mathbf{A}_{\mathrm{op}}[f_3] \succeq 0$ by the requirement that $f_3$ be non-negative only on \emph{physical truncated QEDC configurations}, in the spirit of the copositive-cone improvements of~\cite{cohn-delaat-salmon-2022-three-point}. The diagonal $1 \times 1$ minor no longer appears as a feasibility constraint, evading \cref{lem:1x1-obstruction}. The approach requires a conic-dual weak-duality argument at the level of finite-grid relaxations.
  \item \textbf{Conditional twisted enumerators.} Re-center the three-point distribution to a conditional positive-definiteness condition with vanishing first moment $\sum_i c_i = 0$, eliminating the single-point diagonal test. This requires writing the truncated $\Athree^{R}$ as a centered correlation against its marginals, which exists only if a natural zero-mass identity for the CV ambiguity function, not currently known, can be established.
  \item \textbf{CV shadow / parity-twirl enumerators.} Follow the Rains shadow-enumerator approach~\cite{rains-1999-quantum-shadow} --- recently recast in Delsarte-theoretic form for finite-dimensional systems~\cite{okada-2025-quantum-delsarte} --- and develop its CV adaptation: derive the three-point bound from CP-twirl or parity-transform trace positivity rather than from a kernel-PD cone. This direction is at present at the level of literature review and a toy derivation; a full CV theorem is a longer-term target.
\end{enumerate}

\subsection{Symmetry reduction and the \texorpdfstring{$\psi$}{psi}-direction obstacle}
\label{subsec:symmetry-reduction}

The kernel~\cref{eq:kernel} is block-diagonal in the $\phi$-direction by the Jacobi--Anger expansion~\cref{eq:phi-jacobi-anger}, but the $\psi$-direction does not admit a Bessel-function closed form (\cref{rmk:psi-obstruction}). A full $U(N)$-equivariant block decomposition of the kernel, analogous to the Bachoc--Vallentin block decomposition of the classical three-point kernel under $O(n)$~\cite{bachoc-vallentin-2008-kissing-sdp}, requires a generalized Fourier basis in the $\psi$-direction adapted to the $\cos\psi$-dependence of $\|\edgevec_\eta\|^{2}(\psi)$. Candidate bases include logarithmic Fourier modes and Chebyshev rational expansions of $\cos\psi$, both of which would produce a generalized harmonic decomposition with off-diagonal coupling between $\psi$-modes. This is the first obstacle for the CV analogue of the Bachoc--Vallentin equivariant SDP, and the prerequisite for any tractable numerical attack on the successor frameworks above.

\subsection{Related extensions}
\label{subsec:related-extensions}

We close with three extensions of the present construction that are natural but not direct continuations of the main programme. The CV analogue of the absolute-maximally-entangled (AME) shadow MacWilliams bound of~\cite{huber-etal-2018-ame-shadow-macwilliams} has been initiated by~\cite{kwon-brady-albert-2025-ame-cv} and would naturally combine with our three-point construction. The approximate-QEDC version of the bound, in the spirit of~\cite{ouyang-lai-2022-lp-approximate}, requires a continuous $\varepsilon$ parameter and an admissibility condition tracking the approximation error. The extension to homological rotor codes~\cite{vuillot-ciani-terhal-2023-homological-rotor} and finite-energy GKP codes~\cite{brady-etal-2024-advances-bosonic-gkp} would test the boundary of the CV three-point framework against codes whose support is not a strict lattice. We do not pursue these directions here; we expect the three-point identity, its closed-form kernel, and the admissibility framework of \cref{def:admissible} to be the reusable core of any of them.

\appendix
\section*{Appendices}
\section{Derivation of the three-point identity and its kernel}
\label{app:derivation}

This appendix expands the two computations sketched in \cref{sec:identity}: the Baker--Campbell--Hausdorff (BCH) derivation of the three-point identity~\cref{eq:Ph3-2} (\cref{app:bch-derivation}), and the passage from it to the closed-form kernel~\cref{eq:kernel} (\cref{app:kernel-derivation}).

\subsection{The BCH derivation of the three-point identity}
\label{app:bch-derivation}

Inserting the characteristic-function expansions of $\hat O_1$ and $\hat O_2^{\dagger}$ into the asymmetric dual integrand~\cref{eq:integrand-B} gives the quadruple-trace form~\cref{eq:integrand-B-expanded},
\begin{equation*}
F^{(3)}_{\Bone}(v_1, v_2) = (2\pi)^{-2N}\!\int d\eta_1\, d\eta_2\; \chr_{\hat O_1}(\eta_1)\, \chr_{\hat O_2}(\eta_2)^{*}\, \tr\!\bigl(D(v_1)\, D(\eta_1)\, D(v_2)^{\dagger}\, D(-\eta_2)\bigr).
\end{equation*}
Combine the four displacement operators in pairs with the Weyl composition relation~\cref{eq:weyl-composition}:
\begin{equation*}
D(v_1)\, D(\eta_1) = e^{-i\sympl(v_1, \eta_1)/2}\, D(v_1 + \eta_1), \qquad
D(v_2)^{\dagger} D(-\eta_2) = e^{-i\sympl(v_2, \eta_2)/2}\, D\bigl(-(v_2 + \eta_2)\bigr),
\end{equation*}
where the second identity uses $D(v_2)^{\dagger} = D(-v_2)$. The trace orthogonality $\tr[D(a)\, D(b)^{\dagger}] = (2\pi)^N \delta^{(2N)}(a - b)$ then collapses the product to a single delta and an overall phase,
\begin{equation}
\label{eq:app-bch-delta}
\tr\!\bigl(D(v_1)\, D(\eta_1)\, D(v_2)^{\dagger}\, D(-\eta_2)\bigr) = (2\pi)^N\, \delta^{(2N)}\!\bigl(v_1 + \eta_1 - v_2 - \eta_2\bigr)\, e^{i\Phi},
\end{equation}
so the $\eta_2$-integration is fixed to $\eta_2 = v_1 + \eta_1 - v_2$. On this support a short computation using only the bilinearity and antisymmetry of $\sympl$ (together with $\sympl(v_2, v_2) = 0$) reduces the accumulated phase to
\begin{equation}
\label{eq:app-bch-phase}
\Phi \;=\; -\tfrac{1}{2}\sympl(v_1 + v_2,\, \eta_1) \;+\; \tfrac{1}{2}\sympl(v_1, v_2).
\end{equation}
Pass to the edge and center-of-mass coordinates $\edgevec = v_2 - v_1$, $\comvec = v_1 + v_2$ of~\cref{eq:edge-com}. Antisymmetry gives $\sympl(v_1, v_2) = \tfrac{1}{2}\sympl(\comvec, \edgevec)$ (equation~\cref{eq:omega-edge-com}), so
\begin{equation*}
\Phi \;=\; -\tfrac{1}{2}\sympl(\comvec, \eta_1) \;+\; \tfrac{1}{4}\sympl(\comvec, \edgevec).
\end{equation*}
We make the shift $\tilde\eta := \eta_1 - \edgevec/2$. Since $\sympl(\comvec, \eta_1) = \sympl(\comvec, \tilde\eta) + \tfrac{1}{2}\sympl(\comvec, \edgevec)$, the two $\sympl(\comvec, \edgevec)$ contributions cancel, leaving
\begin{equation}
\label{eq:app-phase-final}
\Phi \;=\; -\tfrac{1}{2}\sympl(\comvec, \tilde\eta) \;=\; \sympl\bigl(\tilde\eta,\, \comvec/2\bigr).
\end{equation}
Under the same shift the characteristic-function arguments become $\eta_1 = \tilde\eta + \edgevec/2$ and $\eta_2 = \tilde\eta - \edgevec/2$, with $\chr_{\hat O_2}(\eta_2)^{*} = \chr_{\hat O_2}(\tilde\eta - \edgevec/2)^{*}$ on the Hermitian sector. Collecting the surviving single $\tilde\eta$-integral reproduces the boxed identity~\cref{eq:Ph3-2}. Specializing $v_1 = v_2 = \xi$ (so $\edgevec = 0$, $\comvec = 2\xi$) collapses it to the Burchards two-point integrand, the specialisation check recorded below~\cref{eq:cross-ambiguity}.

\subsection{From the identity to the kernel}
\label{app:kernel-derivation}

The kernel~\cref{eq:kernel} is obtained by inserting the identity~\cref{eq:Ph3-2} into the fiber definition~\cref{eq:B3-def} of $\Bthree$ and reading off the integral transform relating it to $\Athree$. We outline the main steps; the full computation is a fiber integral over the $(4N-4)$-dimensional level set of the Hermitian Gram matrix.

In~\cref{eq:Ph3-2} the cross-ambiguity factor is exactly the $\Athree$-integrand~\cref{eq:integrand-A} evaluated at the shifted pair $(\tilde\eta + \edgevec/2,\, \tilde\eta - \edgevec/2)$, whose Hermitian Gram matrix $H_\eta$ has the \emph{same edge norm} $\|\edgevec\|$ as $(v_1, v_2)$. Fixing the output Gram matrix $H = H(v_1, v_2)$ and integrating the $\eta$-side over the locus $H(\tilde\eta + \edgevec/2, \tilde\eta - \edgevec/2) = H_\eta$ therefore produces:
\begin{itemize}
  \item the \emph{edge-norm-matching shell delta} $\delta(\|\edgevec\|^{2} - \|\edgevec_\eta\|^{2})$, because the BCH identity carries the edge coordinate through unchanged (the transform is fiberwise in $\edgevec$);
  \item the \emph{symplectic-Fourier phase} $e^{i\sympl(\tilde\eta, \comvec/2)}$, which on the constrained shell evaluates to the on-shell phase $e^{i\Phi_0(H, H_\eta)}$ of~\cref{eq:phi0};
  \item a \emph{Bessel factor} $J_{N-2}(\sqrt{D_v D_\eta}/\|\edgevec_\eta\|^{2})$ with the geometric coefficient $\mathcal{C}_N$ of~\cref{eq:kernel-coeff}, arising from the angular average of the Fourier phase over the $S^{2N-3}$ orthogonal to the plane $\mathrm{span}_{\R}\{\edgevec_\eta, \Omega\edgevec_\eta\}$; the order $N-2$ is the half-dimension of that sphere and matches the configuration-measure exponent of~\cref{eq:config-measure}.
\end{itemize}
Assembling the three factors gives the closed form~\cref{eq:kernel}. The cancellation of the apparent $\mathcal{C}_N$ singularity at the PSD boundary by the small-argument Bessel asymptotics~\cref{eq:psd-boundary-finite} is the boundary-regularity check recorded in \cref{subsec:kernel}; the $\sigma$-COM and double-path reductions of \cref{subsec:identity-sanity} verify the same kernel along two independent marginalization directions.

\section{Technical proofs for \texorpdfstring{\S\ref{sec:main-theorem}}{the main-theorem section}}
\label{app:proofs}

This appendix supplies the three technical lemmas deferred from \cref{sec:main-theorem}: the positive-definiteness of the Gaussian-polynomial factor $q_R$ (\cref{lem:qR-pd}), the fiberwise positive-definiteness characterization of adjoint-positivity condition~(i) of \cref{def:admissible} (\cref{lem:fibpd}), and the well-posedness of $\MacOpD$ on the degenerate locus of the Gram-coordinate kernel~\cref{eq:kernel} (\cref{lem:deglocus}). The first two are used in the non-emptiness construction \cref{prop:non-emptiness}; \cref{lem:fibpd,lem:deglocus} also underlie the collapse \cref{thm:lattice-nogo}.

\subsection{Positive-definiteness of the Gaussian-polynomial factor}
\label{app:qRpd}

\begin{lemma}[Positive-definiteness of $q_R$]
\label{lem:qR-pd}
Let $R_0, b > 0$ with $b R_0^2 > N$, and define $q_R : \R^{2N} \to \R$ by
\[
  q_R(\comvec) \;:=\; \bigl(1 - \|\comvec\|^2/R_0^2\bigr)\, e^{-b\|\comvec\|^2}.
\]
Then $q_R$ is strictly positive-definite: its Fourier transform satisfies $\widehat{q_R}(\xi) > 0$ for all $\xi \in \R^{2N}$.
\end{lemma}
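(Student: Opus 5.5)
The plan is to compute $\widehat{q_R}$ in closed form and read off its sign; no abstract positive-definiteness argument is needed. Write $n = 2N$ and $G_b(\comvec) := e^{-b\|\comvec\|^2}$, so that $q_R = G_b - R_0^{-2}\,\|\comvec\|^2 G_b$. I will use the Euclidean Fourier transform $\widehat{h}(\xi) = \int h(\comvec)\, e^{-i\langle \xi, \comvec\rangle}\, d\comvec$.

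First I check that the symplectic convention of \cref{eq:tilde-f-def} does not matter. We have $\sympl(\xi, \comvec) = \langle -\Omega\xi, \comvec\rangle$ up to sign, and $\Omega$ is orthogonal. Since $q_R$ is radial, its symplectic Fourier transform at $\xi$ equals its Euclidean transform at $\pm\Omega\xi$, which has the same norm as $\xi$. Constant prefactors such as $2^{-2N}(2\pi)^{-N}$ are positive. So strict positivity in one convention implies it in the other.

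Second, I use two standard facts:
\[
\widehat{G_b}(\xi) = (\pi/b)^{N}\, e^{-\|\xi\|^2/(4b)}, \qquad \widehat{\|\comvec\|^2 h} = -\Delta_\xi \widehat{h}.
\]
For the radial Gaussian in $n = 2N$ dimensions,
\[
\Delta_\xi e^{-\|\xi\|^2/(4b)} = \Bigl(\tfrac{\|\xi\|^2}{4b^2} - \tfrac{n}{2b}\Bigr) e^{-\|\xi\|^2/(4b)} .
\]
Combining these gives
\[
\widehat{q_R}(\xi) = \Bigl(\tfrac{\pi}{b}\Bigr)^{N} e^{-\|\xi\|^2/(4b)} \Bigl[\,1 - \tfrac{N}{b R_0^2} + \tfrac{\|\xi\|^2}{4 b^2 R_0^2}\,\Bigr].
\]
At $\xi = 0$ this reproduces the value $(\pi/b)^N[1 - N/(bR_0^2)]$ used in the proof of \cref{prop:non-emptiness}, which serves as a consistency check.

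Third, the bracket is a constant $1 - N/(bR_0^2) > 0$, by the hypothesis $bR_0^2 > N$, plus a nonnegative term. The prefactor is strictly positive. Hence $\widehat{q_R}(\xi) > 0$ for all $\xi$. Since $q_R$ is Schwartz, Bochner's theorem then gives strict positive-definiteness in the kernel sense as well.

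The only delicate step is the Laplacian bookkeeping: the dimension enters as $n = 2N$, producing exactly the $N/(bR_0^2)$ threshold. I would double-check it against the $\xi = 0$ value via $\int \|\comvec\|^2 G_b\, d\comvec = (n/(2b))(\pi/b)^N$.
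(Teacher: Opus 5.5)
Your proposal is correct and follows the paper's proof essentially step for step. Both compute the Gaussian transform, use $\|\comvec\|^2 \leftrightarrow -\Delta_\xi$ to obtain the bracket $1 - N/(bR_0^2) + \|\xi\|^2/(4b^2R_0^2)$, observe that it is minimized at $\xi = 0$ where it is positive by $bR_0^2 > N$, and finish with Bochner; your remark that the symplectic Fourier convention does not matter (since $q_R$ is radial and $\Omega$ is orthogonal) makes explicit what the paper leaves to \cref{lem:fibpd}.
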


\begin{proof}
Write $G(\xi) := (\pi/b)^N e^{-\|\xi\|^2/(4b)}$ for the Fourier transform of $\comvec \mapsto e^{-b\|\comvec\|^2}$ on $\R^{2N}$. Multiplication by $\|\comvec\|^2$ in position space corresponds to $-\Delta_\xi$ in the frequency domain, so
\begin{equation}
\label{eq:qR-laplacian}
\widehat{\|\comvec\|^2 e^{-b\|\comvec\|^2}}(\xi) \;=\; -\Delta_\xi G(\xi) \;=\; G(\xi)\Bigl[\frac{N}{b} - \frac{\|\xi\|^2}{4b^2}\Bigr].
\end{equation}
Combining,
\begin{equation}
\label{eq:qR-hat}
\widehat{q_R}(\xi) \;=\; G(\xi) - \frac{1}{R_0^2}\cdot G(\xi)\Bigl[\frac{N}{b} - \frac{\|\xi\|^2}{4b^2}\Bigr] \;=\; G(\xi)\Bigl[1 - \frac{N}{b R_0^2} + \frac{\|\xi\|^2}{4b^2 R_0^2}\Bigr].
\end{equation}
Since $G > 0$ everywhere and the bracket is non-decreasing in $\|\xi\|^2$ with its minimum at $\xi = 0$ equal to $1 - N/(b R_0^2) > 0$ (by the strict hypothesis $b R_0^2 > N$), we conclude $\widehat{q_R}(\xi) > 0$ for all $\xi \in \R^{2N}$. By Bochner's theorem, $q_R$ is therefore strictly positive-definite.
\end{proof}

\subsection{Fiberwise positive-definiteness lemma}
\label{app:fibpd}

\begin{lemma}[Fiberwise positive-definiteness]
\label{lem:fibpd}
Let $f : \R^{2N} \times \R^{2N} \to \R$ be bounded, integrable with $\int |f| < \infty$, and $U(N)$-invariant. Then $\ftilde(\edgevec, \tilde\eta) \ge 0$ for all $(\edgevec, \tilde\eta) \in \R^{2N} \times \R^{2N}$ if and only if, for each fixed $\edgevec \in \R^{2N}$, the slice $\comvec \mapsto f(\edgevec, \comvec)$ is positive-definite in the Bochner sense with respect to the symplectic inner product on $\R^{2N}$.
\end{lemma}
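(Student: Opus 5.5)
The plan is to reduce the statement to Bochner's theorem applied one fiber at a time. Fix $\edgevec \in \R^{2N}$ and write $f_\edgevec(\comvec) := f(\edgevec, \comvec)$. Since $f$ is bounded and $\int|f|<\infty$, Fubini shows that $f_\edgevec \in L^1(\R^{2N})$ for almost every $\edgevec$. By \cref{eq:tilde-f-def},
\[
\ftilde(\edgevec,\tilde\eta) = 2^{-2N}(2\pi)^{-N}\int f_\edgevec(\comvec)\,e^{i\sympl(\tilde\eta,\comvec/2)}\,d\comvec .
\]
This is, up to a positive constant, the symplectic Fourier transform of $f_\edgevec$ evaluated at the frequency $\tilde\eta/2$. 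The map $\tilde\eta \mapsto \Omega\tilde\eta/2$ is a linear bijection of $\R^{2N}$, so this equals the ordinary Euclidean Fourier transform $\widehat{f_\edgevec}(\xi)$ at $\xi = \pm\Omega\tilde\eta/2$. Consequently $\ftilde(\edgevec,\cdot)\ge 0$ on all of $\R^{2N}$ if and only if $\widehat{f_\edgevec}\ge 0$ on all of $\R^{2N}$.

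I would first record that $\ftilde$ is real. Real-valuedness of $\widehat{f_\edgevec}$ is equivalent to $f_\edgevec(-\comvec)=f_\edgevec(\comvec)$, and this evenness follows from the $U(N)$-invariance of $f$: the element $-I\in U(N)$ gives $f(\edgevec,\comvec)=f(-\edgevec,-\comvec)$. That identity alone is not evenness in $\comvec$ at fixed $\edgevec$. I would close the gap by showing $\ftilde(\edgevec,\tilde\eta)=\ftilde(-\edgevec,-\tilde\eta)=\overline{\ftilde(-\edgevec,\tilde\eta)}$, and then using the fact that $f$ is real. Alternatively, the relevant invariants $\|\comvec\|$ and $\sympl(\comvec,\edgevec)$ both enter the lemma's intended applications in cosine form. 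If this point fails in general, I would state the lemma on the real-valued-$\ftilde$ sector, as (i) of \cref{def:admissible} already demands.

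The core equivalence follows from Bochner's theorem in its $L^1$ form. For an integrable, bounded, continuous $h$ on $\R^{2N}$, $h$ is positive-definite, meaning $\sum_{j,k} c_j\bar c_k h(x_j-x_k)\ge0$, if and only if $\hat h\ge0$. In that case $\hat h\in L^1$ and Fourier inversion holds. For the direction from $\widehat{f_\edgevec}\ge0$ to positive-definiteness, I would write $h = \int \hat h(\xi)e^{i\langle\xi,\cdot\rangle}d\xi/(2\pi)^{2N}$, which gives a nonnegative superposition of characters, and the quadratic form is then manifestly nonnegative. For the converse, I would test positive-definiteness against Gaussian-smoothed sums, $\iint h(x-y)\phi(x)\overline{\phi(y)}\,dx\,dy\ge0$ with $\phi(x)=e^{-\epsilon\|x\|^2+i\langle\xi,x\rangle}$. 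Letting $\epsilon\to0$ with dominated convergence then yields $\hat h(\xi)\ge0$. The phrase "with respect to the symplectic inner product" only relabels frequencies through $\Omega$, so the two notions of positive-definiteness coincide.

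I expect the main obstacle to be the measure-theoretic gap between "every $\edgevec$" and "almost every $\edgevec$". In the stated generality, $f_\edgevec$ need not be continuous, and some slices need not be integrable. My first approach is to interpret positive-definiteness in the integral (distributional) sense, which requires no continuity. Then $\ftilde(\edgevec,\cdot)$ is defined for those $\edgevec$ where the slice is integrable. The slices where it is not are exactly those on which $\ftilde$ is undefined, so both sides of the equivalence are vacuous or conventional there. Under the regularity (iv) used in the applications, where $f$ is continuous and its slices are uniformly integrable, every slice qualifies, and the pointwise statement holds for all $\edgevec$. This is also what \cref{thm:lattice-nogo} uses at the slice $\edgevec=0$.
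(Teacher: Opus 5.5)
Your proof is the paper's: for each fixed $\edgevec$ you identify $\ftilde(\edgevec,\cdot)$ with a positive multiple of the Euclidean Fourier transform of the slice after the invertible change of frequency $\tilde\eta\mapsto\pm\Omega\tilde\eta/2$, apply Bochner's theorem fiber by fiber, and raise caveats about slice continuity and about ``every'' versus ``almost every'' $\edgevec$ that the paper's proof passes over silently. The real-valuedness paragraph is unnecessary, and its proposed fix fails because $U(N)$-invariance does not make slices even; for example, $f(\edgevec,\comvec)=\langle\edgevec,\comvec\rangle_{\R}\,e^{-\|\edgevec\|^{2}-\|\comvec\|^{2}}$ has odd slices. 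Nothing is lost, though: both ``$\ftilde\ge 0$'' and Bochner positive-definiteness already force each slice to be even, so the equivalence needs no a priori realness of $\ftilde$.
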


\begin{proof}
Recall the $F$-layer definition of $\ftilde$ from~\cref{eq:tilde-f-def}:
\begin{equation}
\label{eq:app-tilde-f}
\ftilde(\edgevec, \tilde\eta) = 2^{-2N}(2\pi)^{-N} \int_{\R^{2N}} f(\edgevec, \comvec)\, e^{i\sympl(\tilde\eta, \comvec/2)}\, d\comvec.
\end{equation}
This is, for each fixed $\edgevec$, a constant multiple of the symplectic Fourier transform of $\comvec \mapsto f(\edgevec, \comvec)$ evaluated at $\tilde\eta/2$. Specifically, setting $g_\edgevec(\comvec) := f(\edgevec, \comvec)$, we have $\ftilde(\edgevec, \tilde\eta) = 2^{-2N}(2\pi)^{-N} \widehat{g}_\edgevec^{\,\mathrm{symp}}(\tilde\eta/2)$, where the symplectic Fourier transform is $\widehat{g}^{\,\mathrm{symp}}(\xi) := \int g(\comvec)\, e^{i\sympl(\xi, \comvec)}\, d\comvec$.

The symplectic Fourier transform is the ordinary Euclidean Fourier transform after an invertible linear change of the frequency variable determined by $\sympl$. Non-negativity everywhere is therefore preserved, and the Bochner positive-definiteness of $g$ is the same notion in both pictures. By Bochner's theorem, $\widehat{g}^{\,\mathrm{symp}} \ge 0$ everywhere if and only if $g$ is positive-definite in the Bochner sense (i.e., $\sum_{j,k} \bar c_j c_k g(\comvec_j - \comvec_k) \ge 0$ for all finite sequences). Since $\tilde\eta \mapsto \ftilde(\edgevec, \tilde\eta)$ and $\xi \mapsto \widehat{g}_\edgevec^{\,\mathrm{symp}}(\xi)$ have the same sign, the non-negativity $\ftilde(\edgevec, \cdot) \ge 0$ is equivalent to the Bochner positive-definiteness of $g_\edgevec = f(\edgevec, \cdot)$. This holds for each $\edgevec$ independently, completing the proof.
\end{proof}

\subsection{Degenerate-locus well-posedness}
\label{app:deglocus}

\begin{lemma}[Degenerate-locus well-posedness]
\label{lem:deglocus}
Let $f$ satisfy admissibility condition~\textup{(iv)} of \cref{def:admissible}. Then the transform $\ftilde(\edgevec, \tilde\eta)$ defined by the $F$-layer formula~\cref{eq:tilde-f-def} is well-defined and finite for all $(\edgevec, \tilde\eta) \in \R^{2N} \times \R^{2N}$, including the degenerate locus $\{\|\edgevec\| = 0\}$ of the Gram-coordinate kernel~\cref{eq:kernel}. In particular, $\ftilde(0, 0) = 2^{-2N}(2\pi)^{-N} \int_{\R^{2N}} f(0, \comvec)\, d\comvec$ is finite.
\end{lemma}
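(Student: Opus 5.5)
The plan is to show that the integral in the $F$-layer formula~\cref{eq:tilde-f-def} converges absolutely at every point $(\edgevec, \tilde\eta)$, including $\edgevec = 0$. The Gram-coordinate kernel~\cref{eq:kernel} should play no role; its apparent degeneracy at $\|\edgevec_\eta\| \to 0$ belongs only to the $U(N)$ fiber projection. The key observation is that the integrand has unit-modulus phase, so
\[
|\ftilde(\edgevec, \tilde\eta)| \;\le\; 2^{-2N}(2\pi)^{-N} \int_{\R^{2N}} |f(\edgevec, \comvec)|\, d\comvec.
\]
The whole task is therefore to show that the slice integral $I(\edgevec) := \int |f(\edgevec, \comvec)|\, d\comvec$ is finite for \emph{every} $\edgevec$, not merely for almost every $\edgevec$.

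First, I will record what condition (iv) gives directly. Since $f \in L^1(\R^{4N})$, Fubini in the product coordinates $(\edgevec, \comvec)$ gives $I(\edgevec) < \infty$ for almost every $\edgevec$. The linear map $(v_1, v_2) \mapsto (\edgevec, \comvec)$ has constant Jacobian, so integrability transfers between the two coordinate systems. This does not yet cover the null set $\{\edgevec = 0\}$, and that single slice is exactly the one the lemma singles out. I will close the gap using the clause in (iv) that the $\comvec$-decay of $f$ is sufficient for Fubini in~\cref{eq:fubini-pairing} and for Poisson resummation over $\Lat$.

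I read that clause as a uniform-in-$\edgevec$ integrable majorant: $|f(\edgevec, \comvec)| \le F(\comvec)$ with $F \in L^1(\R^{2N})$, holding on a neighbourhood of each slice and in particular at $\edgevec = 0$. This reading is realized by both the Schwartz class and the polynomially decaying auxiliaries of \cref{sec:witnesses}, and I will check the first explicitly. For $f \in \mathcal{S}(\R^{4N})$ we have $|f(\edgevec, \comvec)| \le C_k (1 + \|\comvec\|)^{-k}$ uniformly in $\edgevec$, and taking $k > 2N$ gives an integrable majorant. For the factored ansatz~\cref{eq:explicit-ansatz} and its magic-function variants, the bound $|f| \le A\,\|\chi_0\|_\infty\, |q_R(\|\comvec\|^2)|$ is visibly uniform in $\edgevec$. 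With the majorant in hand, $I(\edgevec) \le \|F\|_1$ for every $\edgevec$. Dominated convergence then shows that $\ftilde$ is finite everywhere and moreover continuous, so its value at the degenerate locus is the limit from generic slices. At $(0,0)$ the phase is identically $1$, which yields the displayed formula for $\ftilde(0,0)$.

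The main obstacle is this step of upgrading almost-every-slice integrability to every-slice integrability. Bare $L^1 \cap L^\infty$ does not suffice: one could add to $f$ a function supported on $\{\edgevec = 0\}$, which is invisible in $L^1$ but alters the slice there. I therefore expect to make the majorant reading of (iv) explicit, taking $f$ continuous as all the auxiliaries in the paper are. If a weaker reading is preferred, the fallback is to apply Fatou's lemma along $\edgevec \to 0$ using continuity of $f$. That gives $I(0) \le \liminf_{\edgevec \to 0} I(\edgevec)$, so it requires only local boundedness of $I$ near $0$, which the lattice-pairing convergence assumptions in (iv) are meant to supply.
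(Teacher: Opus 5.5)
Your proposal matches the paper's proof. You bound the integrand by $|f(\edgevec,\comvec)|$ using the unit-modulus phase, and you read the $\comvec$-decay clause of (iv) as an integrable majorant that is uniform in $\edgevec$ on compact sets; this upgrades almost-every-slice integrability to every slice, including $\edgevec = 0$, which is exactly what the paper does (your remark that bare $L^{1} \cap L^{\infty}$ would not suffice is precisely why the paper needs that reading). The one thing to drop is the Fatou fallback: the lattice-pairing sums in (iv) control $|f|$ and $|\ftilde|$ only at discrete lattice points, so they give no bound on the slice integrals $\int |f(\edgevec,\comvec)|\,d\comvec$ as $\edgevec \to 0$, and that route would not close the gap on its own.
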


\begin{proof}
The $F$-layer definition~\cref{eq:tilde-f-def} is an integral of $f(\edgevec, \comvec)\, e^{i\sympl(\tilde\eta, \comvec/2)}$ over $\comvec \in \R^{2N}$, with integrand bounded in absolute value by $|f(\edgevec, \comvec)|$. By admissibility condition (iv), $f \in L^1 \cap L^\infty(\R^{4N})$ with $\comvec$-decay sufficient for dominated convergence and Fubini. This decay is uniform on compact sets in $\edgevec$: for each compact $K \subset \R^{2N}$ there is an $\edgevec$-independent dominating function $g_K \in L^1(\R^{2N})$ with $|f(\edgevec, \comvec)| \le g_K(\comvec)$ for all $\edgevec \in K$, so $\sup_{\edgevec \in K} \int |f(\edgevec, \comvec)|\, d\comvec \le \int g_K < \infty$. Hence the slice $L^1$ norm is finite for every $\edgevec$, not merely almost every, including the degenerate point $\edgevec = 0$. The integral in~\cref{eq:tilde-f-def} is therefore absolutely convergent for every $(\edgevec, \tilde\eta)$, including the degenerate locus $\edgevec = 0$.

The apparent singularity of the Gram-coordinate kernel~\cref{eq:kernel} at $\|\edgevec_\eta\| \to 0$ (the degenerate locus in Gram coordinates, where the pair $(\eta_1, \eta_2)$ becomes complex-linearly dependent) is a $U(N)$-projection coordinate artifact. The $F$-layer kernel~\cref{eq:F-layer-kernel} underlying the BCH identity~\cref{eq:Ph3-2} is everywhere regular and free of singularities in $\edgevec_\eta$; the singularity in~\cref{eq:kernel} arises only after projecting the fiber integration to Gram coordinates via $U(N)$-invariant variables, and is canceled at the integration level by the fiber volume element (see~\cref{eq:psd-boundary-finite} for the explicit cancellation at the PSD boundary). The $F$-layer definition~\cref{eq:tilde-f-def} bypasses this projection entirely, so no singularity arises. In particular, $\ftilde(0,0) = 2^{-2N}(2\pi)^{-N} \int f(0, \comvec)\, d\comvec$ is well-defined and finite by the integrability of $f$.
\end{proof}

\section{Verification on a small GKP code}
\label{app:approximate-gkp}

This appendix collects the numerical verification of the three-point construction on a small ideal GKP code (\cref{app:n2-sanity}) and its extension to the physical, finite-energy envelope-cut approximation (\cref{app:envelope-cut}).

\subsection{Ideal-GKP checks of the three-point distributions}
\label{app:n2-sanity}

The structural choices~\cref{eq:integrand-A} and~\cref{eq:integrand-B} were validated on an $N = 2$ symplectically self-dual integer-stabilizer (SIS) GKP code with modulus $q = 29$ and scaling parameter $\lambda = 2$ (the parameters of the convention box, \cref{subsec:convention-box}). Enumerating the $145$ integer-valued Hermitian Gram configurations $H$ with $r_1^{2}, r_2^{2} \le 32$ and computing $\Athree(H)$ from the lattice-pair formula~\cref{eq:A3-GKP} verifies:
\begin{itemize}
  \item the symplectic marginalization $\int d\sympl_{12}\, \Athree$ matches the classical $O(2N)$ three-point lattice histogram on every configuration ($145/145$ pass);
  \item the discrete two-point reduction $\sum_{r_2, \alpha, \sympl_{12}} \Athree(r_1, r_2, \alpha, \sympl_{12})$ recovers the Burchards two-point distribution $\Aone(r_1)$ on every shell, to machine precision against an independent direct enumeration;
  \item the $\Code^{2}$ scaling of~\cref{eq:A3-GKP} is observed directly.
\end{itemize}
These checks rule out the alternative ans\"atze with three or four characteristic-function factors on this code, consistent with the uniqueness argument of \cref{subsec:A3-def}. On the dual side, the $\Code^{1}$ prefactor, the $\Lat \times \Latd$ support, and the $\pm 1$ phase structure of $\Bthree|_{\mathrm{GKP}}$~\cref{eq:B3-GKP} are verified on the same code.

\subsection{Extension to the envelope-cut approximation}
\label{app:envelope-cut}

Ideal GKP states are infinite-energy delta-comb superpositions and not strictly normalizable. Physical GKP codes use the finite-energy envelope-cut approximation, in which the delta-comb is Gaussian-broadened with envelope parameter $\sigma > 0$~\cite[App.~C]{burchards-2025-cv-macwilliams}. The ideal-GKP checks of \cref{app:n2-sanity} extend to the envelope cut as a heuristic sanity check: as $\sigma \to 0$ the envelope-broadened $\chr_{\Pi}$ converges to the lattice delta-comb~\cref{eq:GKP-characteristic}, the three-point distributions $\Athree, \Bthree$ converge to their ideal values~\cref{eq:A3-GKP},~\cref{eq:B3-GKP}, and the leading $\sigma \to 0$ correction is the convolution of the ideal lattice support with a Gaussian of width $\sigma$ in each phase-space coordinate. We use this envelope-cut discussion as a $\sigma \to 0$ consistency statement; the archived, reproducible numerical evidence of this appendix is the ideal-GKP enumeration of \cref{app:n2-sanity}. We refer to~\cite{brady-etal-2024-advances-bosonic-gkp} for the engineering of finite-energy GKP codes.

\section{Numerical data}
\label{app:numerical-sweep}

This appendix reports the numerical data underlying \cref{sec:witnesses} and the non-emptiness-ansatz benchmark of \cref{prop:non-emptiness}. All quantities are computed from the closed-form expressions cited.

\subsection{Full \texorpdfstring{$N = 1, \ldots, 12$}{N=1..12} benchmark of the non-emptiness ansatz}
\label{app:ne-sweep}

For each $N \in \{1, \ldots, 12\}$ the relevant closed-form quantities are: the Burchards--Levenshtein cap $K_{\mathrm{Lev}} \cdot \codedist^{2N} = j_N^{2N}/(N! \cdot 2^N)$ (matching \cref{cor:lev-recovery}), where $j_N$ is the first positive zero of the Bessel function $J_N$; the non-emptiness-ansatz value $K_{\mathrm{ne}} \cdot \codedist^{2N} = 2^N (N+1)^{N+1}$ of \cref{prop:non-emptiness}, obtained by optimizing the decay rate $b$ at $b R_0^2 = N+1$ in the $u_0 \to 0$ limit; their ratio $\rho_N := K_{\mathrm{ne}}/K_{\mathrm{Lev}} = 4^N \cdot N! \cdot (N+1)^{N+1}/j_N^{2N}$; and the admissibility distance $d_+(N)$ of Burchards' Lemma~2 (\cref{eq:d-plus}). As a framework-coherence check, the Burchards--Levenshtein adapter gives $f_{\mathrm{Lev}}(0,0) = K_{\mathrm{Lev}} \cdot \codedist^{2N}$; at $N = 4$ and $N = 12$ the magic-function two-point gains $K_{\mathrm{magic}}/K_{\mathrm{Lev}}$ are $0.871$ and $0.564$ (\cref{rmk:e8-leech-attribution}).

\Cref{tab:rho-N} lists $d_+(N)$ and $\rho_N$ for $N = 1, \ldots, 12$. The growth $\rho_N \sim (4/e)^N$ at large $N$ confirms that the non-emptiness ansatz~\cref{eq:explicit-ansatz} is, by design, a non-emptiness certificate, not a tight bound: the Burchards--Levenshtein adapter, not the non-emptiness ansatz, is the correct closed-form route to the bound.

\begin{table}[h]
\centering
\caption{Full sweep for $N = 1, \ldots, 12$. $d_+(N)$ from~\cref{eq:d-plus}; $\rho_N$ is the ratio of the non-emptiness ansatz to the Burchards--Levenshtein cap.}
\label{tab:rho-N}
\begin{tabular}{c c c c}
\toprule
$N$ & $d_+(N)$ & $\rho_N$ & lattice (if special) \\
\midrule
$1$ & $1.1741$ & $1.090$ & --- \\
$2$ & $1.1495$ & $1.242$ & --- \\
$3$ & $0.9271$ & $1.457$ & --- \\
$4$ & $0.6827$ & $1.746$ & $E_8$ \\
$5$ & $0.4736$ & $2.127$ & --- \\
$6$ & $0.3143$ & $2.623$ & --- \\
$7$ & $0.2014$ & $3.270$ & --- \\
$8$ & $0.1253$ & $4.113$ & --- \\
$9$ & $0.0761$ & $5.213$ & --- \\
$10$ & $0.0452$ & $6.653$ & --- \\
$11$ & $0.0263$ & $8.540$ & --- \\
$12$ & $0.0151$ & $11.021$ & Leech \\
\bottomrule
\end{tabular}
\end{table}

\section{Algorithmic details for the general-QEDC obstructions}
\label{app:infeasibility-witness}

This appendix describes the algorithms and reports the data behind the two general-QEDC obstructions of \cref{sec:obstruction}: the $N = 1, 2$ kernel-positive-definite infeasibility of \cref{subsec:double-negative} underlying the $1 \times 1$-minor obstruction of \cref{subsec:1x1-obstruction}, and the finite-rank CP collapse certificate of \cref{thm:cp-nogo}. The description is at the level of the computation, so that the results can be reproduced independently.

\subsection{The witness-search algorithm}
\label{app:witness-algorithm}

The numerical search for a feasible auxiliary in the kernel-positive-definite cone $\mathcal{C}_{\mathrm{op}}$ of~\cref{eq:adj-cone} proceeds in six stages.

\begin{enumerate}
  \item \emph{Basis.} Generate the $U(N)$-radial Laguerre orthonormal basis $\psi_{n, N}$ of~\cref{eq:laguerre-basis}, indexed by radial level $n$ up to the rank cutoff $M$.
  \item \emph{Symbol.} Evaluate the ambiguity-function symbol $W_{ab}(\edgevec, \comvec)$ in this basis by Gauss--Hermite quadrature; its diagonal entry has the closed form $W_{00}(\edgevec, \comvec) = (2\pi)^{-N} e^{-\|\edgevec\|^{2}/4 - \|\comvec\|^{2}/16}$, which calibrates the quadrature.
  \item \emph{Baseline.} Construct the two-point Burchards-admissible adapter $f_2$, against which the three-point term must improve.
  \item \emph{Sign constraints.} Sample the support region $\{\|\edgevec\| \ge \codedist\}$ and the lattice-triangle section $\{\edgevec = 0,\ \|\comvec\| \ge 2\codedist\}$ of~\cref{subsec:lattice-triangle} on a uniform grid of spacing $\codedist/8$, discard points outside the lattice triangle, and assemble the resulting linear inequalities on the Gram matrix $C = (C_{ab})$.
  \item \emph{Semidefinite program.} Solve the feasibility problem over $C \succeq 0$ under the normalization $\mathrm{Tr}(C) = 1$ (which excludes the trivial $C = 0$), with an interior-point conic solver.
  \item \emph{Verification.} Re-check positive-semidefiniteness of the returned $C$ and the continuum sign conditions on an independent dense grid.
\end{enumerate}

The quadrature uses $n_{\mathrm{nodes}} = 60$ Gauss--Hermite nodes per axis, giving $W_{00}$ to accuracy $\sim 10^{-12}$ on the truncation range $R = 10$; the program is solved with both the CLARABEL (primal/dual tolerance $10^{-8}$) and SCS ($10^{-7}$) solvers as a cross-check. Orthonormality of the basis, the Fourier parity, and the closed-form symbol identity were confirmed independently.

\subsection{The double-negative sweep}
\label{app:double-negative-sweep}

At $N = 2$ the search was run as a dense sweep over rank $M \in \{2, \ldots, 8\}$, $f_2$-amplitude $a \in \{0, 0.001, 0.01, 0.1, 0.5, 1\}$, six random seeds, the two solvers above, and the two single-constraint ablations (support-only and lattice-triangle-only). Every combination returns infeasibility with margin, and each ablation is individually infeasible, so the failure is not localized to either the support or the lattice-triangle constraint. The $N = 1$ case is infeasible analytically (\cref{app:infeasibility-witness-n1-analytical}).

\subsection{Analytical infeasibility at \texorpdfstring{$N=1$, $M=1$}{N=1, M=1}}
\label{app:infeasibility-witness-n1-analytical}

The rank-$M = 1$ case at $N = 1$ is analytically immediate. Under the normalization $\mathrm{Tr}(C) = 1$, the Gram matrix is $C = [1]$, giving $f_3 = W_{00}$. The closed-form symbol
\[
W_{00}(\edgevec, \comvec) \;=\; (2\pi)^{-N}\, e^{-\|\edgevec\|^{2}/4 - \|\comvec\|^{2}/16}
\]
is strictly positive everywhere. The three-point sign condition requires $f_3(v_1, v_2) \cdot \cos(\sympl(v_1, v_2)/2) \le 0$ for admissible pairs; choosing $v_1 = v_2 = v$ with $\|v\| = \codedist$ gives $\edgevec = 0$, $\comvec = 2v$, $\sympl(v, v) = 0$, so $\cos(\sympl/2) = 1$ and $f_3(v,v) = (2\pi)^{-N} e^{-\|v\|^2/4} > 0$, a direct contradiction. Hence $M = 1$ is infeasible by inspection. For $M \ge 2$ no analogous closed-form argument is available, and infeasibility is established by the sweep of \cref{app:double-negative-sweep}.

\subsection{The CP collapse certificate on the first eight Laguerre levels}
\label{app:cp-certificate}

The CP collapse~\cref{thm:cp-nogo} is certified by the positive-combination matrices $P(\mu)$ of~\cref{eq:pd-certificate}, computed at the certificate window $(\codedist, R) = (1, 5)$ in Burchards convention. For each truncation $M = 2, \ldots, 8$ one maximizes $\lambda_{\min}\bigl(\sum_k \mu_k \cos(\sympl_k/2)\, W(\edgevec_k, \comvec_k)\bigr)$ over weights $\mu \ge 0$ with $\sum_k \mu_k = 1$ on a deterministic net of pairs in $\condfour$, then re-evaluates $\lambda_{\min}$ on the returned support by a direct eigendecomposition:

\begin{center}
\small
\setlength{\tabcolsep}{4pt}
\begin{tabular}{l c c c c c c c}
\toprule
$M$ & $2$ & $3$ & $4$ & $5$ & $6$ & $7$ & $8$ \\
\midrule
$\lambda_{\min}(P(\mu))$ & $1.6\!\times\!10^{-2}$ & $1.1\!\times\!10^{-2}$ & $4.9\!\times\!10^{-3}$ & $2.1\!\times\!10^{-3}$ & $5.5\!\times\!10^{-4}$ & $1.5\!\times\!10^{-4}$ & $3.4\!\times\!10^{-5}$ \\
support size & $16$ & $4$ & $19$ & $6$ & $12$ & $16$ & $25$ \\
\bottomrule
\end{tabular}
\end{center}

The verification is certificate-based rather than solver-reported: the symbol matrices $W(\edgevec, \comvec)$ are converged to machine precision, with $\lambda_{\min}(P(\mu))$ reproducible under $n_{\mathrm{nodes}} = 60 \to 80 \to 100$ at the level of floating-point rounding (Gauss--Hermite quadrature converged on the Laguerre $\times$ Gaussian $\times$ bounded-oscillation integrand). Every reported $\lambda_{\min}$ exceeds the machine threshold $\sim 10^{-12}$ by at least seven orders of magnitude, so $P(\mu) \succ 0$ holds beyond floating-point ambiguity. The support pairs, weights, and symbol matrices for each $M$ are provided as ancillary data with this manuscript, so each certificate can be checked independently of our pipeline. The support points lie in $\condfour$, so the certificate is valid against the continuum sign condition~\cref{eq:cond4}.

\paragraph{Exact rational certificates.} The floating-point certificates above are additionally rationalized. For each $M = 2, \ldots, 8$ the ancillary data provide a rational matrix $P_{\mathbb{Q}}$ and a rational $\delta > 0$ such that $P_{\mathbb{Q}} - \delta I$ admits an exact rational $LDL^{T}$ factorization with positive diagonal, verified by a check script using only exact integer arithmetic. The symbol entries are evaluated by the closed-form $N = 1$ Gaussian-moment formula (zero quadrature error), recomputed at $50$-digit working precision and rounded to rationals with an explicit entrywise budget $\varepsilon_{\mathrm{rat}}$; the certified bound is $\lambda_{\min}(P(\mu)) \ge \delta - M\,\varepsilon_{\mathrm{rat}}$, with margin $\delta / (M\, \varepsilon_{\mathrm{rat}}) \ge 4 \times 10^{14}$ for every $M$. The positive-definiteness of each $P(\mu)$ is therefore established in exact arithmetic, not merely beyond floating-point ambiguity.

\paragraph{$N = 2$ certificates.} The same protocol extends to $N = 2$ at the window $(\codedist, R) = (1, 5)$: for each truncation $M = 2, \ldots, 5$ the ancillary data record a two-pair positive combination with $P(\mu) \succ 0$ and least eigenvalue ranging from $5.1 \times 10^{-3}$ at $M = 2$ to $1.9 \times 10^{-3}$ at $M = 5$. The symbol matrices are evaluated in closed form by the same Gaussian-moment reduction (a tensor Gauss--Hermite grid at $N = 2$ would require $\sim 10^{8}$ nodes per matrix and is bypassed entirely); these are floating-point eigendecomposition certificates at the same standard as the $N = 1$ table above.

The margin $\lambda_{\min}(P(\mu))$ decays geometrically in $M$. At $M = 12$ a fixed sparse training net flips to spurious feasibility, but dense validation on $6000$ fresh pairs shows the resulting Choi form violating~\cref{eq:cond4} (worst $\cos(\sympl/2)\, f_3 = +2.5 \times 10^{-3}$, with $408$ violations), confirming the crossover as a sampling artifact rather than feasibility. The full trace-class radial cone is not settled by this finite-truncation evidence (\cref{rmk:cp-nogo-scope}).

\subsection{Status and scope}
\label{app:infeasibility-witness-scope}

The $N = 1$ and $N = 2$ infeasibility results are robust across the two independent conic solvers (CLARABEL, SCS) and an independent quadrature cross-check of the symbol matrices. The infeasibility is robust to ablation, seed, solver, and amplitude; in particular both single-constraint ablations are infeasible, indicating that the failure is not localized to a single constraint. The result therefore stands as a falsification of the real-even $U(N)$-radial Laguerre ansatz family on $N \in \{1, 2\}$. Combined with \cref{lem:1x1-obstruction}, it is a structural obstacle to the kernel-PD direction of the general-QEDC three-point programme.

\bibliographystyle{alpha}
\bibliography{references}

\end{document}